\documentclass[12pt]{article}
\pdfoutput=1
\def\citep#1{\cite{#1}}

\newif\ifpreprint
\preprinttrue

\usepackage{fullpage}


\usepackage[utf8]{inputenc} 
\usepackage[T1]{fontenc}    

\usepackage{amsfonts}       
\usepackage{amssymb}
\usepackage{amsmath}
\usepackage{euscript}
\usepackage{stmaryrd} 
\usepackage{color}

\usepackage[colors]{optsys}

\newcommand{\GenericSeparation}[1]%
        {\mathrel{\raisebox{-0.2em}{$\substack{\text{\small $\Vert$} \\[-1.7mm] \line(1,0){10}\\ #1}$}}}

\newcommand{\ConditionalTopologicalSeparation}{\GenericSeparation{t}}

\newcommand{\Tsep}{\GenericSeparation{t}}

\newcommand{\LastElement}[1]{{#1}^\star}
\newcommand{\FirstElements}[1]{{#1}^-}

\renewcommand{\AGENT}{{\mathbb A}}
\renewcommand{\Agent}{{A}}

\newcommand{\cut}{\psi}
\newcommand{\ORDER}{\Sigma}
\newcommand{\ordering}{\varphi}
\newcommand{\totalordering}{\rho}
\newcommand{\numAGENT}{| \AGENT |}

\newcommand{\AgentSubsetW}{W}
\newcommand{\AgentSubsetY}{Y}
\newcommand{\AgentSubsetZ}{Z}
\newcommand{\HistorySubset}{H}

\newcommand{\ParentRelation}{\mathcal{P}} 

\newcommand{\SubWH}{{\AgentSubsetW,\!\HistorySubset}}
\newcommand{\Precedence}{\ParentRelation}
\newcommand{\PrecedenceWH}{\ParentRelation_{\AgentSubsetW,\HistorySubset}}

\newcommand{\PrecedenceEmptyH}{\ParentRelation_{\emptyset,\HistorySubset}}
\newcommand{\PrecedenceEmptyHISTORY}{\ParentRelation_{\emptyset,\HISTORY}}

\usepackage{booktabs}
\usepackage{graphicx}
\usepackage{subcaption}
\usepackage{csquotes}
\usepackage{cancel}

\newcommand{\RandomVariable}{\va{\Control}}

\newcommand{\bm}[1]{#1}
\renewcommand{\textbf}{\emph}

\newcommand{\residual}{E} 

\newcommand{\partialReducedSolutionMap}{\widetilde{\ReducedSolutionMap}}


\newcommand{\ClosedTopologyWH}{{\cal F}^{\text{\tiny $\SubWH$}}}
\newcommand{\TopologicalClosure}[1]{\overline{#1}^{\text{\tiny $\SubWH$}}}

\newcommand{\ConditionalAncestor}{\TransitiveReflexiveClosure{\PrecedenceWH}}

\newcommand{\PROBA}{\PP}
\renewcommand{\range}[1]{\|#1\|}

\usepackage{tikz}
\usetikzlibrary{shapes.geometric,calc,patterns}
\usetikzlibrary{bayesnet}
\usetikzlibrary{positioning}

\usepackage{amsmath}
\graphicspath{{./}}

\title{Causal Inference Theory with\\Information Dependency Models}

\author{Benjamin Heymann\footnote{Criteo AI Lab, Paris, France}, 
  Michel De Lara\footnote{CERMICS, Ecole des Ponts, Marne-la-Vall\'ee, France},
  Jean-Philippe Chancelier\footnotemark[2]}

\date{\today}

\begin{document}

\maketitle

\begin{abstract}
  Inferring the potential consequences  of an unobserved event is a fundamental scientific
  question. To this end,  Pearl's celebrated do-calculus provides a set of
  inference rules to derive an interventional probability from an observational
  one. 
  In this framework, the primitive causal relations are encoded as
  functional dependencies in a Structural Causal Model (SCM),
 which are generally mapped into a
  Directed Acyclic Graph (DAG) in the absence of cycles.
  In this paper, by contrast, we capture causality without reference to  graphs or
  functional dependencies, but with information fields and
  Witsenhausen's intrinsic model.
  The three rules of do-calculus reduce to a
  unique sufficient condition for conditional independence, the
  topological separation, which presents interesting theoretical and practical
  advantages over the d-separation.  
  With this unique rule, we can deal 
  with systems that cannot be
  represented with DAGs, for instance systems with cycles and/or
  `spurious' edges.
  We treat an example that cannot be handled --- to
  the extent of our knowledge --- with the tools of the current
  literature.
  We also explain why, in the presence of cycles, the theory of causal inference 
  might require different tools, depending on  whether the random
  variables are  discrete or  continuous. 
\end{abstract}

\section{Introduction}

As the world shifts toward more and more data-driven decision-making, causal
inference is taking more space in applied sciences, statistics and machine
learning.  This is because it allows for better, more robust decision-making,
and provides a way to interpret the data that goes beyond correlation
\citep{pearl2018book}.  For instance, causal inference provides a language to
describe and solve Simpson's paradox, which embodies the ``correlation is not
causation'' principle as can be found in any ``Statistics~101'' basic course.
The main concern in causal inference is to compute post-intervention probability
distributions from observational data.  For this purpose, graphical models are
practical because they allow representing assumptions easily and benefit from an
extensive scientific literature.

In his seminal work~\citep{pearl1995causal}, Pearl builds on graphical
models~\citep{cowell2006probabilistic} to introduce the so-called do-calculus.
Causal graphical models move the focus from joint probability distributions to
functional dependencies thanks to the Structural Causal Model (SCM) framework.
Several extensions to this do-calculus have been proposed
recently~\citep{winn2012causality,lattimore2019replacing,tikka2019identifying,correa2020a}.
Pearl's seminal paper supposes a Directed Acyclic Graph (DAG) structure.

In this paper, we bring a new, complementary view to the causal reasoning
toolbox by leveraging the concept of information fields  and
  Witsenhausen's intrinsic model.
The framework  we introduce is general,  unifying, and may be used to study  causal inference
in both recursive and nonrecursive systems
\citep{Halpern:2000} (i.e. with and without cycles). It allows for
spurious edges, and simplifies the statement of Pearl's three rules of do-calculus.

DAGs modeling does not rely directly on random variables but on joint
probability distributions (see \cite[footnote~3]{Pearl2011} or
\cite[Appendix~A]{peters2017elements}). By contrast, our approach requires going
back to the classical primitives of probabilistic models: sample sets,
$\sigma$-fields, measurable maps and random variables.  We exploit the generally
overlooked expressiveness of this underlying structure.  The cost for this
conceptual generalization is a bit of abstraction: in what we propose, the
structure is implicit, and there are no arrows.

This paper, however, has been written so that the main messages 
can be understood with the usual graphical concepts used in the field of causal
inference: the notion of topological separation is
explained for the specific case of DAGs;
Theorem~\ref{th:do-calculus3rules} and
Examples~\ref{ex:easy}, \ref{ex:non-rec-easy} and
\ref{ex:w} should be readable without the concept of information
field.
In addition, this paper was written in parallel to two other
papers~\cite{Chancelier-De-Lara-Heymann-2021,De-Lara-Chancelier-Heymann-2021};
the three of them aim at providing another perspective on conditional independence and do-calculus.

\paragraph{Related work and contributions.}

We extend the causal modeling toolbox thanks to two notions: information fields
and topological separation.
These two  notions rely on the foundational work produced by
Witsenhausen in the seventies~\cite{Witsenhausen:1971a}.   
The concept of information field extends the
expressiveness of the Structural Causal Model, and allows for instance to
naturally encode context specific independence~\cite{tikka2019identifying}.  In
the companion papers~\cite{Chancelier-De-Lara-Heymann-2021,De-Lara-Chancelier-Heymann-2021}, we show an
equivalence between Pearl's d-separation and a new notion that we introduce,
the conditional topological separation.  The topological separation is practical
because it just requires to check that two sets are disjoints (see
Examples~\ref{fig:kuh}). By contrast, the d-separation requires to check that
\emph{all} the paths that connect two variables are blocked.  Moreover, as its
name suggests, the topological separation has a theoretical
interpretation. Specifically, the topological separation allows us to go beyond
DAGs and even graphical models.

Our main results are (i) Theorem~\ref{th:do-calculus3rules}, which is a
generalization of do-calculus that can be applied in particular to nonrecursive
systems~\cite{bongers2020foundations} and which subsumes several recent results,
and (ii) Lemma~\ref{th:decomposition} which provides insight into the
machinery behind Theorem~\ref{th:do-calculus3rules}.  We pinpoint the novelty of
our approach with Example~\ref{ex:w}, a system with cycles where our framework
identifies a probabilistic independence that the framework developed in
~\cite{pmlr-v115-forre20a} (for cycles) does not. We explain in
Sect.~\ref{subseq:discrete-continuous} that the differences between the
framework developped in~\cite{pmlr-v115-forre20a} and ours comes from a
fundamental difference whether in the discrete or in the continuous
  setting regarding random variables.
  \medskip

  The paper is organized in two parts as follows.
First, we provide what we think will be of interest for application minded
researchers in Sect.~\ref{Definition_of_Information_Dependency_Models}
and~\ref{sec-Main-results}.
Sect.~\ref{Definition_of_Information_Dependency_Models} introduces the notion of
Information Dependency Model, which is another way of looking at
systems that can be represented with SCMs.
We then presents our main results in Sect.~\ref{sec-Main-results}:
we restate Pearl's do-calculus theorem in terms of topological separation.
Second, we present the theoretical foundation of those results in
Sect.~\ref{Presentation_of_Witsenhausen_product_model_and_solvability}
and~\ref{sec:formal-proof}.
Sect.~\ref{Presentation_of_Witsenhausen_product_model_and_solvability}
present Witsenhausen's intrinsic model upon which we build our contributions.
We provide the proofs in Sect.~\ref{sec:formal-proof}.

\section{Definition of Information Dependency Models}
\label{Definition_of_Information_Dependency_Models}

In~\S\ref{sec:information_fields}, we provide background on
$\sigma$-fields and introduction the  Information Dependency Models.
Then, in~\S\ref{sec:conditional}, we define conditional precedence.

\subsection{Information fields and Information Dependency Models}
\label{sec:information_fields}

We start with a few reminders from measure (and probability) theory.  A
\emph{$\sigma$-field} (henceforth sometimes referred to as \emph{field}) over a
set~$\SET$ is a subset $\tribu{\Set} \subset 2^\SET$, containing~$\SET$, and
which is stable under complementation and under countable union. The couple
\( \np{\SET,\tribu{\Set}} \) is called a \emph{measurable space}. The trivial
$\sigma$-field over the set~$\SET$ is \( \{ \emptyset, \SET \} \). The complete
$\sigma$-field over the set~$\SET$ is \( 2^\SET \). When
\( \tribu{\Set}' \subset \tribu{\Set} \) are two $\sigma$-fields over the
set~$\SET$, we say that $\tribu{\Set}'$ is a \emph{subfield} of~$\tribu{\Set}$.
If  \( \tribu{\Set} \) is a $\sigma$-field over the
set~$\SET$ and if $\SET' \subset \SET$, then \( \tribu{\Set} \cap \SET' =
\nset{ D \cap \SET' }{ D \in \tribu{\Set} } \) is a $\sigma$-field over the
set~$\SET'$, called the \emph{trace subfield} of~$\tribu{\Set}$ over~$\SET'$.
If \( \np{\SET_i,\tribu{\Set}_i} \), $i=1,2$ are two measurable spaces, we
denote by $\tribu{\Set}_1\otimes\tribu{\Set}_2 $ the \emph{product
  $\sigma$-field} on $\SET_1\times\SET_2$ generated by the rectangles
$\nset{D_1\times D_2}{D_i\in \tribu{\Set}_i, i=1,2}$.  More generally, if
\( \sequence{\np{\SET_\scenario,\tribu{\Set}_\scenario}}{\scenario\in\SCENARIO}
\) is a family of measurable spaces, we denote by
\( \bigotimes_{\scenario\in\SCENARIO}\tribu{\Set}_\scenario \) the \emph{product
  $\sigma$-field} on $\prod_{\scenario\in\SCENARIO}\SET_\scenario $ generated by
the cylinders. 
Let $(\Omega, \tribu{\NatureField})$ and $(\CONTROL,\tribu{\Control})$ be two measurable spaces, probability theory defines a
\emph{random variable} as a measurable mapping from $(\Omega,\tribu{\NatureField})$ to
$(\CONTROL,\tribu{\Control})$, that is, a mapping \(\lambda:\Omega\to \CONTROL\) satisfying \(
  \wstrategy^{-1} (\tribu{\Control})\subset \tribu{\NatureField}\). When equipped with a probability, 
$\PROBA$, a measurable space $(\Omega,\tribu{\NatureField})$ is called a probability space and is denoted by 
$(\Omega,\tribu{\NatureField}, \PROBA)$.

\subsubsection{Structural Causal Models (informal definition)}
\label{Structural_Causal_Models_(informal_definition)}

Thus equipped, we now discuss the standard way to model causal hypotheses using
\emph{Structural Causal Models (SCMs})~\cite{peters2017elements}.

Let $\AGENT$ be a set and, for each $\agent \in \AGENT$, a given probability space
$(\Omega_{\agent}, \tribu{\NatureField}_\agent,\PROBA_{a})$. We consider the
product probability space
$(\Omega,  \tribu{\NatureField}, \PROBA)$ where   
\( \Omega = \prod_{\agent\in\AGENT}\Omega_{\agent} \),
$\tribu{\NatureField}=\bigotimes_{\agent \in \AGENT} \tribu{\NatureField}_\agent$, and
$\PROBA = \bigotimes_{\agent \in \AGENT} \PROBA_{\agent}$.
Let \(
\sequence{\np{\CONTROL_{\agent},\tribu{\Control}_{\agent}}}{\agent\in\AGENT} \)
be a family of measurable spaces.

An SCM consists of a family $(\policy_\agent)_{\agent\in\AGENT}$ of mappings
(or \emph{assignments}),
where each~\( \policy_\agent \) has codomain~\( \CONTROL_{\agent} \), 
alongside with a parental mapping  $P:\AGENT\to 2^\AGENT$, 
and of a family of random variables $\nseqp{\RandomVariable_\agent}{\agent \in \AGENT}$,
all defined on the probability space $(\Omega,  \tribu{\NatureField}, \PROBA)$
an such that each~\( \RandomVariable_\agent \) has codomain~\( \CONTROL_{\agent} \), 
with the property that
\begin{equation}
  \label{eq:SCM}
  \RandomVariable_\agent\np{\omega} =
  \policy_\agent\bp{\omega_\agent,\RandomVariable_{P(\agent)}\np{\omega}}
  \eqsepv \forall \omega\in\Omega
  \eqsepv \forall\agent\in\AGENT
  \eqfinv 
\end{equation}
where $\omega_\agent$ is the projection of~$\omega$ on~$\Omega_\agent$.

To get the \textbf{graphical representation} of a SCM --- as a subgraph of the
graph~$(\AGENT,\AGENT\times\AGENT)$ --- we draw an arrow $\agent\to\bgent$
whenever $\agent\in P(\bgent)$. Usually, the graphical representation is assumed
to be a DAG, which means that the parental mapping induces a partial order on
the set $\AGENT$. We will not need this assumption here.  Sufficient
  condition to obtain causal properties, relying only on the 
  graphical representation (which is uniquely defined by the parental mapping~$P$),
  have been developed by many authors. These
  conditions take their importance from the fact that they
  short-circuit reasoning on the assignement mappings.
 For a given applied problem, the SCM is derived
from expert knowledge, assumptions and data analysis methods.  The SCM is a
central tool in causal analysis but its graphical representation does not
naturally account for situations such as Context Specific
Independence (see~\cite{tikka2019identifying}, and
Example~\ref{example:tikka}), where some edges are spurious.

\subsubsection{Information Dependency Models (first informal definition)}

From Equation~\eqref{eq:SCM}, the set of  arguments of the assignement
mapping~$\policy_\agent$ depends on $\agent$ in the formalism of the SCM, 
(remember that $\policy_\agent$ is the assigment function of
$\RandomVariable_\agent$ for some $\agent\in\AGENT$).
By contrast, in the Information Dependency Model formulation, the assignement mappings
have a common domain, that we call the \emph{configuration space},
which is the product space\footnote{Also called \emph{hybrid
    space}~\cite{Witsenhausen:1971a}, hence the $\HISTORY$ notation.}
\( \HISTORY = \produit{\Omega}{ \prod \limits_{\agent \in \AGENT}
  \CONTROL_{\agent}} \).  The \emph{configuration field}
\( \tribu{\History} =\oproduit{\tribu{\NatureField}}{\bigotimes \limits_{\agent \in \AGENT}
  \tribu{\Control}_{\agent}} \) is a $\sigma$-field
over~\( \HISTORY \).
We then extend the definition of SCM thanks to the following observation: we can
express the SCM in~\S\ref{Structural_Causal_Models_(informal_definition)}
by saying that $\policy_\agent$ is a map from $\HISTORY$ to $\CONTROL_{\agent}$
(\( \wstrategy_{\agent} : \np{\HISTORY,\tribu{\History}} \to
\np{\CONTROL_{\agent},\tribu{\Control}_{\agent}}\)) while imposing that
$\policy_\agent$ ``only depends on $\RandomVariable_{P(\agent)}$ and
$\omega_\agent$''. It is standard (see \cite[Chap. 1
p. 18]{Dellacherie-Meyer:1975}) in probability theory that such property is ---
under mild assumptions --- equivalent to a so-called \emph{measurability
  constraint} on the random variable $\RandomVariable_{\agent}$.  Hence, the
informal definition~\eqref{eq:SCM} of a SCM can be restated as
\begin{equation}
  \label{eq:scmReformulation}
  \wstrategy_{\agent}^{-1} (\tribu{\Control}_{\agent})
  \subset \oproduit{ \tribu{\NatureField}_\agent\otimes
    \bigotimes \limits_{\bgent \neq\agent}    \{\emptyset,\Omega_\bgent\}  }
  { \bigotimes \limits_{\bgent \in
      P(\agent)}\tribu{\Control}_{\bgent}\otimes     \bigotimes \limits_{\bgent \not\in
      P(\agent)} \{\emptyset,\CONTROL_{\bgent}\} }
  \eqfinv 
\end{equation}
or, with a slight abuse of notations that we will sometimes use throughout this
presentation\footnote{We omit the trivial fields in the product on the
  right-hand side of Equation~\eqref{eq:scmReformulation2}.}
\begin{equation}
  \label{eq:scmReformulation2}
  \wstrategy_{\agent}^{-1} (\tribu{\Control}_{\agent})
  \subset \oproduct{\tribu{\NatureField}_\agent}%
  { \bigotimes_{\bgent \in P(\agent)}\tribu{\Control}_{\bgent}}
  \eqfinp 
\end{equation}
Informally, an information field is anything one may want to see on the
right-hand side of Equation~\eqref{eq:scmReformulation2}.
For instance, consider the case where \( \AGENT=\na{\agent,\bgent,\cgent} \)
and suppose that all fields contain the singletons.
If \( \wstrategy_{\agent}^{-1} (\tribu{\Control}_{\agent})
\subset \oproduit{ \tribu{\NatureField}_\agent \otimes 
  \{\emptyset,\Omega_\bgent\} \otimes \{\emptyset,\Omega_\cgent\} }
{ \{\emptyset,\CONTROL_\agent\} \otimes \{\emptyset,\CONTROL_\bgent\} \otimes
  \{\emptyset,\CONTROL_\cgent\} } \), that we abusively write 
\( \wstrategy_{\agent}^{-1} (\tribu{\Control}_{\agent})
\subset \tribu{\NatureField}_\agent \), this means that 
\( \wstrategy_{\agent}\Icouple{\control_\agent,\control_\bgent,\control_\cgent}%
{\omega_\agent,\omega_\bgent,\omega_\cgent} 
=
\wstrategy_{\agent}\Icouple{\cancel{\control_\agent},\cancel{\control_\bgent},\cancel{\control_\cgent}}%
{\omega_\agent,\cancel{\omega_\bgent},\cancel{\omega_\cgent}}  \)
only depends on~$\omega_\agent$, that is, only depends on its own ``source of
uncertainty'' (the field $\tribu{\NatureField}_\agent$).
If (abusively) \( \wstrategy_{\bgent}^{-1} (\tribu{\Control}_{\bgent})
\subset \oproduit{ \tribu{\NatureField}_\agent }{ \tribu{\Control}_{\cgent} } \), this means that 
\( \wstrategy_{\bgent}\Icouple{\control_\agent,\control_\bgent,\control_\cgent}%
{\omega_\agent,\omega_\bgent,\omega_\cgent} 
=
\wstrategy_{\bgent}\Icouple{\cancel{\control_\agent},\cancel{\control_\bgent},\control_\cgent}%
{\omega_\agent,\cancel{\omega_\bgent},\cancel{\omega_\cgent}}  \)
only depends on~$\Icouple{\control_\cgent}{\omega_\agent}$, that is, 
only depends on the uncertainty~$\omega_\agent$ 
(the field $\tribu{\NatureField}_\agent$)
and on the variable~$\control_\cgent$
(the field $\tribu{\Control}_{\cgent}$).
%
%
More complex examples will be given later.

After having discussed how SCMs can be interpreted with the help of information
fields, we propose the name Information Dependency Model for their extension.

\begin{definition}[Information Dependency Model]
  An \emph{Information} \emph{Dependency} \emph{Model} (IDM) is a collection
  $(\tribu{\Information}_{\agent})_{\agent\in\AGENT}$ of subfields of
  $\tribu{\History}$ such that, for any $\agent \in \AGENT$,
  \( \tribu{\Information}_{\agent} \subset \oproduit{\tribu{\NatureField}_\agent}{\bigotimes \limits_{\bgent
      \in \AGENT} \tribu{\Control}_{\bgent}} \).
  The subfield $\tribu{\Information}_{\agent}$ is called the \textbf{information
    field} of $\agent$.
\label{de:Information_Dependency_Model}
\end{definition}

The SCM defining property~\eqref{eq:SCM} is now expressed in term of the \textbf{measurability property}
\begin{equation}
  \label{eq:decision_rule}
  \wstrategy_{\agent}^{-1} (\tribu{\Control}_{\agent})
  \subset \tribu{\Information}_{\agent} \eqsepv \forall \agent\in\AGENT
  \eqfinp
\end{equation}
Property~\eqref{eq:decision_rule} expresses, in a very general way, that the
random variable~$\RandomVariable_\agent$ may only depend upon the available
information~$\tribu{\Information}_{\agent}$.  It is a generalization of the
notion of nonanticipativity constraint, or of adapted process with respect to a
filtration, in stochastic control.  For a given applied problem, like for the
SCM, the IDM can be derived from expert knowledge, assumptions and data analysis
methods.
\begin{remark}
\label{remark:scmtoidm}
Any SCM can be mapped into an IDM
as we obtain from Equation~\eqref{eq:SCM} that, for all $\agent\in\AGENT$, we have that 
  \(
  \wstrategy_{\agent}^{-1} (\tribu{\Control}_{\agent})
  \subset \tribu{\Information}_{\agent}\) with 
  \(
  \tribu{\Information}_{\agent} =
  \oproduit{\tribu{\NatureField}_\agent}{\bigotimes_{\bgent \in P(\agent)} \tribu{\Control}_{\bgent}} 
  \subset \oproduit{\tribu{\NatureField}_\agent}{\bigotimes \limits_{\bgent
      \in \AGENT} \tribu{\Control}_{\bgent}} \).
\end{remark}

\begin{figure}[h]
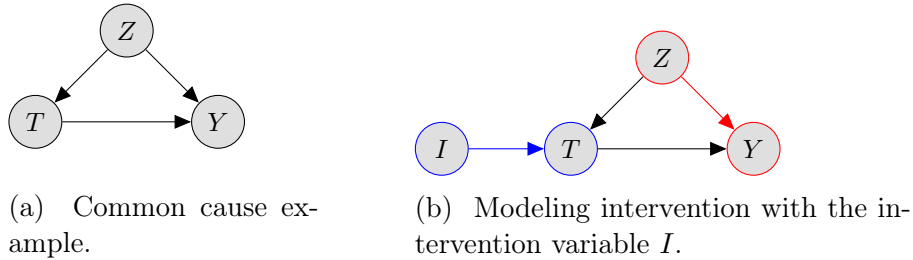

  \begin{center}
    \mbox{\begin{subfigure}{0.25\textwidth}
        \tikz{ %
          \node[obs] (Zm) {$Z$} ; %
          \node[obs, below left=of Zm] (Xm) {$T$} ; %
          \node[obs, below right=of Zm] (Ym) {$Y$} ; %
          \edge  {Zm}  {Xm} ; %
          \edge {Zm} {Ym} ; %
          \edge {Xm} {Ym} ; %
        } 
        \begin{center}
          \caption{\label{fig:commonCauseExample} Common cause example.}
        \end{center}
      \end{subfigure}}
    \hspace{1cm}
    \mbox{\begin{subfigure}{0.4\textwidth}
        \tikz{ %
          \node[obs][draw=red] (Zm) {$Z$} ; %
          \node[obs, below left=of Zm][draw=blue] (Xm) {$T$} ; %
          \node[obs, left=of Xm][draw=blue] (Im) {$I$} ; %
          \node[obs, below right=of Zm][draw=red] (Ym) {$Y$} ; %
          \edge {Zm} {Xm} ; %
          \edge[draw=red] {Zm} {Ym} ; %
          \edge {Xm} {Ym} ; %
          \edge[draw=blue] {Im} {Xm} ; %
        }
        \caption{      \label{fig:commonCauseWithIntervention} Modeling intervention with the intervention variable
          $I$.}
      \end{subfigure}}
  \end{center}
  \caption{Common cause\label{fig:simple} (Example~\ref{ex:common-cause}).}
\end{figure}

\begin{example}[Common cause]
  \label{ex:common-cause}
  First, to better understand how DAGs, and more generally SCMs, can be modeled
  with information fields, we provide a  detailed instance for a set of random
  variables that can be represented by the DAG in Figure~\ref{fig:simple}. 
  \textbf{Such an effort is not required in practice}, because the measurability
  properties are
  fully specified by the DAG for such a simple instance. 
  Let $\AGENT =\{Z,T,Y\}$.
  To simplify the exposition, we suppose that the values of each of the three random variables represented on the DAG
  belong to  $\{0,1\}$.
  Then, 
  \( \CONTROL_Z=\CONTROL_T=\CONTROL_Y=\{0,1\} \),
  each equipped with the complete field
  \( \tribu{\Control}_{Z}=\tribu{\Control}_{T}=\tribu{\Control}_{Y}
  = \ba{\emptyset, \na{0}, \na{1}, \na{0,1} } \).
  We take \( \Omega=\{0,1\}^3 \) as Nature set, 
  equipped with the complete field
  \( \tribu{\NatureField}= 2^{\Omega} \) made of all subsets of~$\Omega$. 
  We write \( \Omega=\Omega_Z\times\Omega_T\times\Omega_Y \),
  where \( \Omega_Z=\Omega_T=\Omega_Y=\{0,1\} \),
  and \( \tribu{\NatureField}= \tribu{\NatureField}_Z\otimes
  \tribu{\NatureField}_T\otimes \tribu{\NatureField}_Y \),
  where \( \tribu{\NatureField}_Z=\tribu{\NatureField}_T=\tribu{\NatureField}_Y
  = \ba{\emptyset, \na{0}, \na{1}, \na{0,1} } \).
  To represent, for instance, the arrows pointing to $Y$ in the DAG in Figure~\ref{fig:commonCauseExample}
  (as well as implicit assumptions about information on Nature),
  we require that the information field \( \tribu{\Information}_{Y} \) satisfies
  $
  \tribu{\Information}_{Y} 
  \subset 
  \oproduit{ \{\emptyset,\Omega_Z\}\otimes
  \{\emptyset,\Omega_T\}\otimes
  \tribu{\NatureField}_Y }
{ \tribu{\Control}_{Z} \otimes 
  \tribu{\Control}_{T} \otimes 
  \{\emptyset,\CONTROL_{Y}\} }
  $.
  This relation expresses that
  the information of $Y$ depends at most on
  its own ``source of uncertainty'' (the field $\tribu{\NatureField}_Y$)
  and on the decisions of both $Z$ and $T$ 
  (the field $\tribu{\Control}_{Z}\otimes \tribu{\Control}_{T} $).
  Again, the effort of describing explicitly the information field is
  not required in the case of DAGs, because the mapping from DAGs to IDMs
  is trivial. On the other hand the IDM allows to express more
  sophisticated hypotheses.
\end{example}

\subsection{Conditional precedence (informal definition)}
\label{sec:conditional}

We now exploit the flexibility of the concept of information field to extend the
definition of precedence.  For any subset $\Bgent\subset\AGENT$, let
\( \tribu{\History}_\Bgent = \oproduit{ \tribu{\NatureField} }{ \bigotimes
\limits_{\bgent \in \Bgent} \tribu{\Control}_\bgent } \subset \tribu{\History} \).
In our extended definition of SCM --- 
the Information Dependency Model of Definition~\ref{de:Information_Dependency_Model} --- we do not
specify a precedence relation: the primitives are the information fields, and
the notion of precedence is deduced from those fields.  For instance, the
traditional \emph{precedence relation} 
on $\AGENT$ is now written as
\begin{equation}
  \Precedence\agent =
  \bigcap_{\Bgent \subset \AGENT; \tribu{\Information}_{\agent} \subset
    \tribu{\History}_{\Bgent}}\!\!\!\!\!\!\!\!\Bgent
  \quad\mtext{ or, equivalently, }
  \tribu{\Information}_{\agent} \subset \tribu{\History}_{\Bgent}
  \iff \Precedence\agent\subset\Bgent 
  \eqfinp
  \label{eq:precedence_relation}
\end{equation}
For an SCM satisfying Equation~\eqref{eq:SCM}, using the mapping
to an IDM described in Remark~\ref{remark:scmtoidm},
one can check that parental and precedence relations are related by
$P(a) = \Precedence \agent$ when $P(a)$ is
  the smallest set such that Equation~\eqref{eq:SCM} is satisfied:
the relation $\Control_\agent\np{\omega} =
\wstrategy_\agent\bp{\omega_\agent,\Control_{P(a)}\np{\omega}} $
implies that $\tribu{\Information}_{\agent} \subset
\tribu{\History}_{P(a)}$; moreover, the minimality means that $P(a)$ is the smallest
subset of $\AGENT$ satisfying such constraint.
So if the SCM can be represented by a  DAG, $\bgent\in\Precedence \agent$ means that there is an
arrow from~$\bgent$ to~$\agent$ in this DAG.

Here is how the notion of information field allows to extend the definition of precedence
to conditional precedence.

\begin{definition}[Conditional Precedence]
  \label{de:conditional_precedence_relation}
  For any subset~$\HistorySubset\subset \HISTORY$ of configurations, 
  and any subset~$\AgentSubsetW\subset\AGENT$, we set
  \begin{equation*}
    \PrecedenceWH \agent = 
    \bigcap_{\Bgent\subset\AGENT; \tribu{\Information}_{\agent}\cap \HistorySubset
      \subset \tribu{\History}_{\Bgent\cup \AgentSubsetW}\cap \History}\!\!\!\!\!\!\!\!\!\!\!\!\!\!\Bgent
    \eqsepv \forall \agent\in\AGENT
    \eqfinp
  \end{equation*}
  Then, we call~$\PrecedenceWH$ the \emph{precedence conditioned on $(\AgentSubsetW,\History)$}
  binary relation on~$\AGENT$ defined by
  \( \bgent\PrecedenceWH\agent \iff \bgent\in\PrecedenceWH\agent \).
\end{definition}
Informally, $\History$ and $\AgentSubsetW$ are elements over which we ``condition''
(recall that \( \tribu{\Information}_{\agent}\cap \HistorySubset \) and
\( \tribu{\History}_{\Bgent\cup \AgentSubsetW}\cap \History \) are trace fields over~$\HistorySubset$).
When \( \AgentSubsetW=\emptyset \) and \( \History=\HISTORY \),
we get that \( \PrecedenceWH=\Precedence \). 

\begin{example}[Recursive Information Dependency Model]
  Informally an Information Dependency Model is  recursive when it
  corresponds to a DAG, i.e.
  $\Precedence=\Precedence_{\emptyset,\HISTORY}$ induces a
  partial order. 
\end{example}

\begin{remark}[Solvability]
  \label{rq:solvability}
  When the Information Dependency Model described
  by~\eqref{eq:decision_rule} has been constructed from a DAG (see
  Remark~\ref{remark:scmtoidm}), there is no question of
  well-posedness. Indeed, one can
  simulate a sample of random variables by first generating the variables that
  do not have parents, and then following the graph along their
  children.
 In such situation, we can equivalently say  that the IDM allows  for a
 sequential resolution, is 
recursive, or admits a fixed causal ordering~\cite{Witsenhausen:1971a}.

However, there exist 
   IDMs that do not have a fixed causal ordering. For such
   \emph{nonsequential} (or equivalently, \emph{nonrecursive}) IDMs,
 we  require a weaker property than sequentiality to ensure
 well-posedness: \emph{solvability}. 
   We discuss in more details the question of
  \textbf{solvability} in~\S\ref{Cycles_and_well-posedness}.  We need in
  particular to exclude cases such as \emph{self-information} (that is,
  $\agent \in P(\agent)$) and, more generally, cases where the system of equations
  \eqref{eq:SCM} could have several solutions (consider for instance $x= y$ and
  $y=x$) or no solution at all.
\end{remark}

In \citep{tikka2019identifying}, the authors manage to summarize the
three rules of do-calculus into one rule thanks to the notions of context specific
independence and labeled DAGs.
Our definition allows us to reproduce their approach.
\begin{example}[Context Specific Independence]
  \label{example:tikka}
  In order to model spurious edges, 
\citep{tikka2019identifying}  relies
  on so-called \emph{labeled DAGs} that can be turned into a context specific DAG by removing
  the arcs that are deactivated (spurious) in the context of interest.
  In the formalism that we propose, such context is represented by a subset~$\History$ of~$\HISTORY$.
  Indeed, if we denote by $\History\subset\HISTORY$ the context for which an arc~\(
  \np{\agent,\bgent} \) is deactivated (in the language of
  \citep{tikka2019identifying}),
  we encode this by the two properties
  \( \agent \not\in\Precedence_{\emptyset,\History}\bgent \) and
  \( \agent \in\Precedence_{\emptyset,\HISTORY\setminus\History}\bgent \)
  --- themselves encoded in the structure of $\bgent$'s information
  field~$\tribu{\Information}_{\bgent}$.
\end{example}
For the reader familiar with~\citep{tikka2019identifying}, it is then
easy to guess how we are going to model intervention variables. 
\begin{example}[Intervention variables]
  To  introduce the possibility to intervene on a variable,
  we use a simple procedure.  
  Suppose we are interested in an intervention profile
  $\hat{\wstrategy}_{\AgentSubsetZ}$ for a subset~\( \AgentSubsetZ \subset \AGENT \).
  For this purpose, we consider a new family
  \( \nseqa{\hat{\tribu{\Information}}_{z}}{z\in\AgentSubsetZ} \)
of fields   \( \hat{\tribu{\Information}}_{z} \subset \tribu{\History} \),
  and we suppose that $\hat{\wstrategy}_\AgentSubsetZ$ is
  $\hat{\tribu{\Information}}_{z}$-measurable, for any $z\in \AgentSubsetZ$.
  Then, we enrich the model as follows:
  (i) we introduce a new \emph{intervention variable}~$I$ (see Figure~\ref{fig:commonCauseWithIntervention}),
  equipped with $\Omega_I =\{0,1\}$ and $\CONTROL_I = \{0,1\}$, 
  and which only has access to its private information in~$\Omega_I$;
  (ii) we straightforwardly adapt the information fields 
  for $\AGENT\setminus\AgentSubsetZ$ and the probability~$\PROBA$;  
  (iii) we replace the information field $\tribu{\Information}_{z}$ by
  $\bp{ \{0\}\otimes\tribu{\Information}_{z} } \cup
  \bp{ \{1\}\otimes\hat{\tribu{\Information}}_{z} }$, for $z\in \AgentSubsetZ$.

  More formally, we introduce  the new model
  $\bp{
    \tilde {\AGENT},
    \np{\tilde{\Omega}, \tilde{\tribu{\NatureField}}}, 
    \nseqa{\tilde{\CONTROL}_{\agent}, \tilde{\tribu{\Control}}_{\agent}}{\agent \in \tilde{\AGENT}}, 
    \nseqa{\tilde{\tribu{\Information}}_{\agent}}{\agent \in \tilde{\AGENT}} 
  }$, where
  $\tilde{\AGENT}= \AGENT \cup \{I\}$, 
  \(\tilde{ \Omega} = \Omega\times \{0,1\}\), 
  $\tilde{\CONTROL}_I = \{0,1\}$,
  $\tilde{\CONTROL}_{\agent}=\CONTROL_{\agent} $ for any \( \agent \in \AGENT \),
  and  
  \begin{subequations}
    \begin{align}
      \tilde{\tribu{\Information}}_{\agent} 
      &=
        \tribu{\Information}_{\agent}
        \otimes \{\emptyset,\CONTROL_I\}
        \eqsepv   \forall \agent \in \AGENT \setminus \AgentSubsetZ
        \eqfinv
      \\
      \tilde{\tribu{\Information}}_{\AgentSubsetZ} 
      &= 
        \hat{\tribu{\Information}}_{\AgentSubsetZ}
        \otimes \tribu{\Control}_I
        \eqsepv
        \forall z\in\AgentSubsetZ
        \eqfinv
      \\
      \tribu{\Information}_{I} 
      &=   
      \oproduit{ \bigotimes_{\agent\in\AGENT} \{\emptyset, \Omega_\agent\}
        \otimes \tribu{\NatureField}_I 
        }{
        \bigotimes_{\agent\in\tilde{\AGENT}} \{\emptyset, \CONTROL_\agent\} }
        \eqfinp 
    \end{align}
  \end{subequations}
  We also extend the probability~$\PROBA$ as a product probability~$\tilde{\PROBA}=\PROBA
  \otimes \mu$ on~$\tilde{ \Omega}$, where $\mu$ is a full support probability on~$\{0,1\}$. 
\end{example}

\section{Topological separation, independence and do-calculus}
\label{sec-Main-results}

In~\S\ref{sec:topological-separation}, we introduce the new notion of
topological separation.
In~\S\ref{sec:independence_and_do-calculus_with_information_fields},
we prove that topological separation implies 
independence, which allows us to derive  a unique do-calculus rule. 
We use the notation \( \ic{r,s}=\na{r,r+1,\ldots,s-1,s} \) for any two
integers~$r,s$ such that $r \leq s$. 

\subsection{Definition of topological separation}
\label{sec:topological-separation}

We now introduce the new notion of topological separation.
We refer the reader to the companion
paper~\cite{De-Lara-Chancelier-Heymann-2021}
for additional material on the subject. 

For any subsets \( \Bgent \subset \AGENT\) and \( \Bgent_j \subset \AGENT\),
$j\in\ic{1,n}$, we write \( \Bgent_1 \sqcup \cdots \sqcup \Bgent_n = \Bgent \)
when we have both \( \Bgent_j \cap \Bgent_k = \emptyset \) for all $j\neq k$ and
\( \Bgent_1 \cup \cdots \cup \Bgent_n = \Bgent \).  We will also say that
\( \sequence{\Bgent_j}{j\in\ic{1,n}} \) is a \emph{splitting} of~\( \Bgent \)
(we do not use the vocable of partition because it is not required that the
subsets~\( \Bgent_j \) be nonempty).

\begin{definition}[Topological Separation]
  \label{de:topologically_separated}
  Let $\HistorySubset\subset \HISTORY$ and  $\Bgent,
  \Cgent,\AgentSubsetW\subset\AGENT$.
  We denote by $\TopologicalClosure{\Bgent}$ the smallest subset of~$\AGENT$
  that contains $\Bgent$ and its own predecessors under~$\PrecedenceWH$
  (that is, \( \TopologicalClosure{\Bgent}=
  \Bgent \cup \PrecedenceWH\Bgent \cup \PrecedenceWH^2\Bgent \cup \cdots \)). 
  
  We say that  $\Bgent$ and $\Cgent$ are 
  (conditionally) \emph{topologically separated} 
  (\wrt\footnote{with respect to} $\np{\AgentSubsetW,\HistorySubset}$), 
  denoted by $\Bgent \ConditionalTopologicalSeparation \Cgent \mid \np{\AgentSubsetW,\HistorySubset}$,
  if there exists 
  \( \AgentSubsetW_\Bgent , \AgentSubsetW_\Cgent \subset \AgentSubsetW \) 
  such that 
  \begin{equation}
    \AgentSubsetW_\Bgent \sqcup \AgentSubsetW_\Cgent = \AgentSubsetW 
    \quad\text{and}\quad
    \TopologicalClosure{ \Bgent \cup \AgentSubsetW_\Bgent }
    \cap 
    \TopologicalClosure{ \Cgent \cup \AgentSubsetW_\Cgent }
    = \emptyset 
    \eqfinp
    \label{eq:topologically_separated_TopologicalClosure}      
  \end{equation}
\end{definition}
Further discussion on this definition is provided in
Sect.~\ref{subsec:conditonal-parentality}.  As proved in
Proposition~\ref{pr:conditional_topology} (see also \citep{Witsenhausen:1975}),
$\TopologicalClosure{\Bgent}$ is the \textbf{topological closure} of $\Bgent$
under a topology induced by the relation~$\PrecedenceWH$.  As stated in the
introduction, we prove in~\citep{De-Lara-Chancelier-Heymann-2021} that the
topological separation is equivalent to the d-separation on DAGs.

Observe that the condition for topological separation is on the \emph{existence}
of a splitting of the set of variables $\AgentSubsetW$ over which we want to
condition.  On a DAG, when $\HistorySubset= \HISTORY$, we have topological
separation of $\Bgent$ and $\Cgent$ with respect to $\AgentSubsetW$ when there
is a splitting $(\AgentSubsetW_\Bgent,\AgentSubsetW_\Cgent)$ of~$\AgentSubsetW$
such that the sets of ancestors of $\Bgent\cup \AgentSubsetW_\Bgent$ and
$\Cgent\cup \AgentSubsetW_\Cgent$ ---  defined as the union of the
iterates of the set-valued mapping $\Bgent\subset\AGENT\to
P(\Bgent) \setminus \AgentSubsetW $ on those sets --- are
disjoint.

%

We prove in~\citep{De-Lara-Chancelier-Heymann-2021}
that d-separation and topological separation are
equivalent.  We think this alternative definition of d-separation is very handy even for DAGs.
Indeed, (i) the splitting $(\AgentSubsetW_\Bgent,\AgentSubsetW_\Cgent)$
is given by an explicit formula
(see~\citep{De-Lara-Chancelier-Heymann-2021}),
(ii) once the splitting is given --- which is the main difficulty --- it is usually  much quicker to check
that the ancestors sets are disjoints than checking that all the paths
between~$\Bgent$ and~$\Cgent$ are blocked by~$\AgentSubsetW$,
as illustrated in the following Example~\ref{ex:easy}, illustrated by Figure~\ref{fig:kuh}.
\begin{figure}[h]
  \begin{center}
    \mbox{
      \includegraphics[width=\textwidth]{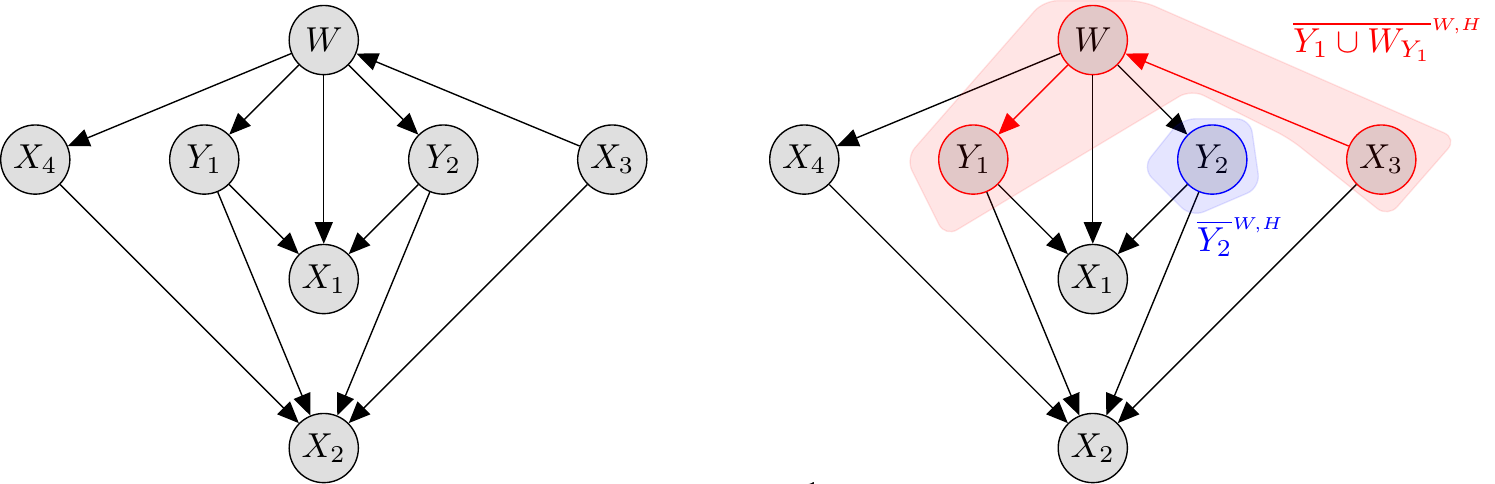}
    }
    \caption{\label{fig:kuh}Topological separation \wrt
      $\np{\AgentSubsetW,\HISTORY}$ is easy to check given
      the splitting $W = W_{Y_1}\sqcup W_{Y_2}$, with $W_{Y_1} = W$ and $W_{Y_2}=\emptyset$
      as $\TopologicalClosure{Y_1\cup W_{Y_1}}=
      \TopologicalClosure{Y_1\cup W}=Y_1\cup W\cup X_3$
      --- with, in red, the edges followed to build the closure,
      and the three vertices~$Y_1,W,X_3$ where~$X_3$ is the 
      only new vertex in $\TopologicalClosure{Y_1\cup W_{Y_1}}
      \setminus \np{Y_1\cup W}$ --- and
      $\TopologicalClosure{Y_2\cup W_{Y_2}}=\TopologicalClosure{Y_2}= Y_2$
      --- with, in blue, the only vertex~$Y_2$ because only~$W$ has an arrow
      pointing to~$Y_2$, hence has to be excluded since it is in~$W$
      --- do not intersect. 
    }
  \end{center}
\end{figure}

\begin{example}[Topological separation is easy to check: recursive system]
  \label{ex:easy}
  The DAG in Figure~\ref{fig:kuh} (left) illustrates why and how the notion of
  topological separation is practical.
  If one wants to check that $Y_1$ and $Y_2$ are d-separated by $W$, then one 
  needs to check that \textbf{every path} that goes from $Y_1$ to $Y_2$ is
  blocked by $W$ (by simply applying the definition), like in
  \begin{itemize}
  \item $Y_1\leftarrow W\to Y_2$: blocked common cause
  \item $Y_1\to  X_1\leftarrow  Y_2$: collider
  \item $Y_1\to  X_2\leftarrow  Y_2$: collider
  \item $Y_1\to  X_1\leftarrow  W \to Y_2$: blocked common cause
  \item $Y_1\to X_2\leftarrow X_4\leftarrow W \to Y_2$: collider
  \item $\dots$
  \end{itemize}
  Of course, one could simplify this long enumerating process by observing,
  for instance, that any path going through~$X_2$ will be blocking, but this requires additional steps.  

  By contrast, the topological separation can be checked
  visually on the right hand side of Figure~\ref{fig:kuh} by setting $\AgentSubsetW_{Y_1}= \AgentSubsetW$,
  $\AgentSubsetW_{Y_2}= \emptyset$ and then checking that the
  topological closures of $Y_1\cup W$ (which is $Y_1\cup W\cup X_3$) and $Y_2$
  (which is $Y_2$ itself)
do not intersect.
\end{example}

\begin{figure}
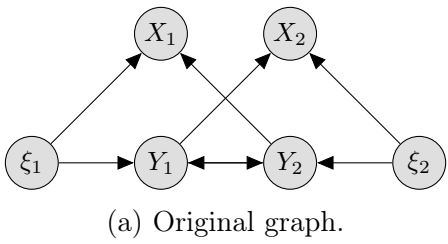
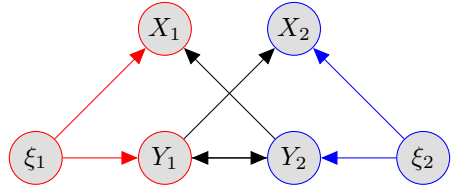

  \begin{center}
    \mbox{\begin{subfigure}{.45\textwidth}
        \centering
        \tikz{ %
          \node[obs] (X1) {$X_1$} ; %
          \node[obs,  right=of X1] (X2) {$X_2$} ; %
          \node[obs, below=of X1] (Y1) {$Y_1$} ; %
          \node[obs, below=of X2] (Y2) {$Y_2$} ; %
          \node[obs, left=of Y1] (Xi1) {$\xi_1$} ; %
          \node[obs, right=of Y2] (Xi2) {$\xi_2$} ; %
          \edge  {Xi1}  {X1} ; %
          \edge  {Xi2}  {X2} ; %
          \edge  {Xi1}  {Y1} ; %
          \edge  {Xi2}  {Y2} ; %
          \edge  {Y2}  {X1} ; %
          \edge  {Y1}  {X2} ; %
          \edge  {Y2}  {Y1} ; %
          \edge  {Y1}  {Y2} ; %
        }
        \caption{Original graph. \\ \quad}
        \label{fig:sub-first}
      \end{subfigure}}\hspace{0.05\textwidth}%
    \mbox{\begin{subfigure}{.45\textwidth}
        \centering
        \tikz{ %
          \node[obs][draw=red] (X1) {$X_1$} ; %
          \node[obs,  right=of X1][draw=blue] (X2) {$X_2$} ; %
          \node[obs, below=of X1][draw=red]  (Y1) {$Y_1$} ; %
          \node[obs, below=of X2][draw=blue] (Y2) {$Y_2$} ; %
          \node[obs, left=of Y1][draw=red] ( (Xi1) {$\xi_1$} ; %
          \node[obs, right=of Y2][draw=blue] ( (Xi2) {$\xi_2$} ; %
          \edge[draw=red]  {Xi1}  {X1} ; %
          \edge[draw=blue]  {Xi2}  {X2} ; %
          \edge [draw=red] {Xi1}  {Y1} ; %
          \edge[draw=blue]  {Xi2}  {Y2} ; %
          \edge  {Y2}  {X1} ; %
          \edge  {Y1}  {X2} ; %
          \edge  {Y2}  {Y1} ; %
          \edge  {Y1}  {Y2} ; %
        }
        \caption{Let $\AgentSubsetW_{X_i} = Y_i$, for $i=1,2$.  The closure of
          $X_1\cup Y_1$ (resp. $X_2\cup Y_2$), with the edges followed to build the
        closure, is in red (resp. blue).}
        \label{fig:sub-fourth}
      \end{subfigure}}
    \caption{\label{fig:jpcbh}Topological separation is easy to check: nonrecursive system.
    }
  \end{center}
\end{figure}

\begin{example}[Topological separation is easy to check: nonrecursive system]
  \label{ex:non-rec-easy}
  We display in Figure~\ref{fig:jpcbh} a nonrecursive system for which
  we check that $X_1$ and $X_2$ are topologicaly separated
    \wrt $\bp{\np{Y_1,Y_2},\HISTORY}$.
  This is --- in our opinion --- simpler to check than 
  $\sigma$-separation~\cite{pmlr-v115-forre20a} because there are less intermediate steps. 
\end{example}

\subsection{Independence and do-calculus with information fields}
\label{sec:independence_and_do-calculus_with_information_fields}

In what follows,
we consider random variables
that take values in countable sets, and that $\Omega$ is countable as
well.
We pinpoint that  working on continuous sets introduces technical measurability questions in the
proof of Lemma~\ref{th:decomposition} which are likely to be
irremediable, as discussed in~\S\ref{subseq:discrete-continuous}.
We can now state our version of Pearl's three rules of do-calculus.
The statement looks like a simple sufficient condition for conditional
independence thanks to the fact that we encode the  intervention variables
in the information fields.

\begin{theorem}[Do-calculus]
  \label{th:do-calculus3rule-simple}
  Supposing that all random variables 
  have countable codomain, and that $\Omega$ is countable as
well, we have the implication
  \begin{equation}
    \AgentSubsetY \Tsep \AgentSubsetZ \mid \np{\AgentSubsetW,\HistorySubset} \implies
    \PROBA
    \bsetp{ U_\AgentSubsetY=\cdot}{U_\AgentSubsetW,U_{\TopologicalClosure{\AgentSubsetZ}},\HistorySubset}
    = \PROBA
    \bsetp{ U_\AgentSubsetY=\cdot}{U_\AgentSubsetW,\HistorySubset}    
    \eqfinp 
  \end{equation}
\end{theorem}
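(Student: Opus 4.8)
The plan is to derive Theorem~\ref{th:do-calculus3rule-simple} from a structural decomposition result — presumably the one announced as Lemma~\ref{th:decomposition} — which isolates how the solution map of the IDM factors through the topological closures. First I would set up the right objects: since topological separation $\AgentSubsetY \Tsep \AgentSubsetZ \mid \np{\AgentSubsetW,\HistorySubset}$ holds, fix a splitting $\AgentSubsetW = \AgentSubsetW_\AgentSubsetY \sqcup \AgentSubsetW_\AgentSubsetZ$ witnessing it, so that $\TopologicalClosure{\AgentSubsetY \cup \AgentSubsetW_\AgentSubsetY} \cap \TopologicalClosure{\AgentSubsetZ \cup \AgentSubsetW_\AgentSubsetZ} = \emptyset$. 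By Proposition~\ref{pr:conditional_topology}, each of these closed sets is closed under the conditional precedence $\PrecedenceWH$, hence the sub-collection of information fields indexed by $\TopologicalClosure{\AgentSubsetY \cup \AgentSubsetW_\AgentSubsetY}$ forms (the restriction to $\HistorySubset$ of) a self-contained sub-IDM: the variables in that set depend, modulo the conditioning set $\AgentSubsetW$ and the context $\HistorySubset$, only on their own noises $\omega_\agent$ and on each other. The key point is that $U_\AgentSubsetY$ is a measurable function of the noises $\nseqp{\omega_\agent}{\agent \in \TopologicalClosure{\AgentSubsetY \cup \AgentSubsetW_\AgentSubsetY}}$ together with $U_{\AgentSubsetW_\AgentSubsetY}$ (via solvability), while $U_{\TopologicalClosure{\AgentSubsetZ}}$ and $U_{\AgentSubsetW_\AgentSubsetZ}$ are measurable functions of a \emph{disjoint} block of noises together with $U_{\AgentSubsetW_\AgentSubsetZ}$.

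Second, I would exploit countability and the product structure of $\np{\Omega,\tribu{\NatureField},\PROBA} = \bigotimes_{\agent} \np{\Omega_\agent, \tribu{\NatureField}_\agent, \PROBA_\agent}$. Because the noise coordinates are independent under $\PROBA$, and because — after the decomposition — $U_\AgentSubsetY$ on the event $\HistorySubset$ is generated by noises disjoint from those generating $(U_{\TopologicalClosure{\AgentSubsetZ}}, U_{\AgentSubsetW_\AgentSubsetZ})$ on that event, a conditioning argument gives that, given $U_\AgentSubsetW$ and $\HistorySubset$, the extra variables $U_{\TopologicalClosure{\AgentSubsetZ}}$ carry no information about $U_\AgentSubsetY$. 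Concretely, I would write the joint law of $(U_\AgentSubsetY, U_{\AgentSubsetW}, U_{\TopologicalClosure{\AgentSubsetZ}})$ restricted to $\HistorySubset$ as a sum over noise configurations, split the sum along the two disjoint noise blocks using independence, and read off that the conditional law $\PROBA\bsetp{U_\AgentSubsetY = \cdot}{U_\AgentSubsetW, U_{\TopologicalClosure{\AgentSubsetZ}}, \HistorySubset}$ does not depend on the value of $U_{\TopologicalClosure{\AgentSubsetZ}}$, hence equals $\PROBA\bsetp{U_\AgentSubsetY = \cdot}{U_\AgentSubsetW, \HistorySubset}$. The splitting of $\AgentSubsetW$ is essential here: $\AgentSubsetW_\AgentSubsetY$ goes into the "$\AgentSubsetY$-side" noises' functional dependence and $\AgentSubsetW_\AgentSubsetZ$ into the "$\AgentSubsetZ$-side", and conditioning on all of $U_\AgentSubsetW$ is exactly what makes both sides simultaneously measurable with respect to their respective disjoint noise blocks.

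The main obstacle I anticipate is the rigorous justification of the decomposition step: showing that on the event $\HistorySubset$ the values $U_{\TopologicalClosure{\AgentSubsetY \cup \AgentSubsetW_\AgentSubsetY}}$ are determined by the noise block $\nseqp{\omega_\agent}{\agent \in \TopologicalClosure{\AgentSubsetY \cup \AgentSubsetW_\AgentSubsetY}}$ and $U_{\AgentSubsetW_\AgentSubsetY}$ alone. This is where solvability and the precise definition of conditional precedence interact delicately — the closure $\TopologicalClosure{\cdot}$ is taken with respect to $\PrecedenceWH$, which already "quotients out" the $\AgentSubsetW$-variables and the context $\HistorySubset$, so the reduction is only valid once one conditions on $U_\AgentSubsetW$ and restricts to $\HistorySubset$. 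I would handle this by invoking Lemma~\ref{th:decomposition} as the black box that packages exactly this reduction (a restricted sub-IDM on a $\PrecedenceWH$-closed set, with its own solution map), and then the probabilistic argument above is a short independence computation. This is also the step where countability of $\Omega$ is used: in the continuous setting the "restricted solution map" need not be measurable in a usable way, as flagged in~\S\ref{subseq:discrete-continuous}, so the black-box lemma itself fails and the proof does not go through.
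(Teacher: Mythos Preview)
Your overall architecture matches the paper's: invoke the factorization Lemma~\ref{th:decomposition} to decompose the solution map along the two topological closures, then run an independence argument on the noise blocks, and read off the do-calculus rule. But there is a concrete error in the dependency structure you write down, and the step you describe as a ``short independence computation'' is precisely the nontrivial part that the paper isolates as a separate key lemma.

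First, the cross-dependency is reversed. Lemma~\ref{th:decomposition} gives that the $\widetilde{\AgentSubsetY}$-component of the solution map (which outputs $U_\AgentSubsetY$ \emph{and} $U_{\AgentSubsetW_\AgentSubsetY}$) is a function of $\omega_{\widetilde{\AgentSubsetY}}$ together with $U_{\AgentSubsetW_\AgentSubsetZ}$ --- not $U_{\AgentSubsetW_\AgentSubsetY}$ as you wrote --- and symmetrically the $\widetilde{\AgentSubsetZ}$-component depends on $\omega_{\widetilde{\AgentSubsetZ}}$ and $U_{\AgentSubsetW_\AgentSubsetY}$. (Your phrasing even has $U_{\AgentSubsetW_\AgentSubsetZ}$ as a function of $U_{\AgentSubsetW_\AgentSubsetZ}$, which signals the confusion.) This cross-coupling is exactly what makes the system cyclic: $U_{\AgentSubsetW_\AgentSubsetY}$ feeds the $\AgentSubsetZ$-side and $U_{\AgentSubsetW_\AgentSubsetZ}$ feeds the $\AgentSubsetY$-side.

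Second, and more substantively, once that decomposition is in place it is \emph{not} enough to observe that the two noise blocks $\omega_{\widetilde{\AgentSubsetY}}$ and $\omega_{\widetilde{\AgentSubsetZ}}$ are independent and ``split the sum''. The conditioning event $\{U_{\AgentSubsetW}=w\}\cap\HistorySubset$ is a priori a joint constraint on both noise blocks through the cycle above, so the sum does not split automatically. The crucial step --- packaged in the paper as Lemma~\ref{lem:independence} --- is to show that this event nevertheless equals the intersection of a pure $\omega_{\widetilde{\AgentSubsetY}}$-event and a pure $\omega_{\widetilde{\AgentSubsetZ}}$-event, namely
\(\{\Phi_1(\omega_{\widetilde{\AgentSubsetY}},w_{\AgentSubsetW_\AgentSubsetZ})=w_{\AgentSubsetW_\AgentSubsetY}\}\cap\{\Phi_2(\omega_{\widetilde{\AgentSubsetZ}},w_{\AgentSubsetW_\AgentSubsetY})=w_{\AgentSubsetW_\AgentSubsetZ}\}\).
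This identity (Equation~\eqref{eq:unique_solution_proof}) uses unique solvability in an essential way --- without it the inclusion is only one-directional --- and is where the discrete/continuous distinction of~\S\ref{subseq:discrete-continuous} actually enters. Your plan implicitly presupposes this factorization; identifying and proving it is the missing idea.
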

A more formal statement of this Theorem~\ref{th:do-calculus3rule-simple}, as well as the proof and a discussion,
will be provided in~\S\ref{Topological_separation_implies_the_do-calculus}. 
We stress the conciseness of Theorem~\ref{th:do-calculus3rule-simple}
-- a unique rule, no ``do'' operator --  
permitted by the Information Dependency Model formulation.

\begin{figure}[h]
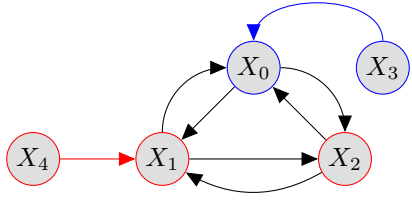

  \begin{center}
    \tikz{ %
      \node[obs][draw=blue] (X0) {$X_0$} ; %
      \node[obs, below left=of X0][draw=red] (X1) {$X_1$} ; %
      \node[obs, below right=of X0][draw=red] (X2) {$X_2$} ; %
      \node[obs,  right=of X0][draw=blue] (X3) {$X_3$} ; %
      \node[obs,  left=of X1][draw=red] (X4) {$X_4$} ; %
      \edge  {X2}  {X0} ; %
      \edge  {X0}  {X1} ; %
      \edge[draw=red]   {X4}  {X1} ; %
      \draw [->] (X1) to [out=90,in=180] (X0);
      \draw [->] (X1) to [out=0,in=180] (X2);
      \draw [<-] (X1) to [out=330,in=210] (X2);
      \draw [->] (X0) to [out=0,in=90] (X2);
      \draw [->,blue] (X3) to [out=90,in=90] (X0);
    } 
    \caption{\label{fig:CE2}$X_3$ and $X_4$ are independent
      conditioned on $(X_0,X_1,X_2)$ but not independent if we only
      condition on $(X_0,X_1)$.
The visual proof of topological separation is obtained by considering the
        splitting $W_{X_4}=\{X_1,X_2\}$ and $W_{X_3}=\{X_0\}$ and observing that the topological closure of
      $X_3 \cup W_{X_3}$ in blue does not intersect the topological closure of $X_4 \cup W_{X_4}$ in red}.
  \end{center}
\end{figure}

\begin{example}
  \label{ex:w}
  The following, well posed (solvable as given below in Definition~\ref{de:solvability}), example is inspired by the work of Witsenhausen~\cite{Witsenhausen:1971a}
  on causality. It is depicted
  in Figure~\ref{fig:CE2} and corresponds to the following nonrecursive binary SCM
  ($N_1$,\ldots, $N_4$ are independent, binary noise variables, $\oplus$ is the $\mathrm{Xor}$
  operator):
  \begin{align*}
    X_0 &= (X_1. (\neg X_2))\oplus (N_0 \oplus X_3)\quad\mbox{and} \quad
          X_1 = (X_2 .(\neg X_0))\oplus (N_1 \oplus X_4)\\
    X_2 &= (X_0. (\neg X_1))\oplus N_2, \quad 
          X_3 = N_3 \quad \mbox{and} \quad 
          X_4 = N_4 \eqfinp
  \end{align*}
  The random variables  $X_3$ and $X_4$ are
  topologically separated by $(X_0,X_1,X_2)$ --- note that $X_2$ is needed ---
  \textbf{hence $X_3$ and $X_4$ are independent conditioned on $(X_0,X_1,X_2)$}
  but not independent if we only condition on $(X_0,X_1)$.

  Observe that the intuition that we could equivalently replace $X_0$, $X_1$ and
  $X_2$ by a unique variable $\AgentSubsetW$ is misleading: with such a change, we
  would get a collider $X_4 \to \AgentSubsetW \leftarrow X_3$ over which we are
  conditioning, which would make $X_4$ and $X_3$ non-blocked with
  respect to $\AgentSubsetW$.

  Let us try to apply the elegant recent result
  \cite[Theorem~5.2]{pmlr-v115-forre20a}
  on conditional independence in the presence of cycles.
  We first observe that 
  the Directed Mixed Graph (DMG) induced by the Input/output Structural Causal
  Model (ioSCM) associated with our example (see
  \cite[Definitions~2.3 and 5.1]{pmlr-v115-forre20a}) looks like the graph of
  Figure~\ref{fig:CE2}.
  Second, we observe that $X_0$, $X_1$ and $X_2$ belong to the same
  strongly connected component~$S$  (see~\cite{pmlr-v115-forre20a}), in the
  sense that they are all ancestors and descendants of each other.
  Third, let us consider the walk
  $X_4\to X_1\leftarrow X_0 \leftarrow X_3$.
  According to \cite[Definition 4.2]{pmlr-v115-forre20a},
  the walk $X_4\to X_1\leftarrow X_0  \leftarrow X_3$ is 
  $\{X_0,X_1,X_2\}$-\emph{$\sigma$-open} because
  \begin{itemize}
  \item $
    X_4\to X_1\leftarrow X_0$ satisfies the  collider definition
    in \cite[Definition~4.2, (a)]{pmlr-v115-forre20a}, as
    $X_1\in \{X_0,X_1,X_2\}$,
  \item
    $ X_1\leftarrow X_0 \leftarrow X_3$ satisfies the left chain
    condition because $X_0 \in \{X_0,X_1,X_2\}\cap S $, where $S$ is
    the strongly connected component of $X_1$.
  \end{itemize}
  Hence it seems that \cite[Theorem~5.2]{pmlr-v115-forre20a} cannot be used to state that $X_3$
  and $X_4$ are independent conditioned on $(X_0,X_1,X_2)$.

  To finish, we illustrate the claim --- that is, the independence
  of~$X_3$ and $X_4$ when conditioned on~$X_0$, $X_1$ and $X_2$ ---
  with a numerical exact computation, taking the
  $N_i$ as binomial variables  of parameter~$0.1$.
  We solve the cycle by enumerating the 8 possible combinations of values
  for~$X_0$, $X_1$ and $X_2$ and selecting the only admissible one.
  The results are shown in Table~\ref{table:table}.

  This  example illustrates the novelty of the IDM approach.
\end{example}
\newcommand{\verteq}{\rotatebox{90}{$\,=$}}
\newcommand{\equalto}[1]{\underset{\scriptstyle{\mkern4mu\verteq}}{#1}}
\newcommand{\toequal}[1]{\underset{\displaystyle #1}{{\mkern4mu\verteq}}}

\begin{table}[!htb]
  \caption{Numerical results for Example~\ref{ex:w} \label{table:table}}
  \begin{subtable}[t]{.45\textwidth}
    \centering
    \caption{We check numerically that 
      $X_3$ and $X_4$ are independent when conditioned on
      $X_0$, $X_1$ and $X_2$ by computing $\PROBA( X_4 = 1 | X_0,X_1,X_2,X_3 )$;
      indeed, the last two columns are identical.}
    \begin{tabular}{@{}ccc||l|l@{}}
      \toprule
      $\equalto{X_0}$ & $\equalto{X_1}$ & $\equalto{X_2}$ & $X_3={0}$  & $X_3=1$  \\ \midrule
      {0} & {0} & {0} & 0.012        & 0.012        \\
      {0} & {0} & {1} & 0.5          & 0.5          \\
      {0} & {1} & {0} & 0.5          & 0.5          \\
      {0} & {1} & {1} & 0.012        & 0.012        \\
      {1} & {0} & {0} & 0.012        & 0.012        \\
      {1} & {0} & {1} & 0.012        & 0.012        \\
      {1} & {1} & {0} & 0.5          & 0.5          \\
      {1} & {1} & {1} & 0.5          & 0.5          \\ \bottomrule
    \end{tabular}
  \end{subtable}%
  \hspace{0.05\textwidth}\begin{subtable}[t]{.45\textwidth}
    \centering
    \caption{We check numerically that 
      $X_3$ and $X_4$ are not independent when conditioned on
      $X_0$ and $X_1$ by computing $\PROBA( X_4 = 1 | X_0,X_1,X_3 )$;
    indeed, the last two columns are different (see the two underlined numbers).
      \\
    }
    \begin{tabular}{cc||l|l}
      \toprule
      $\equalto{X_0}$ & $\equalto{X_1}$ & $X_3=0$  & $X_3=1$  \\ \midrule
      {0} & {0} & 0.023        & 0.023        \\
      {0} & {1} & \underline{0.1}          & \underline{0.474}        \\
      {1} & {0} & 0.012        & 0.012        \\
      {1} & {1} & 0.5          & 0.5          \\ \hline
    \end{tabular}
  \end{subtable} 
\end{table}

\section{Presentation of Witsenhausen's product model and solvability}
\label{Presentation_of_Witsenhausen_product_model_and_solvability}


This Sect.~\ref{Presentation_of_Witsenhausen_product_model_and_solvability}
is devoted to a presentation of  
the mathematical formalism and technical machinery we rely on, which we borrow from Witsenhausen's
work~\citep{Witsenhausen:1971a,Witsenhausen:1975}.
We start in~\S\ref{The_Witsenhausen_product_model} with Witsenhausen's product
model, followed by the notions of solvability and solution map
in~\S\ref{Solvability_and_solution_map}.
It is notable that our work brings   together ideas from causal statistics with ideas from   
decentralized control theory that also attempted to provide a definition of
causality a few decades ago; this is the object of~\S\ref{Causality}.
Thus equiped, we discuss cycles and the meaning of ``well-posedness''
in~\S\ref{Cycles_and_well-posedness}.

\subsection{Witsenhausen's product model}
\label{The_Witsenhausen_product_model}

Because Witsenhausen introduced his model to 
the control community some five decades ago
\citep{Witsenhausen:1971a,Witsenhausen:1975}, we expect that
most readers will not be familiar with it.
We provide tentative correspondences between Pearl's DAG and Witsenhausen's intrinsic model 
in Table~\ref{tab:Pearl_Witsenhausen}.

\begin{table}[h]
  \begin{tabular}{@{}lll@{}}
    \toprule
    & \emph{Pearl} & \emph{Witsenhausen}   
    \\ \midrule
    Structure           & DAG            & Nature and agents decision sets, with their respective fields 
    \\
    Parent relation     & $\rightarrow$  & precedence relation
    \\
    & node           & agent                                                                                                
    \\
    & edge   & agents related by the precedence relation
    \\
    Dependence & SCM            & agents information fields
    \\
    & functional relation    & policy profiles measurable \wrt\ information fields
    \\
    Resolution          & induction    & solution map     
    \\
    &   random variable         & policy profile composed with solution map   
    \\
    Intervention        & do operator    & change of information fields 
    \\
    Causal ordering     & fixed          & existence depends on agents information fields                                            
    \\ \bottomrule
  \end{tabular}
  \caption{Correspondences between Pearl's DAG and Witsenhausen's intrinsic model
    \label{tab:Pearl_Witsenhausen}}
\end{table}

\begin{definition}(Adapted from \citep{Witsenhausen:1971a,Witsenhausen:1975})
  \label{de:W-model}
  A \emph{W-model} is a collection
  $(\AGENT$,
  $\nseqa{\CONTROL_{\agent}, \tribu{\Control}_{\agent}}{\agent \in \AGENT},$
  $\np{\Omega, \tribu{\NatureField}},$
  $\nseqa{\tribu{\Information}_{\agent}}{\agent \in \AGENT})$,
  where 
  \begin{itemize}
  \item
    $\AGENT$ is a finite set, whose elements are called \emph{agents};
  \item 
    for any \( \agent \in \AGENT \), $\CONTROL_{\agent}$ is a  set, 
    the \emph{set of decisions} for agent~$\agent$;
    $\tribu{\Control}_{\agent}$ is a field over~$\CONTROL_{\agent}$;
  \item 
    \( \Omega \) is a set made of \emph{states of Nature};
    $\tribu{\NatureField}$ is a  field over~\( \Omega \);
  \item
    for any \( \agent \in \AGENT \), \( \tribu{\Information}_{\agent} \)
    is a subfield of the following product field
    \begin{equation}
      \tribu{\Information}_{\agent} \subset 
      \oproduit{\tribu{\NatureField}}{\bigotimes \limits_{\bgent \in \AGENT} \tribu{\Control}_{\bgent}}
      \eqsepv
      \forall \agent \in \AGENT 
      \label{eq:information_field_agent}
    \end{equation}
    and is called the \emph{information field} of the agent~$\agent$.
  \end{itemize}
\end{definition}
A \emph{countable W-model} is a W-model where all sets
\( \sequence{\CONTROL_{\agent}}{\agent \in \AGENT}\)
and $\Omega$ above are countable,
equipped with the complete $\sigma$-algebras. 

We recall that the \emph{configuration space}~$\HISTORY$ and
the \emph{configuration field}~$\tribu{\History}$ are 
\begin{equation}
  \label{eq:HISTORY}
  \HISTORY = \produit{\Omega}{ \prod \limits_{\agent \in \AGENT} \CONTROL_{\agent}}
  \eqsepv
  \tribu{\History} =\oproduit{\tribu{\NatureField}}{\bigotimes \limits_{\agent \in \AGENT}
  \tribu{\Control}_{\agent}} 
  \eqfinp 
\end{equation}
A  \emph{policy} 
of agent~$\agent \in \AGENT$ is a mapping
\begin{subequations}
  \begin{equation}
    \wstrategy_{\agent} : \np{\HISTORY,\tribu{\History}} \to
    \np{\CONTROL_{\agent},\tribu{\Control}_{\agent}}
    \mtext{ such that }
    \wstrategy_{\agent}^{-1} (\tribu{\Control}_{\agent})
    \subset \tribu{\Information}_{\agent} 
    \eqfinp    
  \end{equation}
  Hence, any policy~$\wstrategy_{\agent}$ is a mapping from configurations to decisions,
  which satisfies the measurability property~\eqref{eq:decision_rule}, that is,
  any policy of agent~$\agent$ may only depend upon the
  information~$\tribu{\Information}_{\agent}$ available to~$\agent$.
  We denote by $\WSTRATEGY_{\agent}$ the set of all policies of agent $\agent \in \AGENT$.
  A \emph{policy profile} 
  $\wstrategy$ is a collection 
  of policies, one per agent~$\agent \in \AGENT$:
  \begin{equation}
    \wstrategy = \sequence{\wstrategy_{\agent}}{\agent \in \AGENT} 
    \in \prod_{\agent \in \AGENT} \WSTRATEGY_{\agent}
    \eqsepv
    \mtext{where}\,
    \WSTRATEGY_{\agent} = \bset{ 
      \wstrategy_{\agent} : \np{\HISTORY,\tribu{\History}} \to
      \np{\CONTROL_{\agent},\tribu{\Control}_{\agent}} }%
    {  \wstrategy_{\agent}^{-1} (\tribu{\Control}_{\agent})
      \subset \tribu{\Information}_{\agent} } \eqsepv \forall \agent\in\AGENT 
    \eqfinp
    \label{eq:W-strategy_profile}
  \end{equation}
\end{subequations}

\subsection{Solvability and solution map}
\label{Solvability_and_solution_map}

With any policy profile $\wstrategy = \sequence{\wstrategy_{\agent}}{\agent \in \AGENT}
\in \prod_{\agent \in \AGENT} \WSTRATEGY_{\agent} $ we associate the set-valued
mapping
\begin{align}
  \SetValuedReducedSolutionMap_\wstrategy: \Omega
  & \rightrightarrows \prod_{\bgent \in \AGENT} \CONTROL_{\bgent} 
    \label{eq:SetValuedReducedSolutionMap}
    \eqsepv 
  \omega
  \mapsto \Bset{ \sequence{\control_\bgent}{\bgent \in \AGENT} \in 
  \prod_{\bgent \in \AGENT} \CONTROL_{\bgent} }{%
  \control_{\agent} = \wstrategy_{\agent}\bp{\omega,
  \sequence{\control_\bgent}{\bgent \in \AGENT} }
  \eqsepv \forall \agent \in \AGENT}
  \eqfinp
\end{align}

With this definition, we slightly reformulate below
how Witsenhausen introduced solvability.

\begin{definition}(\citep{Witsenhausen:1971a,Witsenhausen:1975})
  \label{de:solvability}
  \begin{subequations}
    The solvable \emph{(measurable)  property} holds true for the W-model of
    Definition~\ref{de:W-model} --- or the W-model is said to be
    \emph{(measurable) solvable} --- when,
    for any policy profile $\wstrategy = \sequence{\wstrategy_{\agent}}{\agent \in \AGENT}
    \in \prod_{\agent \in \AGENT} \WSTRATEGY_{\agent} $, 
    the set-valued mapping~$\SetValuedReducedSolutionMap_{\wstrategy}$
    in~\eqref{eq:SetValuedReducedSolutionMap}
    is a (measurable) mapping whose domain is $\Omega$, that is,
    the cardinality of $\SetValuedReducedSolutionMap_{\wstrategy}\np{\omega}$ 
    is equal to one, for any state of nature $\omega \in \Omega$.
We denote by SM the \emph{solvability measurability} property.

    Thus, under solvability property, for any state of nature $\omega \in \Omega$, 
    there exists one, and only one, decision profile
    $\sequence{\control_\bgent}{\bgent \in \AGENT} \in 
    \prod_{\bgent \in \AGENT} \CONTROL_{\bgent}$
    which is a solution of the \emph{closed-loop equations}
    \begin{equation}
      \control_{\agent} = \wstrategy_{\agent}\bp{\omega, \sequence{\control_\bgent}{\bgent \in \AGENT}}
      \eqsepv
      \forall \agent \in \AGENT
      \eqfinp
      \label{eq:solution_map_IFF}
    \end{equation}
    In case of solvability, we define the \emph{solution map} 
    \begin{equation}
      \SolutionMap_\wstrategy: \Omega \rightarrow \HISTORY
      \eqsepv 
      \SolutionMap_\wstrategy\np{\omega} = \bp{\omega,\ReducedSolutionMap_{\wstrategy}\np{\omega}}
      \label{eq:solution_map}
    \end{equation}
    where \( \ReducedSolutionMap_{\wstrategy}\np{\omega} \)
    is the unique element contained in the image set
    $\SetValuedReducedSolutionMap_{\wstrategy}\np{\omega}$ that is, 
    for all $\sequence{\control_\bgent}{\bgent \in \AGENT} \in 
    \prod_{\bgent \in \AGENT} \CONTROL_{\bgent}$, 
    $\ReducedSolutionMap_\wstrategy\np{\omega} = 
    \sequence{\control_\bgent}{\bgent \in \AGENT} \iff
    \SetValuedReducedSolutionMap_{\wstrategy}\np{\omega} = \na{\sequence{\control_\bgent}{\bgent \in \AGENT}}$.
  \end{subequations}
\end{definition}

Thus, when the solvability property holds true, for each state of Nature, 
there is a single family of decisions compatible with any given policy profile.
This family is the unique solution of the closed-loop
equations~\eqref{eq:solution_map_IFF}.
In some cases, these equations can be solved sequentially
(where the order may depend on the state of Nature, and on the given policy profile).
This is the case when causality holds true.

\subsection{Causality}
\label{Causality}

In his articles \citep{Witsenhausen:1971a,Witsenhausen:1975},
Witsenhausen introduces a notion of causality that relies on 
suitable configuration-orderings.
Here, we introduce our own notations, as they make possible a compact formulation 
of the causality property.

For any finite/countable set~\( \SET \), let \( \cardinality{\SET} \) denote the cardinality
of~\( \SET \). 
Thus, when \( \AGENT \) is finite, 
\( \numAGENT \) denotes the cardinality of the set~\( \AGENT \), that is,
\( \numAGENT \) is the number of agents.
For $k \in \ic{1,\numAGENT}$, let 
\(
\ORDER_{k}=\defset{ \kappa: \ic{1,k} \to \AGENT }%
{ \kappa \mtext{ is an injection} }
\) denote the set of $k$-orderings, that is, injective
mappings from $\ic{1,k}$ to $\AGENT$.
The set \( \ORDER_{\numAGENT} \) is the set of \emph{total orderings} of
agents in $\AGENT$, that is, bijective
mappings from $\ic{1,\numAGENT}$ to $\AGENT$
(in contrast with \emph{partial orderings} in~$\ORDER_{k}$ for $k < \numAGENT$). 
We define the \emph{set of orderings} by
\(
\ORDER= \bigcup_{ k \in \ic{0,\numAGENT}} \ORDER_{k} \),
where \( \ORDER_{0} = \{ \emptyset \} \).
For any $k \in \ic{1,\numAGENT}$, any ordering $\kappa \in \ORDER_{k}$,
and any integer $\ell \le k$, \( \kappa_{\vert \ic{1,\ell}} \)
is the restriction of the ordering~$\kappa$ to the first $\ell$~integers,
and we introduce the mapping
\( \cut_k: \ORDER_{\numAGENT} \rightarrow \ORDER_{k} 
\eqsepv 
\totalordering \mapsto \totalordering_{\vert \ic{1,k} } \)
which yields the restriction of any total ordering of~$\AGENT$ 
to~$\ic{1,k}$.
For any \( k \in \ic{1,\numAGENT} \), and any $k$-ordering~$\kappa \in \ORDER_{k}$,
we define the \emph{range} $\range{\kappa}=\ba{ \kappa(1), \ldots, \kappa(k) }
\subset \AGENT$, 
the \emph{cardinality} $\cardinality{\kappa}=k \in \ic{1,\numAGENT}$,
the \emph{last element} $\LastElement{\kappa}=\kappa(k)\in \AGENT$,
and 
the \emph{restriction} $\FirstElements{\kappa}=
\kappa_{\vert \ic{1,k-1}} \in \ORDER_{k-1}$.
%


%
The following definition of causality
originates from \citep{Witsenhausen:1971a}.
In a causal W-model, there exists a configuration-ordering with the 
property that when an agent is called to play --- as he is the last one in an
ordering --- what he knows cannot depend on decisions made by agents
that are not his predecessors 
(in the range of the ordering under consideration).
For this purpose, we define, 
for any subset $\Bgent \subset \AGENT$ of agents:
\begin{subequations}
  \begin{align}
    \tribu{\Control}_\Bgent 
    &= 
      \bigotimes \limits_{\bgent \in \Bgent} \tribu{\Control}_\bgent
      \otimes
      \bigotimes \limits_{\agent \not\in \Bgent} \{ \emptyset, \CONTROL_{\agent} \}
      \subset
      \bigotimes \limits_{\agent \in \AGENT} \tribu{\Control}_{\agent}
      \eqfinv
      \label{eq:sub_control_field_Bgent}
    \\
    \tribu{\History}_\Bgent 
    &= 
      \oproduit{ \tribu{\NatureField} }{ \tribu{\Control}_\Bgent }
      = \oproduit{ \tribu{\NatureField} }{ \bigotimes \limits_{\bgent \in \Bgent} \tribu{\Control}_\bgent
      \otimes
      \bigotimes \limits_{\agent \not\in \Bgent} \{ \emptyset, \CONTROL_{\agent}
      \} }
      \subset \tribu{\History}
      \eqfinp
      \label{eq:sub_history_field_Bgent}
  \end{align}
\end{subequations}

\begin{definition}(\citep{Witsenhausen:1971a,Witsenhausen:1975})
  \label{de:causality}
  A countable W-model (as in Definition~\ref{de:W-model}) 
  is \emph{causal} if there exists (at least) one
  causal configuration-ordering $\ordering: \HISTORY \to \ORDER_{\numAGENT}$,
  that is, with the property that 
  \begin{equation}
    \label{eq:causality_a}
    \HISTORY_{\kappa}^{\ordering} \cap \History \in 
    \tribu{\History}_{\range{\FirstElements{\kappa}}}
    \eqsepv 
    \forall \History \in \tribu{\Information}_{\LastElement{\kappa}}
    \eqsepv 
    \forall \kappa \in \ORDER 
    \eqfinv
  \end{equation}
  where the subset~$\HISTORY_{\kappa}^{\ordering} \subset \HISTORY$ of
  configurations is defined by 
  \begin{equation}
    \HISTORY_{\emptyset}^{\ordering} = \HISTORY
    \mtext{ and }
    \HISTORY_{\kappa}^{\ordering} =
    \defset{\history \in \HISTORY}{\psi_{\cardinality{\kappa}}\bp{\ordering(\history)} =\kappa}  
    \eqsepv \forall \kappa \in \ORDER
    \eqfinp
    \label{eq:HISTORY_k_kappa}
  \end{equation}
\end{definition}
The set~$\HISTORY_{\kappa}^{\ordering}$ contains all the configurations 
for which the agent~\( \kappa(1) \) is acting first, 
the agent~\( \kappa(2) \) is acting second, \ldots, till 
the last agent~\( \LastElement{\kappa}=\kappa(\cardinality{\kappa}) \) acting at
stage~$\cardinality{\kappa}$.
Hence, otherwise said, causality means that, once we know the first $\cardinality{\kappa}$~agents, 
the information of the agent~$\LastElement{\kappa}$ depends at most
on the decisions of the agents in the range~$\range{\FirstElements{\kappa}}$,
as represented by the subfield (see Equation~\eqref{eq:sub_history_field_Bgent})
\begin{equation}
  \tribu{\History}_{\range{\FirstElements{\kappa}}}
  =
  \tribu{\NatureField} \otimes 
  \bigotimes \limits_{\agent \in \range{\FirstElements{\kappa}}} \tribu{\Control}_{\agent}
  \otimes
  \bigotimes \limits_{\bgent \not\in \range{\FirstElements{\kappa}} } \{
  \emptyset, \CONTROL_\bgent \}
  \subset \tribu{\History}
  \eqfinp
  \label{eq:causality_b}
\end{equation}

In \citep{Witsenhausen:1971a}, Witsenhausen proves that causal W-models are
 solvable measurable (SM). 
\textbf{The reverse is false}:
in~\cite[Theorem~2]{Witsenhausen:1971a}, Witsenhausen exhibits an example of
noncausal W-model that is solvable.

\subsection{Cycles and well-posedness}
\label{Cycles_and_well-posedness}

It is notable that the framework we develop makes it possible to deal with
systems with cycles. Such systems can be usefull for modeling purpose.  For instance, a cyclic SCM can arise as an equilibrium
state of random differential equations~\cite{bongers2018random}.  The foundations
for structural models with cycles, which are laid out
in~\cite{bongers2020foundations} show that 
the existence of cycles raises well-posedness questions.  In
particular, for such   SCMs, does the
system of equations~\eqref{eq:SCM} has a solution? is it unique? 
Witsenhausen introduces a hierachy of systems that we summarize  in
Figure~\ref{convergence1} (in this hierarchy a DAG corresponds to what Witsenhausen called
a sequential system). 
For our purpose, this hierarchy 
could be qualified as ``too strong'', because it requires
the system to admit a unique solution for all policy profiles.
The solvable-measurable (SM) property in Definition~\ref{de:solvability}
however can be relaxed as soon as we know more about the assignement mappings
under scrutiny.  For instance,
in
\cite{Andersland-Teneketzis:1992} a property --- deadlock-freeness --- weaker
than causality but stronger than SM is identified. Then in 
\cite{Andersland-Teneketzis:1994}  a relaxation of deadlock-freeness
is discussed: instead of imposing a constraint that need to be
satisfied for every admissible policy, the authors propose to put the
strain only on a policy of interest.  We leave for further work a
comparison between how well-posedness is handled
in~\cite{bongers2020foundations} and in the present work.
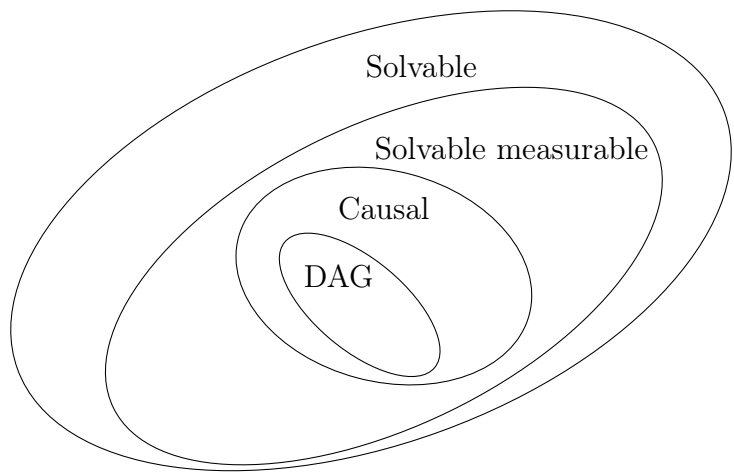
\begin{figure}[h]
  \centering
  \begin{tikzpicture}
    \draw[rotate=20] (0,0.5) ellipse (5cm and 2.7cm);	
    \draw[rotate=26] (0,0) ellipse (4cm and 2cm);	
    \draw[rotate=-15] (0,0) ellipse (2cm and 1.4cm);	
    \draw[rotate=50] (-0.5,0) ellipse (0.6cm and 1.3cm);	
    \node at (-0.6,0) {DAG};
    \node at (0,0.9) {Causal};
    \node at (1.7,1.7) {Solvable measurable};
    \node at (0.5,2.8) {Solvable};
  \end{tikzpicture}
  \caption{Hierarchy of systems\label{convergence1}}
\end{figure}

\section{Formal definitions and proofs}
\label{sec:formal-proof}

In this Sect.~\ref{sec:formal-proof}, we provide formal definitions and proofs.
In~\S\ref{subsec:conditonal-parentality},
we define conditional parentality and topological separation.
We prove that topological separation implies factorization
in~\S\ref{Topological_separation_implies_factorization}.
In~\S\ref{Nonrecursive_solvability_implies_conditional_independence},
we provides tools to study conditional independence in
the presence of nonrecursive systems.
In~\S\ref{subseq:discrete-continuous},
we discuss the impact of the discrete \emph{versus} continuous settings on
conditional independence.
In~\S\ref{Topological_separation_implies_conditional_independence},
we show that topological separation implies conditional independence.
In~\S\ref{Topological_separation_implies_the_do-calculus},
we show that the topological separation allows to define an
alternative (and equivalent) do-calculus.
\medskip

We provide background on binary relations.
We recall that a \emph{(binary) relation}~$\relation$ on~$\AGENT$ is
a subset $\relation \subset \AGENT^2 $ and that 
\( \agent\, \relation\, \bgent \) means 
\( \np{\agent,\bgent} \in \relation \).
For any subset \( \Bgent \subset \AGENT \), 
the \emph{(sub)diagonal relation} is \( \Delta_{\Bgent} = \bset{ \np{\agent,\bgent} \in \AGENT^2 }%
{ \agent=\bgent \in \Bgent } \)
and the \emph{diagonal relation} is \( \Delta=\Delta_{\AGENT} \).
A \emph{foreset} of a relation~$\relation$ is
any set of the form \( \relation \, \bgent = 
\defset{ \agent \in  \AGENT }{ \agent\, \relation \, \bgent } \),
where \( \bgent \in \AGENT \), 
or, by extension, of the form \( \relation \, \Bgent = 
\defset{ \agent \in  \AGENT }{ \exists \bgent \in \Bgent \eqsepv \agent\,
  \relation \, \bgent } \), where \( \Bgent \subset \AGENT \).
The \emph{opposite} or \emph{complementary~$\Complementary{\relation}$} of a binary
relation~$\relation$ is the relation~$\Complementary{\relation}=\AGENT^2\setminus\relation$,
that is, defined by \( \agent\, \relation^{\mathsf{c}} \, \bgent \iff 
\neg \np{ \agent\, \relation \, \bgent } \).
The \emph{converse~$\Converse{\relation}$} of a binary relation~$\relation$ is
defined by \( \agent\, \Converse{\relation} \, \bgent \iff \bgent\, \relation \, \agent
\) (and $\relation$ is  \emph{symmetric} if \( \Converse{\relation}=\relation \)).
The \emph{composition} $\relation\relation'$ of two
binary relations~$\relation, \relation'$ on~$\AGENT$ is defined by
\( \agent (\relation\relation') \bgent \iff
\exists \delta \in  \AGENT \), \( \agent\, \relation \, \delta \) and \( \delta\, \relation'
\, \bgent \);
then, by induction we define
\( \relation^{k+1}=\relation\relation^{k} \) for \( k \in \NN^* \). 
The \emph{transitive closure} of a binary relation~$\relation$ is
\( \TransitiveClosure{\relation} = \cup_{k=1}^{\infty} \relation^{k} \)
(and $\relation$ is  \emph{transitive} if \( \TransitiveClosure{\relation}=\relation \))
and the \emph{reflexive and transitive closure} is 
\( \TransitiveReflexiveClosure{\relation}= \TransitiveClosure{\relation} \cup \Delta \).

\subsection{Conditional parentality and topological separation}
\label{subsec:conditonal-parentality}

We now formally define the conditional parental relation,
provide properties and then deduce a topology 
on the set~\( \AGENT \) of agents. It is this topology which is implicit in
the Definition~\ref{de:topologically_separated} of topological separation.

\begin{definition}
  \label{de:conditional_parental_relation}
  Let $\HistorySubset\subset \HISTORY$ be a subset of configurations,
  and $\AgentSubsetW\subset\AGENT$ be a subset of agents.
  We set
  \begin{subequations}
    \label{eq:conditional_parental_relation}
    \begin{equation}
      \PrecedenceWH \agent = 
      \bigcap_{\Bgent\subset\AGENT; \tribu{\Information}_{\agent}\cap \HistorySubset
 \subset \tribu{\History}_{\Bgent\cup \AgentSubsetW}\cap \HistorySubset}
\!\!\!\!\!\!\!\!\!\!\!\!\!\!\Bgent
   \eqsepv \forall \agent \in \AGENT
      \eqfinv
      \label{eq:conditional_parental_relation_a}
    \end{equation}
    and we define the 
    (conditional) \emph{parental relation}~\( \PrecedenceWH \) on~$\AGENT$
    (\wrt\ $\np{\AgentSubsetW,\HistorySubset}$) 
    by
    \begin{equation}
      \bgent \mathrel{\PrecedenceWH} \agent 
      \iff 
      \bgent \in \PrecedenceWH \agent 
      \eqsepv \forall \np{\agent,\bgent} \in \AGENT^2 
      \eqfinp 
    \end{equation}  
  \end{subequations}
  We call (conditional) \emph{ancestral relation}
  (\wrt\ $\np{\AgentSubsetW,\HistorySubset}$) 
  the transitive and reflexive closure~$\ConditionalAncestor$  of 
  the conditional parental relation~\( \PrecedenceWH \),
  that is,
  \begin{equation}
    \ConditionalAncestor=
    \Delta\cup  \TransitiveClosure{\PrecedenceWH}
    = 
    \Delta\cup \bigcup_{k=1}^{\infty} \PrecedenceWH^{k} 
    \eqfinp
    \label{eq:ancestral_relation}
  \end{equation}
\end{definition}
Thus, when \( \bgent \mathrel{\PrecedenceWH} \agent \),
it means, by~\eqref{eq:conditional_parental_relation},
that the information available to agent~$\agent$,
on the subset~$\HistorySubset\subset \HISTORY$ of configurations, 
involves the decisions of the agent~$\bgent$ and, possibly 
of the agents in~$\AgentSubsetW$.
Witsenhausen's precedence relation~$\Precedence$
in~\eqref{eq:precedence_relation} is the special 
case~\( \PrecedenceEmptyHISTORY \).
\begin{proposition}
  We have that    
  \begin{equation}
    \PrecedenceWH = \Delta_{\Complementary{\AgentSubsetW}} \PrecedenceEmptyH 
    \eqfinp
    \label{eq:Precedence_PrecedenceWH}
  \end{equation}
\end{proposition}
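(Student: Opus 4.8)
The plan is to rewrite the relational identity~\eqref{eq:Precedence_PrecedenceWH} as a family of equalities between foresets, one per agent, and then to establish each of them by elementary bookkeeping on the index sets appearing in Definition~\ref{de:conditional_parental_relation}. First I would unwind the composition with the subdiagonal: since $\Delta_{\Complementary{\AgentSubsetW}}$ relates $\bgent$ to $\delta$ exactly when $\bgent=\delta\in\Complementary{\AgentSubsetW}$, one gets, for every $\agent\in\AGENT$, $\bp{\Delta_{\Complementary{\AgentSubsetW}}\PrecedenceEmptyH}\agent=\bset{\bgent\in\AGENT}{\bgent\notin\AgentSubsetW\eqsepv \bgent\in\PrecedenceEmptyH\agent}=\np{\PrecedenceEmptyH\agent}\setminus\AgentSubsetW$. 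Comparing foresets, the identity~\eqref{eq:Precedence_PrecedenceWH} is therefore equivalent to
\[
\PrecedenceWH\agent = \np{\PrecedenceEmptyH\agent}\setminus\AgentSubsetW \eqsepv \forall\agent\in\AGENT \eqfinv
\]
where $\PrecedenceEmptyH$ is the parental relation conditioned on $\np{\emptyset,\HistorySubset}$, that is, the $\AgentSubsetW=\emptyset$ instance of~\eqref{eq:conditional_parental_relation_a}.

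Fix $\agent\in\AGENT$ and denote by $\mathfrak C$ (resp.\ $\mathfrak C_{\AgentSubsetW}$) the collection of subsets $\Bgent\subset\AGENT$ satisfying $\tribu{\Information}_\agent\cap\HistorySubset\subset\tribu{\History}_{\Bgent}\cap\HistorySubset$ (resp.\ $\tribu{\Information}_\agent\cap\HistorySubset\subset\tribu{\History}_{\Bgent\cup\AgentSubsetW}\cap\HistorySubset$), so that $\PrecedenceEmptyH\agent=\bigcap_{\Bgent\in\mathfrak C}\Bgent$ and $\PrecedenceWH\agent=\bigcap_{\Bgent\in\mathfrak C_{\AgentSubsetW}}\Bgent$. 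The proof rests on two observations: (i) the condition defining $\mathfrak C_{\AgentSubsetW}$ depends on $\Bgent$ only through $\Bgent\cup\AgentSubsetW$, hence $\Bgent\in\mathfrak C_{\AgentSubsetW}\iff\Bgent\cup\AgentSubsetW\in\mathfrak C$; and (ii) by~\eqref{eq:sub_history_field_Bgent} the map $\Bgent\mapsto\tribu{\History}_\Bgent$ is nondecreasing for inclusion, so $\mathfrak C$ is \emph{upward closed}.

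For the inclusion ``$\subset$'': if $\Bgent\in\mathfrak C$, then $\np{\Bgent\setminus\AgentSubsetW}\cup\AgentSubsetW=\Bgent\cup\AgentSubsetW\in\mathfrak C$ by~(ii), so $\Bgent\setminus\AgentSubsetW\in\mathfrak C_{\AgentSubsetW}$ by~(i); intersecting over $\Bgent\in\mathfrak C$ and using $\bigcap_{\Bgent\in\mathfrak C}\np{\Bgent\setminus\AgentSubsetW}=\bp{\bigcap_{\Bgent\in\mathfrak C}\Bgent}\setminus\AgentSubsetW$ gives $\PrecedenceWH\agent\subset\np{\PrecedenceEmptyH\agent}\setminus\AgentSubsetW$. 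For the inclusion ``$\supset$'': for every $\Bgent\in\mathfrak C_{\AgentSubsetW}$, (i) gives $\Bgent\cup\AgentSubsetW\in\mathfrak C$, hence $\PrecedenceEmptyH\agent\subset\Bgent\cup\AgentSubsetW$, and therefore $\np{\PrecedenceEmptyH\agent}\setminus\AgentSubsetW\subset\np{\Bgent\cup\AgentSubsetW}\setminus\AgentSubsetW\subset\Bgent$; intersecting over $\Bgent\in\mathfrak C_{\AgentSubsetW}$ yields $\np{\PrecedenceEmptyH\agent}\setminus\AgentSubsetW\subset\PrecedenceWH\agent$, which closes the argument.

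There is no genuine obstacle here; the one point to watch is the temptation to route the ``$\supset$'' step through the (true, but here unnecessary) fact that $\PrecedenceEmptyH\agent$ itself belongs to $\mathfrak C$, which would require the meet identity $\tribu{\History}_{\Bgent}\cap\tribu{\History}_{\Bgent'}=\tribu{\History}_{\Bgent\cap\Bgent'}$. Handling one $\Bgent\in\mathfrak C_{\AgentSubsetW}$ at a time, as above, sidesteps that lemma and keeps the proof to pure set manipulation.
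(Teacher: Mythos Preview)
Your proof is correct and follows essentially the same route as the paper's: both arguments introduce the collections $\mathfrak C=\Gamma_{\agent}$ and $\mathfrak C_{\AgentSubsetW}=\Gamma_{\agent,\AgentSubsetW}$, exploit that $\Bgent\in\mathfrak C_{\AgentSubsetW}\iff\Bgent\cup\AgentSubsetW\in\mathfrak C$ and that $\mathfrak C$ is upward closed, and obtain the two inclusions by the same set-arithmetic. Your organisation is slightly leaner (the paper isolates $\PrecedenceWH\agent\subset\Complementary{\AgentSubsetW}$ as a separate preliminary, whereas in your argument this comes for free from the ``$\subset$'' direction), but the underlying mechanism is identical.
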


\begin{proof}
  Let $\agent \in \AGENT$ be a given agent.
  We introduce the two subsets of agents defined by 
  $\Gamma_{\agent} =
  \bset{\Bgent \subset \AGENT}{ \tribu{\Information}_{\agent}\cap \HistorySubset
    \subset \tribu{\History}_{\Bgent}  \cap \HistorySubset}$ and 
  $\Gamma_{\agent,\AgentSubsetW}= \bset{\Bgent\subset\AGENT}{
    \tribu{\Information}_{\agent}
    \cap \HistorySubset \subset \tribu{\History}_{\Bgent\cup \AgentSubsetW}
  \cap \HistorySubset} $.
  Then, the two subsets $\PrecedenceEmptyH \agent$ and $\PrecedenceWH \agent$ read as 
  $\PrecedenceEmptyH \agent = \bigcap_{\Bgent \in \Gamma_{\agent}} \Bgent$
  and 
  $\PrecedenceWH \agent=\bigcap_{\Bgent \in \Gamma_{\agent,\AgentSubsetW}} \Bgent$.

  
  
  As a preliminary result we prove that,
  for any $\agent \in \AGENT$, we have that
  $\PrecedenceWH \agent \subset \Complementary{\AgentSubsetW}$.
  Indeed, for a given agent $\agent \in \AGENT$, we have that $\tribu{\Information}_{\agent}\subset \tribu{\History}
  = \tribu{\History}_{\Complementary{\AgentSubsetW} \cup
    \AgentSubsetW}$. Thus, we obtain that
  $\tribu{\Information}_{\agent} \cap \HistorySubset\subset
  \tribu{\History}_{\Complementary{\AgentSubsetW} \cup \AgentSubsetW}  \cap \HistorySubset$ which gives that 
  $\Complementary{\AgentSubsetW} \in \Gamma_{\agent,\AgentSubsetW}$.
  Now, as $\PrecedenceWH \agent=\bigcap_{\Bgent \in \Gamma_{\agent,\AgentSubsetW}} \Bgent$, we conclude that
  $\PrecedenceWH \agent \subset \Complementary{\AgentSubsetW}$.

  Now, we establish two easy inclusions.
  First, for any $\Cgent \in \Gamma_{\agent,\AgentSubsetW}$, using the definitions of $\Gamma_{\agent,\AgentSubsetW}$ and
  $\Gamma_{\agent}$ we obtain that $\Cgent \cup \AgentSubsetW \in \Gamma_{\agent}$. Thus, we have the inclusion
  $
  \bset{ \Cgent \cup {\AgentSubsetW}}{\Cgent \in \Gamma_{\agent,\AgentSubsetW}}
  \subset \Gamma_{\agent}
  $ which gives 
  \begin{equation}
    { \bigcap_{\Bgent \in \Gamma_{\agent}} \Bgent}
    \subset 
    { \bigcap_{ \Cgent \in \Gamma_{\agent,\AgentSubsetW}} \bp{\Cgent \cup
        \AgentSubsetW}}
    \eqfinp
    \label{included-in-gamma-a}
  \end{equation}
  Second, for any $\Bgent \in \Gamma_{\agent}$, we have that
  $\tribu{\Information}_{\agent}\cap \HistorySubset
  \subset \tribu{\History}_{\Bgent}  \cap \HistorySubset
  \subset \tribu{\History}_{\Bgent\cup\AgentSubsetW }  \cap \HistorySubset
  = \tribu{\History}_{(\Bgent\cap \Complementary{\AgentSubsetW}) \cup\AgentSubsetW }  \cap \HistorySubset$ which gives that
  $\Bgent\cap \Complementary{\AgentSubsetW} \in\Gamma_{\agent,\AgentSubsetW}$. Thus we have that
  $ 
  \bset{ \Bgent \cap \Complementary{\AgentSubsetW}}{\Bgent \in \Gamma_{\agent}}
  \subset \Gamma_{\agent,\AgentSubsetW}
  $, which gives
  \begin{equation}
    \bigcap_{ \Cgent \in \Gamma_{\agent,\AgentSubsetW}} \Cgent
    \subset 
    \bigcap_{\Bgent \in \Gamma_{\agent}}\bp{\Bgent \cap  \Complementary{\AgentSubsetW}}
    \eqfinp
    \label{included-in-gamma-a-W}
  \end{equation}

  Now, we successively have
  \begin{align*}
    \Complementary{\AgentSubsetW} \cap \bp{\PrecedenceEmptyH \agent}
    &= \Complementary{\AgentSubsetW} \cap \Bp{ \bigcap_{\Bgent \in \Gamma_{\agent}} \Bgent}
      \tag{as \( \PrecedenceEmptyH \agent = \bigcap_{\Bgent \in \Gamma_{\agent}}
      \Bgent \)}
    \\
    &\subset \Complementary{\AgentSubsetW} \cap
      \Bp{ \bigcap_{ \Cgent \in \Gamma_{\agent,\AgentSubsetW}} \bp{\Cgent \cup
      \AgentSubsetW}}
      \tag{by~\eqref{included-in-gamma-a}}
    \\
    &=
      \Complementary{\AgentSubsetW} \cap
      \Bp{ \bp{ \bigcap_{C \in \Gamma_{\agent,\AgentSubsetW}} \Cgent } \cup
      \AgentSubsetW} 
    \\
    &=
      \Bp{ \Complementary{\AgentSubsetW} \cap
      \bp{ \bigcap_{C \in \Gamma_{\agent,\AgentSubsetW}} \Cgent } }
      \cup \Bp{ \Complementary{\AgentSubsetW} \cap       \AgentSubsetW}
    \\
    &=
      \Complementary{\AgentSubsetW} \cap \bp{\PrecedenceWH \agent}
      \tag{as \(    \PrecedenceWH \agent=\bigcap_{\Cgent \in
      \Gamma_{\agent,\AgentSubsetW}} \Cgent \)}
    \\
    &=
      \PrecedenceWH \agent \tag{by preliminary result $\PrecedenceWH \agent \subset\Complementary{\AgentSubsetW}$}
    \\
    &\subset
      \bigcap_{\Bgent \in \Gamma_{\agent}}\bp{\Bgent \cap  \Complementary{\AgentSubsetW}}
      \tag{by~\eqref{included-in-gamma-a-W} as \(    \PrecedenceWH \agent=\bigcap_{\Cgent \in
      \Gamma_{\agent,\AgentSubsetW}} \Cgent  \)}
    \\
    &= \Complementary{\AgentSubsetW} \cap \bp{\PrecedenceEmptyH \agent}
      \eqfinp
  \end{align*}

  Therefore, we have obtained that
  \( \Complementary{\AgentSubsetW} \cap \bp{\PrecedenceEmptyH \agent} 
  = \PrecedenceWH \agent \), that is,
  \( \Delta_{\Complementary{\AgentSubsetW} } \PrecedenceEmptyH = \PrecedenceWH \). 
\end{proof}

In \citep{Witsenhausen:1975}, Witsenhausen 
introduced a topology on the set~\( \AGENT \) of agents
related to the precedence relation~$\Precedence$
in~\eqref{eq:precedence_relation}.
Here, we extend his approach to the conditional parental relation~\(
\PrecedenceWH \) on~$\AGENT$.

\begin{proposition}
  \label{pr:conditional_topology}
  There exists a topology on the set~\( \AGENT \) of agents such that 
  the topological closure~\( \TopologicalClosure{\Bgent} \) 
  of a subset~\( \Bgent \subset \AGENT\) is the $\ConditionalAncestor$-foreset
  \begin{equation}
    \TopologicalClosure{\Bgent} = \ConditionalAncestor\Bgent 
    \eqfinp
    \label{eq:ClosedTopology}
  \end{equation}
  In this topology, the subset~\( \AgentSubsetW \) is open or,
  equivalently, the subset~\( \Complementary{\AgentSubsetW} \) is closed.
\end{proposition}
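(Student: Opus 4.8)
The plan is to recover the topology from a closure operator and invoke the Kuratowski closure axioms. I would set, for $\Bgent \subseteq \AGENT$, $\operatorname{cl}\Bgent = \ConditionalAncestor\Bgent$, the foreset of $\Bgent$ under the ancestral relation $\ConditionalAncestor = \Delta \cup \TransitiveClosure{\PrecedenceWH}$ of Definition~\ref{de:conditional_parental_relation}, and check the four axioms. The axiom $\operatorname{cl}\emptyset = \emptyset$ is immediate, since the foreset of the empty set under any relation is empty. The axiom $\Bgent \subseteq \operatorname{cl}\Bgent$ expresses reflexivity, i.e. $\Delta \subseteq \ConditionalAncestor$. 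The axiom $\operatorname{cl}\np{\Bgent \cup \Cgent} = \operatorname{cl}\Bgent \cup \operatorname{cl}\Cgent$ is the elementary identity $\relation\np{\Bgent\cup\Cgent} = \relation\Bgent \cup \relation\Cgent$, valid for the foreset of an arbitrary binary relation~$\relation$. Finally, idempotence $\operatorname{cl}\bp{\operatorname{cl}\Bgent} = \operatorname{cl}\Bgent$ amounts to $\ConditionalAncestor\np{\ConditionalAncestor\Bgent} = \bp{\ConditionalAncestor\ConditionalAncestor}\Bgent = \ConditionalAncestor\Bgent$, which holds because $\ConditionalAncestor$, being the reflexive and transitive closure of $\PrecedenceWH$, is idempotent under composition: transitivity gives $\ConditionalAncestor\ConditionalAncestor \subseteq \ConditionalAncestor$, and reflexivity gives the reverse inclusion. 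By Kuratowski's theorem there is then a unique topology on~$\AGENT$ whose closure operator is $\Bgent \mapsto \ConditionalAncestor\Bgent$, which is exactly~\eqref{eq:ClosedTopology}. Unwinding, its closed sets are precisely the subsets $\ClosedSet \subseteq \AGENT$ that are stable under taking predecessors, i.e. such that $\PrecedenceWH\ClosedSet \subseteq \ClosedSet$ (the inclusions $\PrecedenceWH^k\ClosedSet \subseteq \ClosedSet$ for $k \geq 2$ follow by induction).

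For the second assertion I would use this description of closed sets together with the preliminary step already established in the proof of~\eqref{eq:Precedence_PrecedenceWH}, namely $\PrecedenceWH\agent \subseteq \Complementary{\AgentSubsetW}$ for every $\agent \in \AGENT$. Since a foreset is a union of pointwise foresets, this gives $\PrecedenceWH\Bgent = \bigcup_{\agent \in \Bgent} \PrecedenceWH\agent \subseteq \Complementary{\AgentSubsetW}$ for \emph{every} subset $\Bgent \subseteq \AGENT$; in particular $\PrecedenceWH\Complementary{\AgentSubsetW} \subseteq \Complementary{\AgentSubsetW}$, so $\Complementary{\AgentSubsetW}$ satisfies the stability criterion and is therefore closed, equivalently $\AgentSubsetW$ is open. (Iterating the inclusion one even gets $\ConditionalAncestor\Complementary{\AgentSubsetW} = \Complementary{\AgentSubsetW}$, i.e. $\TopologicalClosure{\Complementary{\AgentSubsetW}} = \Complementary{\AgentSubsetW}$ directly from~\eqref{eq:ClosedTopology}.)

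I do not expect a genuine obstacle here: the argument mirrors Witsenhausen's construction of the topology attached to the unconditional precedence relation in~\citep{Witsenhausen:1975}. The only care required is bookkeeping — phrasing the Kuratowski axioms in terms of foresets rather than sets of points, and remembering that a reflexive transitive relation is idempotent for composition so that its foreset operator is idempotent — and making sure the direction of the foreset (predecessors, not successors) matches the closure described in Definition~\ref{de:topologically_separated}.
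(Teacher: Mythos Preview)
Your proof is correct and takes a somewhat different route from the paper's. The paper starts from the other end: it \emph{defines} the family of closed sets as $\{\Cgent \subset \AGENT : \ConditionalAncestor\Cgent \subset \Cgent\}$, checks that this family contains $\emptyset$ and $\AGENT$ and is stable under arbitrary unions and intersections (so it is in fact an Alexandrov topology), and only then computes the closure operator and verifies that it equals $\ConditionalAncestor\Bgent$. Your Kuratowski approach is more economical for the statement as written, since the closure formula~\eqref{eq:ClosedTopology} is your starting datum rather than a derived consequence; the paper's approach has the advantage of recording the Alexandrov property explicitly, which your argument also implies (once closed sets are identified with $\PrecedenceWH$-stable sets, stability under arbitrary unions and intersections is immediate) but does not spell out. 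For the openness of~$\AgentSubsetW$, both arguments are the same, resting on $\PrecedenceWH\agent \subset \Complementary{\AgentSubsetW}$ for every~$\agent$.
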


  We refer the reader to the companion
paper~\cite{De-Lara-Chancelier-Heymann-2021}
for additional material on the aforementioned topology.
For the sake of completeness, we give a proof.
We mention that $\TopologicalClosure{\Bgent}$ is the smallest subset that
contains $\Bgent$ such that \(
\PrecedenceWH \TopologicalClosure{\Bgent} \subset  \TopologicalClosure{\Bgent}\).

\begin{proof}
  We define \( \ClosedTopologyWH = \bset{ \Cgent \subset \AGENT}{%
    \ConditionalAncestor\Cgent \subset \Cgent } \)
  and we show that the elements of~\( \ClosedTopologyWH \)
  form the closed sets of a topology on the set~\( \AGENT\).
  In fact, we are going to prove the stronger property that the set~\(
  \ClosedTopologyWH \) is an Alexandrov topology: it
  contains both \( \emptyset, \AGENT \) and is stable under 
  the union and intersection operations (not necessarily finite).
  Indeed, both \( \emptyset, \AGENT \in \ClosedTopologyWH \) as 
  \( \ConditionalAncestor\emptyset=\emptyset\) and
  \( \ConditionalAncestor\AGENT \subset \AGENT \).
  Moreover, let \( \sequence{\Bgent_\scenario}{\scenario\in\SCENARIO} \) 
  be family of elements \( \Bgent_\scenario \in \ClosedTopologyWH \), that is,
  \( \ConditionalAncestor\Bgent_\scenario \subset \Bgent_\scenario \),
  for all \( \scenario\in\SCENARIO \).
  We have \( \ConditionalAncestor\np{\bigcup_{\scenario\in\SCENARIO}\Bgent_\scenario }
  = \bigcup_{\scenario\in\SCENARIO}\ConditionalAncestor\Bgent_\scenario
  \subset \bigcup_{\scenario\in\SCENARIO}\Bgent_\scenario \), hence stability by union.
  We have \( \ConditionalAncestor\np{\bigcap_{\scenario\in\SCENARIO}\Bgent_\scenario }
  \subset \bigcap_{\scenario\in\SCENARIO}\ConditionalAncestor\Bgent_\scenario
  \subset \bigcap_{\scenario\in\SCENARIO}\Bgent_\scenario  \), hence stability by intersection.
  Thus, we have shown that \( \ClosedTopologyWH \) is an Alexandrov topology.

  We prove Equation~\eqref{eq:ClosedTopology}.
  By definition of the \( \ClosedTopologyWH \) topology, a subset~\( \Bgent \subset \AGENT\) is closed
  iff \( \ConditionalAncestor\Bgent \subset \Bgent\). 
  This is also equivalent to \( \ConditionalAncestor\Bgent = \Bgent\)
  because \( \Bgent \subset \ConditionalAncestor\Bgent \) since 
  the relation~\( \ConditionalAncestor = 
  \npTransitiveClosure{\PrecedenceWH}\cup\Delta \) is reflexive.
  Now, we consider a subset~\( \Bgent \subset \AGENT\) and we characterize 
  its topological closure~\( \TopologicalClosure{\Bgent} \), the smallest closed subset 
  that contains~\( \Bgent \). 
  On the one hand, we have that \( \Bgent \subset \ConditionalAncestor\Bgent \) 
  because the relation~\( \ConditionalAncestor = 
  \npTransitiveClosure{\PrecedenceWH}\cup\Delta \) is reflexive.
  On the other hand, the set \( \ConditionalAncestor\Bgent \) 
  is closed since 
  \( \ConditionalAncestor\np{\ConditionalAncestor\Bgent}=
  \np{\ConditionalAncestor}^2\Bgent
  \subset \ConditionalAncestor\Bgent\), because the relation~\( \ConditionalAncestor = 
  \npTransitiveClosure{\PrecedenceWH}\cup\Delta \) is transitive.
  By definition of the topological closure~\( \TopologicalClosure{\Bgent}\)
  we deduce that \( \TopologicalClosure{\Bgent} \subset \ConditionalAncestor\Bgent \).
  Now, we prove the reverse inclusion. Let \( \Cgent \subset \AGENT \) be a closed subset such that 
  \( \Bgent \subset \Cgent \),  we necessarily have that 
  \( \ConditionalAncestor\Bgent  \subset \ConditionalAncestor\Cgent \subset \Cgent\) and thus
  \( \ConditionalAncestor\Bgent \subset \Cgent\).
  Now, considering the special case where $C=\TopologicalClosure{\Bgent}$ which is a closed subset containing
  $\Bgent$ we obtain that \( \ConditionalAncestor\Bgent  \subset\TopologicalClosure{\Bgent} \).
  We conclude that \( \ConditionalAncestor\Bgent= \TopologicalClosure{\Bgent} \).
    
  The subset~\( \AgentSubsetW \) is open because its complementary
  set~\( \Complementary{\AgentSubsetW} \) satisfies
  \( \ConditionalAncestor\Complementary{\AgentSubsetW} =
  \npTransitiveClosure{\PrecedenceWH}\Complementary{\AgentSubsetW} \cup
  \Complementary{\AgentSubsetW} \subset \Complementary{\AgentSubsetW} \), as
  \( \PrecedenceWH\, \AGENT \subset \Complementary{\AgentSubsetW} \) because
  \( \PrecedenceWH = \Delta_{\Complementary{\AgentSubsetW}} \PrecedenceEmptyH \)
  by~\eqref{eq:Precedence_PrecedenceWH} and by definition of the subdiagonal
  relation~$\Delta_{\Complementary{\AgentSubsetW}}$.

  This ends the proof.
\end{proof}

\subsection{Topological separation implies factorization}
\label{Topological_separation_implies_factorization}

From now on, as we deal for the first time with probability, 
we consider a countable W-model (as in Definition~\ref{de:W-model}), that is,
a W-model where all sets
$\AGENT$, \( \sequence{\CONTROL_{\agent}}{\agent \in \AGENT}\)
and $\Omega$ are countable,
equipped with the complete $\sigma$-algebras. 
Moreover, we suppose that the set~$\Omega$ of states of Nature, and its field~\(
\tribu{\NatureField} \) have the following product form:
\begin{equation}
  \Omega = \prod_{\agent \in \AGENT} \Omega_{\agent}
  \eqsepv
  \tribu{\NatureField} = \bigotimes_{\agent \in \AGENT} \tribu{\NatureField}_{\agent}
  \eqfinp
  \label{eq:states_of_Nature_product}
\end{equation}
For any nonempty subset $\Bgent \subset \AGENT$ of agents, we denote
\begin{subequations}
  \begin{align}
    \Omega_{\Bgent}
    &= \prod_{\bgent \in \Bgent} \Omega_{\bgent} 
      \eqsepv
      \tribu{\NatureField}_{\Bgent} = \bigotimes_{\bgent \in \Bgent}\tribu{\NatureField}_{\bgent}
    \\      
    \history_\Bgent 
    &=
      \sequence{\history_\bgent}{\bgent \in \Bgent}
      \in \prod \limits_{\bgent \in \Bgent} \CONTROL_\bgent
      \eqsepv \forall \history \in \HISTORY=\produit{\Omega}{\prod\limits_{\agent \in \AGENT} \CONTROL_\agent}
      \eqfinv
      \label{eq:sub_history_Bgent}
    \\
    \projection_\Bgent &:
                         \HISTORY \to \prod\limits_{\bgent \in \Bgent} \CONTROL_\bgent
                         \eqsepv
                         \history \mapsto \history_\Bgent
                         \eqfinv
                         \label{eq:projection_Bgent}
    \\    
    \wstrategy_\Bgent 
    &=
      \sequence{\wstrategy_\bgent}{\bgent \in \Bgent}
      \in \prod \limits_{\bgent \in \Bgent} \WSTRATEGY_{\bgent} 
      \eqsepv \forall \wstrategy \in \WSTRATEGY
      \eqfinp 
      \label{eq:sub_wstrategy_Bgent}
  \end{align}
  %
\end{subequations}

We are now going to show how conditional topological separation induces a factorization 
of the solution map (see Definition~\ref{de:solvability}). 
\begin{lemma}
  \label{th:decomposition}
  We consider a solvable countable W-model, 
  where the set~$\Omega$ of states of Nature has the product
  form~\eqref{eq:states_of_Nature_product}, 
  where each information field~\( \tribu{\Information}_{\agent} \)
  in~\eqref{eq:information_field_agent} is such that 
  \begin{equation}
    \tribu{\Information}_{\agent} \subset \produit{ \tribu{\NatureField}_{\agent} \otimes
    \bigotimes\limits_{\bgent \neq \agent} \na{ \emptyset, \Omega_\bgent}
}{
    \bigotimes \limits_{\cgent \in \AGENT} \tribu{\Control}_{\cgent} }
    \eqsepv
    \forall \agent \in \AGENT
    \eqfinp 
    \label{eq:agents-local-noises}
  \end{equation} 
  We also consider a policy profile $\wstrategy = \sequence{\wstrategy_{\agent}}{\agent \in \AGENT}
  \in \prod_{\agent \in \AGENT} \WSTRATEGY_{\agent} $,
  a subset~$\HistorySubset\subset \HISTORY$ of configurations, 
  and $\AgentSubsetY$, $\AgentSubsetW$ and $\AgentSubsetZ$ 
  three subsets of $\AGENT$, two by two disjoint and such that
  (see Definition~\ref{de:topologically_separated} for the notation~\(
  \ConditionalTopologicalSeparation \))
  \begin{equation}
    \AgentSubsetY \ConditionalTopologicalSeparation \AgentSubsetZ \mid \np{\AgentSubsetW,\HistorySubset} 
    \eqfinp 
  \end{equation}
  Then, there exist five subsets 
  $\AgentSubsetY', \AgentSubsetZ'$,
  $\AgentSubsetW_\AgentSubsetY,\AgentSubsetW_\AgentSubsetZ$,
  $\residual$ $\subset \AGENT$
  such that 
  \begin{subequations}
    \begin{equation}
      \AGENT =  \widetilde{\AgentSubsetY} \sqcup \widetilde{\AgentSubsetZ}\sqcup
      \residual
      \quad \text{where} \quad
      \widetilde{\AgentSubsetY} = \AgentSubsetY \sqcup \AgentSubsetY'\sqcup
      \AgentSubsetW_\AgentSubsetY
      \eqsepv
      \widetilde{\AgentSubsetZ} = \AgentSubsetZ \sqcup \AgentSubsetZ'\sqcup
      \AgentSubsetW_\AgentSubsetZ
      \eqsepv
      \AgentSubsetW  = \AgentSubsetW_\AgentSubsetY\sqcup
      \AgentSubsetW_\AgentSubsetZ
      \eqfinv 
      \label{eq:five_subsets_proof}
    \end{equation}
    and there exist three measurable mappings (reduced solution maps)
    \begin{equation}
      \partialReducedSolutionMap_{\wstrategy_{\widetilde{\AgentSubsetY}}} :
      \Omega_{\widetilde{\AgentSubsetY}} 
      \times
      \CONTROL_{\AgentSubsetW_\AgentSubsetZ} 
      \to \CONTROL_{\widetilde{\AgentSubsetY}}
      \eqsepv 
      \partialReducedSolutionMap_{\wstrategy_{\widetilde{\AgentSubsetZ}}} :
      \Omega_{\widetilde{\AgentSubsetZ}} 
      \times
      \CONTROL_{\AgentSubsetW_\AgentSubsetY}
      \to \CONTROL_{\widetilde{\AgentSubsetZ}}
      \eqsepv 
      \partialReducedSolutionMap_{\wstrategy_{\residual}} :
      \Omega_{\residual} \times
      \CONTROL_{\widetilde{\AgentSubsetY}\cup\widetilde{\AgentSubsetZ}}
      \to \CONTROL_{\residual}
      \label{eq:three_mappings}
    \end{equation}
    such that the solution map $\SolutionMap_\wstrategy\np{\omega} = \bp{\omega,\ReducedSolutionMap_{\wstrategy}\np{\omega}}$
    in~\eqref{eq:solution_map} splits in three factors
    as follows:
    \( \forall \omega \in \SolutionMap_{\wstrategy}^{-1}(\HistorySubset) \), we have
    that 
    \begin{equation}
      \ReducedSolutionMap_\wstrategy\np{\omega} = \bgp{
        \partialReducedSolutionMap_{\wstrategy_{\widetilde{\AgentSubsetY}}}
        \Bp{ \omega_{\widetilde{\AgentSubsetY}}, 
          \wstrategy_{\AgentSubsetW_\AgentSubsetZ}\bp{\SolutionMap_{\wstrategy}\np{\omega}} },
        \partialReducedSolutionMap_{\wstrategy_{\widetilde{\AgentSubsetZ}}}
        \Bp{ \omega_{\widetilde{\AgentSubsetZ}}, 
          \wstrategy_{\AgentSubsetW_\AgentSubsetY}\bp{\SolutionMap_{\wstrategy}\np{\omega}} },
        \partialReducedSolutionMap_{\wstrategy_{\residual}}
        \Bp{ \omega_{\residual}, 
          \wstrategy_{\widetilde{\AgentSubsetY}\cup\widetilde{\AgentSubsetZ}}\bp{\SolutionMap_{\wstrategy}\np{\omega}} }
      }
      \eqfinv
      \label{eq:factorization}
    \end{equation}
    or, equivalently, with the notation~\eqref{eq:projection_Bgent},
    \begin{equation}
      \ReducedSolutionMap_\wstrategy\np{\omega} = \bgp{
        \partialReducedSolutionMap_{\wstrategy_{\widetilde{\AgentSubsetY}}}
        \Bp{ \omega_{\widetilde{\AgentSubsetY}}, 
          \projection_{\AgentSubsetW_\AgentSubsetZ}\bp{\SolutionMap_{\wstrategy}\np{\omega}} },
        \partialReducedSolutionMap_{\wstrategy_{\widetilde{\AgentSubsetZ}}}
        \Bp{ \omega_{\widetilde{\AgentSubsetZ}}, 
          \projection_{\AgentSubsetW_\AgentSubsetY}\bp{\SolutionMap_{\wstrategy}\np{\omega}} },
        \partialReducedSolutionMap_{\wstrategy_{\residual}}
        \Bp{ \omega_{\residual}, 
          \projection_{\widetilde{\AgentSubsetY}\cup\widetilde{\AgentSubsetZ}}\bp{\SolutionMap_{\wstrategy}\np{\omega}} }
      } \eqfinp 
      \label{eq:factorization_bis}
    \end{equation}
  \end{subequations}
More precisely, again with the notation~\eqref{eq:projection_Bgent},
Equation~\eqref{eq:factorization} has to be understood as 
\( \projection_{\widetilde{\AgentSubsetY}}\bp{ \ReducedSolutionMap_\wstrategy\np{\omega} } \)
\( = \) \( \partialReducedSolutionMap_{\wstrategy_{\widetilde{\AgentSubsetY}}}
\Bp{ \omega_{\widetilde{\AgentSubsetY}}, 
  \wstrategy_{\AgentSubsetW_\AgentSubsetZ}\bp{\SolutionMap_{\wstrategy}\np{\omega}}} \),
\( \projection_{\widetilde{\AgentSubsetZ}}\bp{ \ReducedSolutionMap_\wstrategy\np{\omega} } \)
\( = \) \( \partialReducedSolutionMap_{\wstrategy_{\widetilde{\AgentSubsetZ}}}
\Bp{ \omega_{\widetilde{\AgentSubsetZ}}, 
  \wstrategy_{\AgentSubsetW_\AgentSubsetY}\bp{\SolutionMap_{\wstrategy}\np{\omega}}} \), 
and
\( \projection_{\residual}\bp{ \ReducedSolutionMap_\wstrategy\np{\omega} } \)
\( = \)\\
\( \partialReducedSolutionMap_{\wstrategy_{\residual}}
\Bp{ \omega_{\residual}, 
  \wstrategy_{\widetilde{\AgentSubsetY}\cup\widetilde{\AgentSubsetZ}}
  \bp{\SolutionMap_{\wstrategy}\np{\omega}} } \). 
\end{lemma}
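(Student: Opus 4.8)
The plan is to leverage the topological-separation hypothesis to extract the splitting of $\AGENT$, and then to show that the closed-loop equations~\eqref{eq:solution_map_IFF} decouple along this splitting so that the solution map factors as claimed. First I would invoke Definition~\ref{de:topologically_separated}: from $\AgentSubsetY \ConditionalTopologicalSeparation \AgentSubsetZ \mid \np{\AgentSubsetW,\HistorySubset}$ we obtain a splitting $\AgentSubsetW = \AgentSubsetW_\AgentSubsetY \sqcup \AgentSubsetW_\AgentSubsetZ$ with $\TopologicalClosure{\AgentSubsetY \cup \AgentSubsetW_\AgentSubsetY} \cap \TopologicalClosure{\AgentSubsetZ \cup \AgentSubsetW_\AgentSubsetZ} = \emptyset$. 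I would then \emph{define} $\widetilde{\AgentSubsetY} = \TopologicalClosure{\AgentSubsetY \cup \AgentSubsetW_\AgentSubsetY}$ and $\widetilde{\AgentSubsetZ} = \TopologicalClosure{\AgentSubsetZ \cup \AgentSubsetW_\AgentSubsetZ}$, set $\residual = \AGENT \setminus (\widetilde{\AgentSubsetY} \cup \widetilde{\AgentSubsetZ})$, and let $\AgentSubsetY' = \widetilde{\AgentSubsetY} \setminus (\AgentSubsetY \cup \AgentSubsetW_\AgentSubsetY)$, $\AgentSubsetZ' = \widetilde{\AgentSubsetZ} \setminus (\AgentSubsetZ \cup \AgentSubsetW_\AgentSubsetZ)$. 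The disjointness of the two closures plus $\AgentSubsetW_\AgentSubsetY \cap \AgentSubsetW_\AgentSubsetZ = \emptyset$ gives~\eqref{eq:five_subsets_proof}; I should double-check that $\AgentSubsetY \subset \widetilde{\AgentSubsetY}$, $\AgentSubsetW_\AgentSubsetY \subset \widetilde{\AgentSubsetY}$, etc., which is immediate since a set is contained in its closure.

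Next I would establish the key structural fact: $\widetilde{\AgentSubsetY}$ and $\widetilde{\AgentSubsetZ}$ are \emph{closed} under $\PrecedenceWH$ (Proposition~\ref{pr:conditional_topology} gives $\TopologicalClosure{\Bgent} = \ConditionalAncestor\Bgent$ and that closed sets $\Cgent$ satisfy $\PrecedenceWH\Cgent \subset \Cgent$). The crucial translation from the conditional parental relation back to the information fields uses~\eqref{eq:conditional_parental_relation}: $\PrecedenceWH \agent \subset \Cgent$ means exactly that, \emph{restricted to $\HistorySubset$}, the field $\tribu{\Information}_{\agent}$ is contained in $\tribu{\History}_{\Cgent \cup \AgentSubsetW}$, i.e.\ the policy $\wstrategy_\agent$ depends (on $\HistorySubset$) only on Nature and on the decisions of agents in $\PrecedenceWH\agent \cup \AgentSubsetW \subset \Cgent \cup \AgentSubsetW$. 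Applying this with $\Cgent = \widetilde{\AgentSubsetY}$: for $\agent \in \widetilde{\AgentSubsetY}$ we have $\PrecedenceWH\agent \subset \widetilde{\AgentSubsetY}$, so on $\HistorySubset$ the policy $\wstrategy_\agent$ depends only on $\omega_\agent$ (by hypothesis~\eqref{eq:agents-local-noises}, its Nature-part is the private $\tribu{\NatureField}_\agent$), on the decisions of $\widetilde{\AgentSubsetY}$, and on the decisions of $\AgentSubsetW$. Since $\AgentSubsetW = \AgentSubsetW_\AgentSubsetY \sqcup \AgentSubsetW_\AgentSubsetZ$ and $\AgentSubsetW_\AgentSubsetY \subset \widetilde{\AgentSubsetY}$, the only ``external'' inputs are the decisions of $\AgentSubsetW_\AgentSubsetZ$. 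Symmetrically for $\widetilde{\AgentSubsetZ}$, and for $\residual$ the policies may depend on everything, but I only need the inputs $\widetilde{\AgentSubsetY} \cup \widetilde{\AgentSubsetZ}$ together with the private noise $\omega_\residual$ plus the decisions internal to $\residual$.

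Then I would construct the three reduced maps. Fix $\omega$ with $\SolutionMap_\wstrategy(\omega) \in \HistorySubset$. By solvability the full closed-loop system~\eqref{eq:solution_map_IFF} has a unique solution; I would argue that the subsystem indexed by $\widetilde{\AgentSubsetY}$, namely $\control_\agent = \wstrategy_\agent\bp{\omega, \control}$ for $\agent \in \widetilde{\AgentSubsetY}$, once we freeze the values $\control_{\AgentSubsetW_\AgentSubsetZ}$ coming from outside, is a closed-loop system \emph{purely in the variables $\control_{\widetilde{\AgentSubsetY}}$} (using the dependency reduction above), driven by $\omega_{\widetilde{\AgentSubsetY}}$ and the parameter $\control_{\AgentSubsetW_\AgentSubsetZ}$; its solution is unique because it is a sub-block of a globally solvable system. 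This defines $\partialReducedSolutionMap_{\wstrategy_{\widetilde{\AgentSubsetY}}}(\omega_{\widetilde{\AgentSubsetY}}, \control_{\AgentSubsetW_\AgentSubsetZ})$; measurability follows since it is obtained from the measurable global solution map by composition with coordinate projections/insertions. Do the same for $\widetilde{\AgentSubsetZ}$ and for $\residual$ (the latter parametrized by $\control_{\widetilde{\AgentSubsetY} \cup \widetilde{\AgentSubsetZ}}$), and plug $\control = \ReducedSolutionMap_\wstrategy(\omega)$ back in to get~\eqref{eq:factorization}.

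The main obstacle is the careful handling of the \emph{conditioning set $\HistorySubset$}: the dependency reductions only hold on $\HistorySubset$ (they are statements about trace fields), so the subsystem argument must be run only for $\omega \in \SolutionMap_\wstrategy^{-1}(\HistorySubset)$, and one must check that the value $\control_{\AgentSubsetW_\AgentSubsetZ}$ fed into $\partialReducedSolutionMap_{\wstrategy_{\widetilde{\AgentSubsetY}}}$ is well-defined there even though the reduced map itself should be defined on all of $\Omega_{\widetilde{\AgentSubsetY}} \times \CONTROL_{\AgentSubsetW_\AgentSubsetZ}$ --- the cleanest fix is to define the reduced maps off $\HistorySubset$ arbitrarily (e.g.\ by a fixed constant, exploiting countability so no measurability issue arises) and only claim the identity~\eqref{eq:factorization} on $\SolutionMap_\wstrategy^{-1}(\HistorySubset)$, which is exactly what the statement asks. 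A secondary subtlety is proving that a sub-block of a solvable system is itself uniquely solvable once the external inputs are frozen; here I would rely on countability (so set-valued-to-single-valued reasoning is elementary) and on the fact that any solution of the sub-block extends, via the remaining equations, to a solution of the whole, which must coincide with the unique global one.
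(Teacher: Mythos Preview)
Your proposal follows the paper's approach through the first three stages: extracting $\AgentSubsetW_\AgentSubsetY,\AgentSubsetW_\AgentSubsetZ$ and setting $\widetilde{\AgentSubsetY},\widetilde{\AgentSubsetZ},\residual$ from the topological separation; translating closedness of $\widetilde{\AgentSubsetY}$ under $\PrecedenceWH$ into the trace-field inclusion $\tribu{\Information}_{\widetilde{\AgentSubsetY}}\cap\HistorySubset\subset\tribu{\History}_{\widetilde{\AgentSubsetY}\cup\AgentSubsetW_\AgentSubsetZ}\cap\HistorySubset$ (and symmetrically for~$\widetilde{\AgentSubsetZ}$); and then combining this with~\eqref{eq:agents-local-noises} and a Doob-type factorization (what the paper isolates as Lemma~\ref{lemma:doob}) to obtain reduced policies $\overline{\wstrategy}_{\widetilde{\AgentSubsetY}},\overline{\wstrategy}_{\widetilde{\AgentSubsetZ}},\overline{\wstrategy}_\residual$ so that the closed-loop equations take the decoupled form~\eqref{eq:reduced_closed-loop_equations_proof}. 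This part matches the paper and is fine.

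The gap is in your last step, where you argue that the $\widetilde{\AgentSubsetY}$-sub-block has a unique solution (for the frozen input $\control_{\AgentSubsetW_\AgentSubsetZ}=\projection_{\AgentSubsetW_\AgentSubsetZ}\bp{\SolutionMap_\wstrategy(\omega)}$) because ``any solution of the sub-block extends, via the remaining equations, to a solution of the whole.'' This extension does not go through. Suppose $\control'_{\widetilde{\AgentSubsetY}}$ is a second solution of the $\widetilde{\AgentSubsetY}$-sub-block with the same frozen $\control_{\AgentSubsetW_\AgentSubsetZ}$. To promote it to a full fixed point at the same~$\omega$ you must satisfy the $\widetilde{\AgentSubsetZ}$-block $\control_{\widetilde{\AgentSubsetZ}}=\overline{\wstrategy}_{\widetilde{\AgentSubsetZ}}\bp{\omega_{\widetilde{\AgentSubsetZ}},\control_{\widetilde{\AgentSubsetZ}},\projection_{\AgentSubsetW_\AgentSubsetY}(\control'_{\widetilde{\AgentSubsetY}})}$, whose external input $\projection_{\AgentSubsetW_\AgentSubsetY}(\control'_{\widetilde{\AgentSubsetY}})$ has changed; nothing guarantees that this block is solvable for the new input, nor that its $\AgentSubsetW_\AgentSubsetZ$-component still matches the value you froze. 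Global solvability only asserts a unique fixed point of the \emph{full} system at each~$\omega$; it does not say that a sub-block, fed an arbitrary external parameter, has a solution. The paper avoids this by not proving uniqueness of sub-block solutions at a fixed~$\omega$: instead it varies the $(\omega_{\widetilde{\AgentSubsetZ}},\omega_\residual)$-coordinates, restricts to those $\control_{\widetilde{\AgentSubsetY}}$ that arise as $\projection_{\widetilde{\AgentSubsetY}}\bp{\SolutionMap_\wstrategy(\omega_{\widetilde{\AgentSubsetY}},\omega'_{\widetilde{\AgentSubsetZ}},\omega'_\residual)}$ for some~$\omega'$ with matching $\projection_{\AgentSubsetW_\AgentSubsetZ}$-value, and then swaps two such global solutions to force agreement via solvability at $(\omega_{\widetilde{\AgentSubsetY}},\omega'_{\widetilde{\AgentSubsetZ}},\omega'_\residual)$. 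The ingredient your argument is missing is precisely this application of global solvability at a \emph{different} point of~$\Omega$.
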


Before providing the proof we  introduce the  following preliminary result, which  is an application of a result by
Doob (see~\cite[Theorem 18 in
Chapter 2]{Dellacherie-Meyer:1975} and~\cite{doob1953stochastic}) in a
countable setting.
\begin{lemma}
  \label{lemma:doob}
We consider a countable W-model.  Let $\Agent\subset\AGENT$ and $\Bgent\subset\AGENT$.
  Let $\projection$ be the projection mapping from~$\HISTORY$ to
  $\produit{\Omega_\Agent}{ \CONTROL_\Bgent}$, and
\(   \policy_\Agent\)a policy profile for
  the elements of $\Agent$.
  Let $\History\subset\HISTORY$ be such that 
  \(\sigma(\policy_\Agent)\cap \History\subseteq\sigma(\projection)\cap H\). Then, there
  exist a mapping 
  $\hat{\policy}_\Agent:
  \produit{\Omega_\Agent}{ \CONTROL_\Bgent}\to \CONTROL_\Agent$ such that
  \(\lambda_\Agent^\History = \hat{\lambda}\circ \projection^\History\), where
  $\policy^\History_\Agent$  and $ \projection^\History$ are the restrictions of
  $\policy_\Agent$ and $\projection$ to~$\History$.
\end{lemma}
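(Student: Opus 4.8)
The plan is to read Lemma~\ref{lemma:doob} as a functional-representation (Doob--Dynkin) statement and to prove it by hand rather than by quoting the general result, since we only control the \emph{trace} $\sigma$-fields $\sigma(\policy_\Agent)\cap\History$ and $\sigma(\projection)\cap\History$, not the ambient ones. The entire argument reduces to one observation: under the hypothesis $\sigma(\policy_\Agent)\cap\History\subseteq\sigma(\projection)\cap\History$, the restriction $\policy_\Agent^\History$ is constant along the fibres of $\projection^\History$, that is, $h,h'\in\History$ with $\projection(h)=\projection(h')$ forces $\policy_\Agent(h)=\policy_\Agent(h')$. Granting this, one simply sets $\hat{\policy}_\Agent\bp{\projection(h)}=\policy_\Agent(h)$ for $h\in\History$ --- well defined by fibre-constancy --- and extends $\hat{\policy}_\Agent$ arbitrarily (for instance by a constant) to the rest of $\produit{\Omega_\Agent}{\CONTROL_\Bgent}$; this is legitimate because nothing is required of $\hat{\policy}_\Agent$ outside $\projection(\History)$.

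For the fibre-constancy step I would argue as follows. Fix $h\in\History$. Since the W-model is countable, $\CONTROL_\Agent=\prod_{\agent\in\Agent}\CONTROL_\agent$ carries its complete $\sigma$-field, so the singleton $\{\policy_\Agent(h)\}$ is measurable and $D:=\policy_\Agent^{-1}\bp{\{\policy_\Agent(h)\}}$ belongs to $\sigma(\policy_\Agent)$. Hence $D\cap\History\in\sigma(\policy_\Agent)\cap\History\subseteq\sigma(\projection)\cap\History$, so there is a measurable $B\subseteq\produit{\Omega_\Agent}{\CONTROL_\Bgent}$ with $D\cap\History=\projection^{-1}(B)\cap\History$. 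Now $h\in D\cap\History$ yields $\projection(h)\in B$; if $h'\in\History$ satisfies $\projection(h')=\projection(h)$, then $\projection(h')\in B$, so $h'\in\projection^{-1}(B)\cap\History=D\cap\History$, i.e. $\policy_\Agent(h')=\policy_\Agent(h)$, which is precisely fibre-constancy of $\policy_\Agent^\History$.

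It then remains to check that the map $\hat{\policy}_\Agent$ so constructed is measurable, which is immediate in the countable setting: $\AGENT$ being finite, $\produit{\Omega_\Agent}{\CONTROL_\Bgent}$ is a finite product of countable sets, hence countable, and is equipped with its complete $\sigma$-field, so every map out of it is measurable; and $\policy_\Agent^\History=\hat{\policy}_\Agent\circ\projection^\History$ holds by construction. I do not anticipate a real obstacle; the only points needing care are the bookkeeping with trace $\sigma$-fields and the remark that we cannot invoke the general Doob lemma verbatim --- in the uncountable case the factor map $\hat{\policy}_\Agent$ need not be measurable without additional structure on $\produit{\Omega_\Agent}{\CONTROL_\Bgent}$, which is exactly the phenomenon discussed in~\S\ref{subseq:discrete-continuous}.
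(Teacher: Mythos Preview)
Your proof is correct and is essentially the same as the paper's: both are the Doob--Dynkin factorization, with the paper simply citing Doob's theorem (Dellacherie--Meyer, Chap.~1, Thm.~18) after observing that the trace hypothesis $\sigma(\policy_\Agent)\cap\History\subseteq\sigma(\projection)\cap\History$ is nothing but $\sigma(\policy_\Agent^\History)\subset\sigma(\projection^\History)$, while you unpack that theorem by hand via fibre-constancy. Your remark that one ``cannot invoke the general Doob lemma verbatim'' is slightly too cautious --- once the trace condition is rewritten as above, the standard statement applies on the countable base~$\History$, which is exactly what the paper does; your explicit argument is nonetheless a clean self-contained alternative.
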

\begin{proof}
  This is a trivial application of Doob Lemma on a discrete set. 
  First we observe that the hypothesis implies that
  \(  \sigma(\policy_\Agent^\History)\subset \sigma(\projection^\History) \).
  As the set~\( \HISTORY \) is countable, so is the set~$\History$
  and we can apply Doob's Theorem~\cite[Theorem 18 in Chapter 2]{Dellacherie-Meyer:1975},
  which implies that there exists a mapping
  \(\hat{\policy} : \mathrm{im}(\pi^\History)\subset
  \produit{\Omega_\Agent}{ \CONTROL_\Bgent}\to \CONTROL_\Agent\) such that  \(\lambda_\Agent^\History =
  \hat{\lambda}\circ \projection^\History\)
  (where $\mathrm{im}(\pi^\History)$ is the the image of $\pi^\History$).
  We can then  extend the domain of $\hat{\policy}$, so that \(\hat{\policy} : 
  \produit{\Omega_\Agent}{ \CONTROL_\Bgent}\to \CONTROL_\Agent\) is such that  \(\lambda_\Agent^\History =
\hat{\lambda}\circ \projection^\History\)   which is what we wanted to
show. 
\end{proof}

\begin{proof} 
  The proof is in five steps.
  \medskip
  
  \begin{subequations}
    \noindent$\bullet$
    First, we identify five subsets 
    $\AgentSubsetY', \AgentSubsetZ'$,
    $\AgentSubsetW_\AgentSubsetY,\AgentSubsetW_\AgentSubsetZ$,
    $\residual$ $\subset \AGENT$ such that~\eqref{eq:five_subsets_proof} holds true.

    By assumption, we have that 
    \( \AgentSubsetY \cap \AgentSubsetZ = \AgentSubsetY \cap \AgentSubsetW
    = \AgentSubsetZ \cap \AgentSubsetW = \emptyset \) and 
    \( \AgentSubsetY \ConditionalTopologicalSeparation \AgentSubsetZ \mid \np{\AgentSubsetW,\HistorySubset} \).
    As a consequence, by Definition~\ref{de:topologically_separated},
    there exists 
    \( \AgentSubsetW_\AgentSubsetY , \AgentSubsetW_\AgentSubsetZ \subset \AgentSubsetW \) 
    such that 
    \( \AgentSubsetW_\AgentSubsetY \sqcup \AgentSubsetW_\AgentSubsetZ =
    \AgentSubsetW \) and 
    \( \TopologicalClosure{ \AgentSubsetY \cup \AgentSubsetW_\AgentSubsetY }
    \cap 
    \TopologicalClosure{ \AgentSubsetZ \cup \AgentSubsetW_\AgentSubsetZ }
    = \emptyset \), that is,
    \begin{equation}
      \AgentSubsetW_\AgentSubsetY \cap \AgentSubsetW_\AgentSubsetZ = \emptyset
      \eqsepv 
      \AgentSubsetW_\AgentSubsetY \cup \AgentSubsetW_\AgentSubsetZ = \AgentSubsetW 
      \eqsepv 
      \ConditionalAncestor\np{ \AgentSubsetY \cup \AgentSubsetW_\AgentSubsetY }
      \cap 
      \ConditionalAncestor\np{ \AgentSubsetZ \cup \AgentSubsetW_\AgentSubsetZ }
      = \emptyset 
      \eqfinp
      \label{eq:decomposition_assumptions_inproof}
    \end{equation}
    We set 
    \begin{equation}
      \widetilde{\AgentSubsetY}= \ConditionalAncestor
      \np{\AgentSubsetY \cup \AgentSubsetW_\AgentSubsetY}
      \, \text{ and }   \, 
      \widetilde{\AgentSubsetZ} = \ConditionalAncestor
      \np{\AgentSubsetZ\cup  \AgentSubsetW_\AgentSubsetZ}
      \eqfinp
      \label{eq:widetildeAgentSubsetY_inproof}
    \end{equation}
    By definition of the ancestral relation \( \ConditionalAncestor=  
    \TransitiveClosure{\PrecedenceWH}\cup\Delta \) 
    in~\eqref{eq:ancestral_relation}, we have that 
    \( \AgentSubsetY \cup \AgentSubsetW_\AgentSubsetY \subset
    \widetilde{\AgentSubsetY} \) and
    \( \AgentSubsetZ \cup \AgentSubsetW_\AgentSubsetZ\subset
    \widetilde{\AgentSubsetZ} \), where we can write
    \( \AgentSubsetY \cup \AgentSubsetW_\AgentSubsetY =
    \AgentSubsetY \sqcup \AgentSubsetW_\AgentSubsetY \) 
    and
    \( \AgentSubsetZ \cup \AgentSubsetW_\AgentSubsetZ =
    \AgentSubsetZ \sqcup \AgentSubsetW_\AgentSubsetZ \)
    because 
    \( \AgentSubsetY \cap \AgentSubsetW_\AgentSubsetY 
    \subset \AgentSubsetY \cap \AgentSubsetW =\emptyset \)
    and
    \( \AgentSubsetZ \cap \AgentSubsetW_\AgentSubsetZ 
    \subset \AgentSubsetZ \cap \AgentSubsetW =\emptyset \)
    by assumption. Then, we set 
    \[
      \AgentSubsetY'=
      \widetilde{\AgentSubsetY} \setminus 
      \bp{ \AgentSubsetY \sqcup \AgentSubsetW_\AgentSubsetY }
      \eqsepv 
      \AgentSubsetZ'=
      \widetilde{\AgentSubsetZ} \setminus 
      \bp{ \AgentSubsetZ \sqcup \AgentSubsetW_\AgentSubsetZ }
      \eqfinp
    \]
    \smallskip

    \noindent$\bullet$
    Second, we show that
   
    \begin{equation}
      \tribu{\Information}_{\widetilde{\AgentSubsetY}}\cap \HistorySubset
      \subset
      \tribu{\History}_{\widetilde{\AgentSubsetY}\cup
        \AgentSubsetW_\AgentSubsetZ} \cap \HistorySubset
      \, \text{ and }   \, 
      \tribu{\Information}_{\widetilde{\AgentSubsetZ}}\cap \HistorySubset
      \subset
      \tribu{\History}_{\widetilde{\AgentSubsetZ}\cup \AgentSubsetW_\AgentSubsetY}  \cap \HistorySubset
      \eqfinp 
      \label{eq:decomposition_proof}
    \end{equation}
    Indeed, we have that 
    \begin{align*}
      \PrecedenceWH \widetilde{\AgentSubsetY} 
      &=
        \PrecedenceWH 
        \ConditionalAncestor\np{\AgentSubsetY \cup \AgentSubsetW_\AgentSubsetY}
        \tag{as $\widetilde{\AgentSubsetY}=
        \ConditionalAncestor\np{\AgentSubsetY \cup \AgentSubsetW_\AgentSubsetY}$
        by definition~\eqref{eq:widetildeAgentSubsetY_inproof}}
      \\
      & \subset 
        \ConditionalAncestor\np{\AgentSubsetY \cup \AgentSubsetW_\AgentSubsetY}
        =   \widetilde{\AgentSubsetY}
        \tag{by definition~\eqref{eq:widetildeAgentSubsetY_inproof}}
        \eqfinp
    \end{align*}
    Therefore, using the fact that $\PrecedenceWH A \subset \Bgent  \iff
      \tribu{\Information}_{A}\cap \HistorySubset \subset
      \tribu{\History}_{\Bgent \cup \AgentSubsetW} \cap \HistorySubset$
      by definition of \(  \PrecedenceWH\), we get that
    \begin{equation*}
      \tribu{\Information}_{\widetilde{\AgentSubsetY}}\cap \HistorySubset
      \subset
      \tribu{\History}_{\widetilde{\AgentSubsetY}\cup
        \AgentSubsetW} \cap \HistorySubset
      \eqfinv
    \end{equation*}
    which, combined with the equality
    \(\widetilde{\AgentSubsetY}\cup
        \AgentSubsetW =\ConditionalAncestor(\AgentSubsetY\cup W_Y)\cup
        \AgentSubsetW_Y\cup\AgentSubsetW_Z=\ConditionalAncestor(\AgentSubsetY\cup W_Y)\cup
        \AgentSubsetW_Z= \widetilde{\AgentSubsetY}\cup
        \AgentSubsetW_Z\) gives
    \begin{equation*}
      \tribu{\Information}_{\widetilde{\AgentSubsetY}}\cap \HistorySubset
      \subset
      \tribu{\History}_{\widetilde{\AgentSubsetY}\cup
        \AgentSubsetW_\AgentSubsetZ} \cap \HistorySubset
      \eqfinp
    \end{equation*}
    In the same way, we obtain that 
    \(\tribu{\Information}_{\widetilde{\AgentSubsetZ}}\cap \HistorySubset
      \subset
      \tribu{\History}_{\widetilde{\AgentSubsetZ}\cup \AgentSubsetW_\AgentSubsetY}  \cap \HistorySubset\).
    \medskip
  \end{subequations}

  \noindent$\bullet$
  Third, we prepare the existence of a factorization as in~\eqref{eq:factorization}.
Using Equations~\eqref{eq:decomposition_assumptions_inproof} and~\eqref{eq:widetildeAgentSubsetY_inproof}
  we have that $\widetilde{\AgentSubsetY} \cap \widetilde{\AgentSubsetZ} = \emptyset$. We define
  $\residual = \AGENT\backslash \bp{\widetilde{\AgentSubsetY} \sqcup \widetilde{\AgentSubsetZ}}$
  to obtain the decomposition \( \AGENT =  \widetilde{\AgentSubsetY} \sqcup \widetilde{\AgentSubsetZ}\sqcup
  \residual \). The solution map~\eqref{eq:solution_map} splits in three factors
  (where the projection~$\pi$ has been introduced in~\eqref{eq:projection_Bgent})
  \[
    \ReducedSolutionMap_\wstrategy\np{\omega} = \Bp{ 
      \projection_{\widetilde{\AgentSubsetY}}\bp{ \SolutionMap_\wstrategy\np{\omega} },
      \projection_{\widetilde{\AgentSubsetZ}}\bp{ \SolutionMap_\wstrategy\np{\omega} },
      \projection_{\residual}\bp{ \SolutionMap_\wstrategy\np{\omega} } } 
    \eqfinp
  \]
  Let us examine the term 
  \( \projection_{\widetilde{\AgentSubsetY}}\bp{ \SolutionMap_\wstrategy\np{\omega} }
  \),
  as the other two terms can be treated in the same way.
  On the one hand, by~\eqref{eq:agents-local-noises}, we have that 
  \( \tribu{\Information}_{\widetilde{\AgentSubsetY}} 
  \subset \oproduit{ \tribu{\NatureField}_{\widetilde{\AgentSubsetY}} 
  \otimes
  \bigotimes\limits_{\bgent \not\in {\widetilde{\AgentSubsetY}}} \na{ \emptyset, \Omega_\bgent}
  }{
  \bigotimes \limits_{\cgent \in \AGENT} \tribu{\Control}_{\cgent} } \).
  On the other hand, by~\eqref{eq:decomposition_proof}, we have that 
  \( \tribu{\Information}_{\widetilde{\AgentSubsetY}} \cap \History
  \subset 
  \tribu{\History}_{\widetilde{\AgentSubsetY}\cup \AgentSubsetW_\AgentSubsetZ}\cap\History =
   \oproduit{ \tribu{\NatureField}
  }{
  \bigotimes 
  \limits_{\bgent \in \widetilde{\AgentSubsetY}\cup \AgentSubsetW_\AgentSubsetZ} 
  \tribu{\Control}_{\bgent} 
  \otimes
  \bigotimes 
  \limits_{\cgent \not\in \widetilde{\AgentSubsetY}\cup \AgentSubsetW_\AgentSubsetZ} 
  \na{ \emptyset, \CONTROL_{\cgent} } }\cap\History \).
  Therefore, we deduce that 
  \begin{align*}
    \tribu{\Information}_{\widetilde{\AgentSubsetY}} \cap\History
    &\subset
      \Bigg(
      \Bp{ \oproduit{ \tribu{\NatureField}_{\widetilde{\AgentSubsetY}} 
      \otimes
      \bigotimes\limits_{\bgent \not\in {\widetilde{\AgentSubsetY}}} \na{ \emptyset, \Omega_\bgent}
      }{
      \bigotimes \limits_{\cgent \in \AGENT} \tribu{\Control}_{\cgent} } }
      \cap
      \Bp{ \oproduit{ \tribu{\NatureField}
      }{
      \bigotimes 
      \limits_{\bgent \in \widetilde{\AgentSubsetY}\cup \AgentSubsetW_\AgentSubsetZ} 
      \tribu{\Control}_{\bgent} 
      \otimes
      \bigotimes 
      \limits_{\cgent \not\in \widetilde{\AgentSubsetY}\cup \AgentSubsetW_\AgentSubsetZ} 
      \na{ \emptyset, \CONTROL_{\cgent} } } }\Bigg)\cap\History
    \\
    &=\Bigg(
     \oproduit{ \tribu{\NatureField}_{\widetilde{\AgentSubsetY}} 
      \otimes
      \bigotimes\limits_{\bgent \not\in {\widetilde{\AgentSubsetY}}} 
      \na{ \emptyset, \Omega_\bgent}
      }{
      \bigotimes 
      \limits_{\bgent \in \widetilde{\AgentSubsetY}\cup \AgentSubsetW_\AgentSubsetZ} 
      \tribu{\Control}_{\bgent} 
      \otimes
      \bigotimes 
      \limits_{\cgent \not\in \widetilde{\AgentSubsetY}\cup \AgentSubsetW_\AgentSubsetZ} 
      \na{ \emptyset, \CONTROL_{\cgent} } }\Bigg)\cap\History
      \eqfinp
  \end{align*}
 By Lemma~\ref{lemma:doob},
  there exists a ``reduced'' mapping~\(
  \overline{\wstrategy}_{\widetilde{\AgentSubsetY}} :
  \Omega_{\widetilde{\AgentSubsetY}} 
  \times
  \CONTROL_{\widetilde{\AgentSubsetY}}
  \times
  \CONTROL_{\AgentSubsetW_\AgentSubsetZ} 
  \to \CONTROL_{\widetilde{\AgentSubsetY}} \) such that
  \[
    \begin{split}
      \wstrategy_{\widetilde{\AgentSubsetY}}\bp{
        \omega_{\widetilde{\AgentSubsetY}},
        \omega_{\AGENT\setminus\widetilde{\AgentSubsetY}},
        \control_{\widetilde{\AgentSubsetY}},
        \control_{\AgentSubsetW_\AgentSubsetZ},
        \control_{\AGENT\setminus\np{\widetilde{\AgentSubsetY}\cup\AgentSubsetW_\AgentSubsetZ}}}
      =
      \overline{\wstrategy}_{\widetilde{\AgentSubsetY}}\bp{
        \omega_{\widetilde{\AgentSubsetY}},
        \control_{\widetilde{\AgentSubsetY}},
        \control_{\AgentSubsetW_\AgentSubsetZ} }
      \eqsepv \\
      \forall 
      \bp{
        \omega_{\widetilde{\AgentSubsetY}},
        \omega_{\AGENT\setminus\widetilde{\AgentSubsetY}},
        \control_{\widetilde{\AgentSubsetY}},
        \control_{\AgentSubsetW_\AgentSubsetZ},
        \control_{\AGENT\setminus\np{\widetilde{\AgentSubsetY}\cup\AgentSubsetW_\AgentSubsetZ}}}
      \in \History
      \eqfinp      
    \end{split}
  \]
  In the same way,
  there exists a ``reduced'' mapping~\(
  \overline{\wstrategy}_{\widetilde{\AgentSubsetZ}} :
  \Omega_{\widetilde{\AgentSubsetZ}} 
  \times
  \CONTROL_{\widetilde{\AgentSubsetZ}}
  \times
  \CONTROL_{\AgentSubsetW_\AgentSubsetY} 
  \to \CONTROL_{\widetilde{\AgentSubsetZ}} \) such that
  \[
    \begin{split}
      \wstrategy_{\widetilde{\AgentSubsetZ}}\bp{
        \omega_{\widetilde{\AgentSubsetZ}},
        \omega_{\AGENT\setminus\widetilde{\AgentSubsetZ}},
        \control_{\widetilde{\AgentSubsetZ}},
        \control_{\AgentSubsetW_\AgentSubsetY},
        \control_{\AGENT\setminus\np{\widetilde{\AgentSubsetY}\cup\AgentSubsetW_\AgentSubsetZ}}}
      =
      \overline{\wstrategy}_{\widetilde{\AgentSubsetZ}}\bp{
        \omega_{\widetilde{\AgentSubsetZ}},
        \control_{\widetilde{\AgentSubsetZ}},
        \control_{\AgentSubsetW_\AgentSubsetY} }
      \eqsepv
      \\      \forall 
      \bp{
        \omega_{\widetilde{\AgentSubsetZ}},
        \omega_{\AGENT\setminus\widetilde{\AgentSubsetZ}},
        \control_{\widetilde{\AgentSubsetZ}},
        \control_{\AgentSubsetW_\AgentSubsetY},
        \control_{\AGENT\setminus\np{\widetilde{\AgentSubsetZ}\cup\AgentSubsetW_\AgentSubsetY}}}
      \in \History
      \eqfinv
    \end{split}
  \]
  and a mapping
  \(
  \overline{\wstrategy}_{\residual} :
  \Omega_{\residual}
  \times
  \CONTROL_{\residual}
  \times
  \CONTROL_{\Complementary{\residual}} 
  \to \CONTROL_{\residual} \) 
  such that
  \begin{align*}
    \wstrategy_{\residual}\bp{
    \omega_{\residual},
    \omega_{\Complementary{\residual}},
    \control_{\residual},
    \control_{\Complementary{\residual}}}
    &=
      \overline{\wstrategy}_{\residual}\bp{
      \omega_{\residual},
      \control_{\residual},
      \control_{\Complementary{\residual} }}
      \eqsepv
      \forall 
      \bp{
      \omega_{\residual},
      \omega_{\Complementary{\residual}},
      \control_{\residual},
      \control_{\Complementary{\residual}}}
      \in \History
      \eqfinp
  \end{align*}
  By Definition~\ref{de:solvability} of 
  \( \SolutionMap_\wstrategy\np{\omega} \), we can regroup ---  for  \(
   \omega \in \SolutionMap_{\wstrategy}^{-1}(\HistorySubset)
  \)--- 
  the closed-loop equations~\eqref{eq:solution_map_IFF}
  in three parts as 
  \begin{align*}
    \projection_{\widetilde{\AgentSubsetY}}\bp{ \SolutionMap_\wstrategy\np{\omega} }
    &= 
      \overline{\wstrategy}_{\widetilde{\AgentSubsetY}}\Bp{
      \omega_{\widetilde{\AgentSubsetY}},
      \projection_{\widetilde{\AgentSubsetY}}\bp{ \SolutionMap_\wstrategy\np{\omega} }, 
      \projection_{\AgentSubsetW_\AgentSubsetZ}\bp{ \SolutionMap_\wstrategy\np{\omega} } }
      \eqfinv
    \\
    \projection_{\widetilde{\AgentSubsetZ}}\bp{ \SolutionMap_\wstrategy\np{\omega} }
    &= 
      \overline{\wstrategy}_{\widetilde{\AgentSubsetZ}}\Bp{
      \omega_{\widetilde{\AgentSubsetZ}},
      \projection_{\widetilde{\AgentSubsetZ}}\bp{ \SolutionMap_\wstrategy\np{\omega} }, 
      \projection_{\AgentSubsetW_\AgentSubsetY}\bp{ \SolutionMap_\wstrategy\np{\omega} } }
      \eqfinv
    \\
    \projection_{\residual}\bp{ \SolutionMap_\wstrategy\np{\omega} }
    &= 
      \overline{\wstrategy}_{\residual}\Bp{
      \omega_{\residual},
      \projection_{\residual}\bp{ \SolutionMap_\wstrategy\np{\omega} },
      \projection_{\widetilde{\AgentSubsetY}}\bp{ \SolutionMap_\wstrategy\np{\omega} }, 
      \projection_{\widetilde{\AgentSubsetZ}}\bp{ \SolutionMap_\wstrategy\np{\omega} }
      }
            \eqfinv
  \end{align*}
  
  so that ---  for  \(
   \omega \in \SolutionMap_{\wstrategy}^{-1}(\HistorySubset)
  \)---  the reduced closed-loop equations
  \begin{subequations}
    \begin{align}
      \control_{\widetilde{\AgentSubsetY}}
      &=
        \overline{\wstrategy}_{\widetilde{\AgentSubsetY}}\bp{
        \omega_{\widetilde{\AgentSubsetY}},
        \control_{\widetilde{\AgentSubsetY}},
        \control_{\AgentSubsetW_\AgentSubsetZ} }
              \eqfinv
        \label{eq:closedLoop1}
      \\
      \control_{\widetilde{\AgentSubsetZ}}
      &=
        \overline{\wstrategy}_{\widetilde{\AgentSubsetZ}}\bp{
        \omega_{\widetilde{\AgentSubsetZ}},
        \control_{\widetilde{\AgentSubsetZ}},
        \control_{\AgentSubsetW_\AgentSubsetY} }
              \eqfinv
        \label{eq:closedLoop2}
      \\
      \control_{\residual}          &=
                                      \overline{\wstrategy}_{\residual}\bp{
                                      \omega_{\residual},
                                      \control_{\residual},
                                      \control_{\widetilde{\AgentSubsetY}},
                                      \control_{\widetilde{\AgentSubsetZ}}
                                      }
                                            \eqfinv
                                      \label{eq:closedLoop3}
    \end{align}
    \label{eq:reduced_closed-loop_equations_proof} 
  \end{subequations}
  have (at least) the solution
  \( \np{ \control_{\widetilde{\AgentSubsetY}},
    \control_{\widetilde{\AgentSubsetZ}},
    \control_{\residual}
  } =
  \Bp{ \projection_{\widetilde{\AgentSubsetY}}\bp{ \SolutionMap_\wstrategy\np{\omega} },
    \projection_{\widetilde{\AgentSubsetZ}}\bp{ \SolutionMap_\wstrategy\np{\omega} } ,
    \projection_{\residual}\bp{ \SolutionMap_\wstrategy\np{\omega} }
  } \)
  when
  \(\control_{\AgentSubsetW} = \projection_{\AgentSubsetW}\bp{ \SolutionMap_\wstrategy\np{\omega} }\).
  \medskip

  \noindent$\bullet$
  Fourth, we show the existence of three mappings as in~\eqref{eq:three_mappings}.

  Let  \(
   \omega \in \SolutionMap_{\wstrategy}^{-1}(\HistorySubset)
  \). 
  We denote by ${\CONTROL}_{\widetilde{\AgentSubsetY}}\np{\omega}$
  the set of elements $\control_{\widetilde{\AgentSubsetY}}\in
  \CONTROL_{\widetilde{\AgentSubsetY}}$ such that there exists at least one 
  $  (\omega'_{\widetilde{\AgentSubsetZ}},\omega'_{\residual},
  \control_{\widetilde{\AgentSubsetZ}},\control_{\residual}) \in \Omega_{\widetilde{\AgentSubsetZ}}\times \Omega_{\residual}\times
  \CONTROL_{\widetilde{\AgentSubsetZ}} \times 
  \CONTROL_{\residual}$ that satisfies 
  \(
  (\control_{\widetilde{\AgentSubsetY}},
  \control_{\widetilde{\AgentSubsetZ}},
  \control_{\residual}
  )
  =
  \wstrategy\bp{(
    \omega_{\widetilde{\AgentSubsetY}},
    \omega'_{\widetilde{\AgentSubsetZ}},
    \omega'_{\residual}
    ),
    (\control_{\widetilde{\AgentSubsetY}},
    \control_{\widetilde{\AgentSubsetZ}},
    \control_{\residual})}\)
  and \(
  \projection_{\AgentSubsetW_\AgentSubsetZ}\bp{ \SolutionMap_\wstrategy\np{\omega} }
  =\projection_{\AgentSubsetW_\AgentSubsetZ}\bp{ \SolutionMap_\wstrategy (
    \omega_{\widetilde{\AgentSubsetY}},
    \omega'_{\widetilde{\AgentSubsetZ}},
    \omega'_{\residual})
  } 
  \).
  We are going to show that ${\CONTROL}_{\widetilde{\AgentSubsetY}}\np{\omega}$
  is a singleton. 
  For this purpose, we consider 
  \(( \control_{\widetilde{\AgentSubsetY}},\omega_{\widetilde{\AgentSubsetZ}},\omega_{\residual}, \control_{\widetilde{\AgentSubsetZ}},\control_{\residual}) \)
  and  \((
  \control'_{\widetilde{\AgentSubsetY}},\omega'_{\widetilde{\AgentSubsetZ}},\omega'_{\residual},
  \control'_{\widetilde{\AgentSubsetZ}},\control'_{\residual}) \)
  satisfying the two conditions that define the set~${\CONTROL}_{\widetilde{\AgentSubsetY}}\np{\omega}$.

  As we have, on the one hand, that 
  \begin{subequations}
    \begin{equation}
      (\control_{\widetilde{\AgentSubsetY}},
      \control_{\widetilde{\AgentSubsetZ}},
      \control_{\residual}
      )
      =
      \wstrategy\bp{(
        \omega_{\widetilde{\AgentSubsetY}},
        \omega_{\widetilde{\AgentSubsetZ}},
        \omega_{\residual}
        ),
        (\control_{\widetilde{\AgentSubsetY}},
        \control_{\widetilde{\AgentSubsetZ}},
        \control_{\residual})}
    \end{equation}
    and, on the other hand, that
    \begin{equation}
      \label{eq:unicity1}
      (\control'_{\widetilde{\AgentSubsetY}},
      \control'_{\widetilde{\AgentSubsetZ}},
      \control'_{\residual}
      )
      =
      \wstrategy\bp{(
        \omega_{\widetilde{\AgentSubsetY}},
        \omega'_{\widetilde{\AgentSubsetZ}},
        \omega'_{\residual}
        ),
        (\control'_{\widetilde{\AgentSubsetY}},
        \control'_{\widetilde{\AgentSubsetZ}},
        \control'_{\residual})}
      \eqfinv
    \end{equation}
    we deduce that, by
    using~\eqref{eq:reduced_closed-loop_equations_proof} and the
    condition on \(\control_{\AgentSubsetW_\AgentSubsetZ}\) in the
    definition of ${\CONTROL}_{\widetilde{\AgentSubsetY}}\np{\omega}$:
    \begin{equation}
      \label{eq:unicity2}
      (\control_{\widetilde{\AgentSubsetY}},
      \control'_{\widetilde{\AgentSubsetZ}},
      \hat{\control}_{\residual}
      )
      =
      \wstrategy\bp{(
        \omega_{\widetilde{\AgentSubsetY}},
        \omega'_{\widetilde{\AgentSubsetZ}},
        \omega'_{\residual}
        ),
        (\control_{\widetilde{\AgentSubsetY}},
        \control'_{\widetilde{\AgentSubsetZ}},
        \hat{\control}_{\residual})}
      \eqfinv 
    \end{equation}
    where   \( 
    \hat{\control}_{\residual}=
    \overline{\wstrategy}_{\residual}\Bp{
      \omega'_{\residual},
      \hat{\control}_{\residual} ,
      \wstrategy_{\widetilde{\AgentSubsetY}\cup
        \widetilde{\AgentSubsetZ}}\bp{\SolutionMap_{\wstrategy}\np{\omega}}} \).
  \end{subequations}
  The  Equations  \eqref{eq:unicity1}  and  \eqref{eq:unicity2} imply,
  by the solvability assumption (see Definition~\ref{de:solvability}), that \( \control_{\widetilde{\AgentSubsetY}}
  = \control_{\widetilde{\AgentSubsetY}}' \).
  As a consequence, we have proven that the set~${\CONTROL}_{\widetilde{\AgentSubsetY}}\np{\omega}$
  is a singleton. 
  
  Thus, we have defined, for any 
  \(\control_{\AgentSubsetW_{\widetilde{\AgentSubsetZ}}} =
  \projection_{\AgentSubsetW_{\widetilde{\AgentSubsetZ}}} \bp{
    \SolutionMap_\wstrategy\np{\omega} }\)
  a unique element 
  \( \control_{\widetilde{\AgentSubsetY}}=
  \partialReducedSolutionMap_{\wstrategy_{\widetilde{\AgentSubsetY}}} 
  \np{\omega_{\widetilde{\AgentSubsetY}},
    \control_{\AgentSubsetW_{\widetilde{\AgentSubsetZ}}} } \).
  We do the same for \( \widetilde{\AgentSubsetZ} \) and for \( \residual \).
  Thus, we have defined reduced solution maps 
  as follows
  \begin{subequations}
    \begin{align}
      \partialReducedSolutionMap_{\wstrategy_{\widetilde{\AgentSubsetY}}} 
      &:
        \Omega_{\widetilde{\AgentSubsetY}} 
        \times
        \CONTROL_{\AgentSubsetW_\AgentSubsetZ} 
        \to \CONTROL_{\widetilde{\AgentSubsetY}}
        \eqfinv 
      \\
      \partialReducedSolutionMap_{\wstrategy_{\widetilde{\AgentSubsetZ}}} 
      &:
        \Omega_{\widetilde{\AgentSubsetZ}} 
        \times
        \CONTROL_{\AgentSubsetW_\AgentSubsetY} 
        \to \CONTROL_{\widetilde{\AgentSubsetZ}}
        \eqfinv 
      \\
      \partialReducedSolutionMap_{\wstrategy_{\residual}} 
      &:
        \Omega_{\residual} \times
        \CONTROL_{\widetilde{\AgentSubsetY} \cup \widetilde{\AgentSubsetZ}}
        \to \CONTROL_{\residual}
        \eqfinp
    \end{align}  
    \label{eq:reducedSolutionMapIntro}
  \end{subequations}
  As we considered that all sets
  $\AGENT$, \( \sequence{\CONTROL_{\agent}}{\agent \in \AGENT}\)
  and $\Omega$ are countable, the above mappings are measurable.
  \medskip

  This ends the proof.
\end{proof}

\subsection{Conditional independence in the presence of cycles}
\label{Nonrecursive_solvability_implies_conditional_independence}

This subsection provides tools to study conditional independence in
the presence of nonrecursive systems.
We also discuss an instance where such independence is not captured by 
Pearl's d-separation criterion~\cite{Pearl2011}.

\subsubsection{Key technical lemma for  dealing with cycles}

We state and prove a lemma that will serve as a main argument for the
proof of the coming
Theorem~\ref{th:push-forward-probability-decomposition}.
As far as we know, this result is novel. It cannot be deduced from
Pearl's rules.  

\begin{lemma}
  \label{lem:independence}

  Let $(\Omega,\tribu{\NatureField},\PROBA)$ be a probability space.
  Let $\Xi_1$, $\Xi_2$, $\Upsilon_1$, $\Upsilon_2$, $\Theta_1$, $\Theta_2$
  be six measurables spaces and 
  \begin{equation}
    \Psi_1 : \Xi_1 \times \Upsilon_2 \to \Theta_1
    \eqsepv
    \Psi_2 : \Xi_2 \times \Upsilon_1 \to \Theta_2
    \eqsepv
    \Phi_1 : \Xi_1 \times \Upsilon_2 \to \Upsilon_1
    \eqsepv
    \Phi_2 : \Xi_2 \times \Upsilon_1 \to \Upsilon_2 
  \end{equation}
  be four measurable mappings.
  Let  
  \begin{equation}
    \bm{\xi}_1 : \Omega \to \Xi_1
    \eqsepv
    \bm{\xi}_2 : \Omega \to \Xi_2
    \eqsepv
    \bm{\theta}_1  : \Omega \to \Theta_1
    \eqsepv
    \bm{\theta}_2  : \Omega \to \Theta_2
    \eqsepv
    \bm{\upsilon}_1  : \Omega \to \Upsilon_1 
    \eqsepv
    \bm{\upsilon}_2  : \Omega \to \Upsilon_2 
  \end{equation}
  be six random variables
  satisfying 
  \begin{subequations}
    \label{eq:six_random_variables}
    \begin{align}
      \bm{\theta}_1 &= \Psi_1\np{\bm{\xi}_1 ,\bm{\upsilon}_2} 
                      \eqfinv
                      \label{eq:six_random_variables_four_relations_theta_1}
      \\
      \bm{\theta}_2 &= \Psi_2\np{\bm{\xi}_2 ,\bm{\upsilon}_1} 
                      \eqfinv
                      \label{eq:six_random_variables_four_relations_theta_2}
      \\
      \bm{\upsilon}_1 &= \Phi_1\np{\bm{\xi}_1 ,\bm{\upsilon}_2} 
                        \eqfinv
                        \label{eq:six_random_variables_four_relations_upsilon_1}
      \\
      \bm{\upsilon}_2 &= \Phi_2\np{\bm{\xi}_2 ,\bm{\upsilon}_1}
                        \eqfinp
                        \label{eq:six_random_variables_four_relations_upsilon_2}
    \end{align}
  \end{subequations}
  Suppose that the couple $\np{\bm{\upsilon}_1,\bm{\upsilon}_2}$ of random variables takes 
  values in a countable
  product subset \( \Upsilon_1' \times \Upsilon_2' \subset \Upsilon_1 \times \Upsilon_2\),
  and that the system of equations
  \begin{subequations}
    \begin{align}
      w_1 &= \Phi_1\np{x_1 ,w_2}
            \eqfinv
      \\
      w_2 &= \Phi_2\np{x_2 ,w_1}
                        \eqfinv
    \end{align}
    \label{eq:unique_solution}
  \end{subequations}
  has a unique solution \( \np{w_1,w_2} \) in \( \Upsilon_1' \times \Upsilon_2' \),
  for any \( \np{x_1,x_2} \in \Xi_1 \times \Xi_2\).

  Then, if the random variables~$\bm{\xi}_1$ and $\bm{\xi}_2$ are independent,
  the random variables~$\bm{\theta}_1$ and $\bm{\theta}_2$
  are independent when conditioned on $(\bm{\upsilon}_1,\bm{\upsilon}_2)$.
\end{lemma}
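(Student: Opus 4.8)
The plan is to establish the conditional independence value by value. Fix a pair $(v_1,v_2)\in\Upsilon_1'\times\Upsilon_2'$ with $\PROBA\np{\mathcal{E}}>0$, where $\mathcal{E}=\{(\bm{\upsilon}_1,\bm{\upsilon}_2)=(v_1,v_2)\}$ is the corresponding conditioning event; since $(\bm{\upsilon}_1,\bm{\upsilon}_2)$ takes only countably many values, it suffices to prove that $\bm{\theta}_1$ and $\bm{\theta}_2$ are independent under the conditional probability $\PROBA\np{\,\cdot\mid\mathcal{E}}$. On~$\mathcal{E}$, relation~\eqref{eq:six_random_variables_four_relations_theta_1} reads $\bm{\theta}_1=\Psi_1\np{\bm{\xi}_1,v_2}$, which is a function of $\bm{\xi}_1$ alone, and relation~\eqref{eq:six_random_variables_four_relations_theta_2} reads $\bm{\theta}_2=\Psi_2\np{\bm{\xi}_2,v_1}$, which is a function of $\bm{\xi}_2$ alone. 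Hence it is enough to show that $\bm{\xi}_1$ and $\bm{\xi}_2$ remain independent under $\PROBA\np{\,\cdot\mid\mathcal{E}}$, and then to invoke that measurable functions of independent random variables are independent.

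The crux is a rectangle identity for~$\mathcal{E}$. Set
\[
B_1=\bset{x_1\in\Xi_1}{\Phi_1\np{x_1,v_2}=v_1}
\eqsepv
B_2=\bset{x_2\in\Xi_2}{\Phi_2\np{x_2,v_1}=v_2}
\eqfinv
\]
which are measurable (the measurability of the relevant singletons being ensured by the countability of $\Upsilon_1'\times\Upsilon_2'$). I claim that $\mathcal{E}=\bm{\xi}_1^{-1}\np{B_1}\cap\bm{\xi}_2^{-1}\np{B_2}$. The inclusion $\subseteq$ is immediate from~\eqref{eq:six_random_variables_four_relations_upsilon_1} and~\eqref{eq:six_random_variables_four_relations_upsilon_2}. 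For $\supseteq$, take $\omega$ with $\bm{\xi}_1(\omega)\in B_1$ and $\bm{\xi}_2(\omega)\in B_2$: then $(v_1,v_2)$ solves the system~\eqref{eq:unique_solution} for the input $(x_1,x_2)=\np{\bm{\xi}_1(\omega),\bm{\xi}_2(\omega)}$, and so does $\np{\bm{\upsilon}_1(\omega),\bm{\upsilon}_2(\omega)}$ by~\eqref{eq:six_random_variables_four_relations_upsilon_1}--\eqref{eq:six_random_variables_four_relations_upsilon_2}; as both pairs lie in $\Upsilon_1'\times\Upsilon_2'$, the uniqueness hypothesis forces $\np{\bm{\upsilon}_1(\omega),\bm{\upsilon}_2(\omega)}=(v_1,v_2)$, i.e. $\omega\in\mathcal{E}$. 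This is the one genuinely non-routine step, and the place where unique solvability of~\eqref{eq:unique_solution} in the countable product set is really used; I expect it to be the main obstacle.

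Granting the identity, one checks that $\mathcal{E}$ is the intersection of the event $\bm{\xi}_1^{-1}\np{B_1}\in\sigma(\bm{\xi}_1)$ with the event $\bm{\xi}_2^{-1}\np{B_2}\in\sigma(\bm{\xi}_2)$. Using the independence of $\bm{\xi}_1$ and $\bm{\xi}_2$, a one-line computation shows that, for measurable $A_1\subseteq\Xi_1$ and $A_2\subseteq\Xi_2$,
\[
\PROBA\np{\bm{\xi}_1\in A_1,\bm{\xi}_2\in A_2\mid\mathcal{E}}
=\PROBA\np{\bm{\xi}_1\in A_1\mid\mathcal{E}}\,\PROBA\np{\bm{\xi}_2\in A_2\mid\mathcal{E}}
\eqfinv
\]
that is, $\bm{\xi}_1$ and $\bm{\xi}_2$ are independent under $\PROBA\np{\,\cdot\mid\mathcal{E}}$. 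Since on~$\mathcal{E}$ the variable $\bm{\theta}_1$ is a function of $\bm{\xi}_1$ and $\bm{\theta}_2$ is a function of $\bm{\xi}_2$, the variables $\bm{\theta}_1$ and $\bm{\theta}_2$ are independent under $\PROBA\np{\,\cdot\mid\mathcal{E}}$ as well. Letting $(v_1,v_2)$ range over all values of $(\bm{\upsilon}_1,\bm{\upsilon}_2)$ of positive probability yields the conditional independence of $\bm{\theta}_1$ and $\bm{\theta}_2$ given $(\bm{\upsilon}_1,\bm{\upsilon}_2)$.
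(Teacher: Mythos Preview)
Your proof is correct and follows essentially the same approach as the paper: the central step---your rectangle identity $\mathcal{E}=\bm{\xi}_1^{-1}(B_1)\cap\bm{\xi}_2^{-1}(B_2)$---is precisely the paper's key equality~\eqref{eq:unique_solution_proof}, established the same way via unique solvability of~\eqref{eq:unique_solution}. Your packaging of the remaining routine computation (conditioning on a product event preserves independence, then applying functions) is somewhat more concise than the paper's explicit manipulation of conditional probabilities in~\eqref{eq:intersection_of_two_sets_proof} and the display that follows, but the mathematical content is the same.
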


\begin{proof}
  By assumption, there exists a unique solution \( \np{w_1,w_2} \in \Upsilon_1' \times \Upsilon_2' \) 
  to the implicit system~\eqref{eq:unique_solution} of equations. Thus, there
  exist mappings
  \begin{subequations}
    \begin{equation}
      \widetilde\Phi_1: \Xi_1 \times \Xi_2 \to \Upsilon_1'
      \eqsepv
      \widetilde\Phi_2 : \Xi_1 \times \Xi_2 \to \Upsilon_2' 
      \eqsepv
    \end{equation}
    such that, for any \( \np{w_1,w_2} \) in \( \Upsilon_1' \times \Upsilon_2' \)
    and \( \np{x_1,x_2} \in \Xi_1 \times \Xi_2\), we have 
    \begin{equation}
      \Bp{  w_1 = \Phi_1\np{x_1 ,w_2} \eqsepv w_2 = \Phi_2\np{x_2 ,w_1} }
      \iff 
      \Bp{  w_1 = \widetilde\Phi_1 (x_1 ,x_2) \eqsepv 
        w_2 = \widetilde\Phi_2 (x_1 ,x_2) }
      \eqfinp 
    \end{equation}
    \label{eq:unique_solution_mappings_inproof} 
  \end{subequations}
  We suppose that the random variables~$\bm{\xi}_1$ and $\bm{\xi}_2$ are
  independent.
  We are going to show that the random variables~$\bm{\theta}_1$ and $\bm{\theta}_2$
  are independent when conditioned on $(\bm{\upsilon}_1,\bm{\upsilon}_2)$.
  \smallskip

  \noindent$\bullet$
  First, we establish that, for any couple
  \( \np{w_1,w_2} \in \Upsilon_1' \times \Upsilon_2' \):
  \begin{equation}
    \Ba{ \Phi_1\np{\bm{\xi}_1,w_2}=w_1 \eqsepv 
      \Phi_2\np{\bm{\xi}_2,w_1}=w_2 }
    =
    \Ba{ \bm{\upsilon}_1 = w_1 \eqsepv \bm{\upsilon}_2 = w_2 }
    \eqfinp
    \label{eq:unique_solution_proof}
  \end{equation}
  Indeed,  on the one hand, we have
  \begin{align*}
    &\Ba{ \Phi_1\np{\bm{\xi}_1,w_2} =w_1 \eqsepv \Phi_2\np{\bm{\xi}_2,w_1}=w_2 }
    \\
    &\hspace{1cm}= 
      \Ba{ w_1 = \widetilde\Phi_1\np{\bm{\xi}_1,\bm{\xi}_2} \eqsepv 
      w_2 = \widetilde\Phi_2\np{\bm{\xi}_1,\bm{\xi}_2} }
      \tag{by~\eqref{eq:unique_solution_mappings_inproof}}
    \\
    &\hspace{1cm}= 
      \Ba{ w_1 = \widetilde\Phi_1\np{\bm{\xi}_1,\bm{\xi}_2} \eqsepv 
      w_2 = \widetilde\Phi_2\np{\bm{\xi}_1,\bm{\xi}_2} }
      \cap
      \Ba{ \bm{\upsilon}_1 = \Phi_1\np{\bm{\xi}_1 ,\bm{\upsilon}_2} \eqsepv 
      \bm{\upsilon}_2 = \Phi_2\np{\bm{\xi}_2 ,\bm{\upsilon}_1} }
      \intertext{because \( \Ba{ \bm{\upsilon}_1 = \Phi_1\np{\bm{\xi}_1,\bm{\upsilon}_2} \eqsepv 
      \bm{\upsilon}_2 = \Phi_2\np{\bm{\xi}_2 ,\bm{\upsilon}_1} } =\Omega \)
      by~\eqref{eq:six_random_variables_four_relations_upsilon_1} and \eqref{eq:six_random_variables_four_relations_upsilon_2}}
    &\hspace{1cm}=
      \Ba{ w_1 = \widetilde\Phi_1\np{\bm{\xi}_1,\bm{\xi}_2} \eqsepv 
      w_2 = \widetilde\Phi_2\np{\bm{\xi}_1,\bm{\xi}_2} }
      \cap
      \Ba{ \bm{\upsilon}_1 = \widetilde\Phi_1\np{\bm{\xi}_1,\bm{\xi}_2} \eqsepv 
      \bm{\upsilon}_2 = \widetilde\Phi_2\np{\bm{\xi}_1,\bm{\xi}_2} } 
      \tag{by~\eqref{eq:unique_solution_mappings_inproof}}
    \\
    &\hspace{1cm}\subset 
      \Ba{ \bm{\upsilon}_1 = w_1 \eqsepv \bm{\upsilon}_2 = w_2 }
      \eqfinp
  \end{align*}
  On the other hand, the reverse inclusion can be proved in the same way.
  Thus, we have obtained the equality~\eqref{eq:unique_solution_proof}.
  \smallskip

  \noindent$\bullet$
  Second, we show that, for any subsets \( \Theta_1' \subset \Theta_1 \) and
  \( \Theta_2' \subset \Theta_2 \), and for any couple
  \( \np{w_1,w_2} \in \Upsilon_1' \times \Upsilon_2' \), we have that
  \begin{align}
    &\Ba{ \bm{\theta}_1 \in \Theta_1' \eqsepv
      \bm{\theta}_2 \in \Theta_2' \eqsepv
      \bm{\upsilon}_1 = w_1 \eqsepv \bm{\upsilon}_2 = w_2 }
      \nonumber \\
    &\hspace{1cm}=
      \Ba{ \Psi_1\np{\bm{\xi}_1,w_2} \in \Theta_1' \eqsepv
      \Phi_1\np{\bm{\xi}_1,w_2}=w_1 } 
      \cap
      \Ba{ \Psi_2\np{\bm{\xi}_2,w_1} \in \Theta_2' \eqsepv
      \Phi_2\np{\bm{\xi}_2,w_1}=w_2 }
      \eqfinp
      \label{eq:intersection_of_two_sets_proof}
  \end{align}
  Indeed, we have that 
  \begin{subequations}
    \begin{align*}
      &
        \Ba{ \bm{\theta}_1 \in \Theta_1' \eqsepv
        \bm{\theta}_2 \in \Theta_2' \eqsepv
        \bm{\upsilon}_1 = w_1 \eqsepv \bm{\upsilon}_2 = w_2 }
      \\
      &\hspace{1cm}=
        \Ba{ \Psi_1\np{\bm{\xi}_1,\bm{\upsilon}_2} \in \Theta_1' \eqsepv
        \Psi_2\np{\bm{\xi}_2,\bm{\upsilon}_1} \in \Theta_2' \eqsepv
        \bm{\upsilon}_1 = w_1 \eqsepv \bm{\upsilon}_2 = w_2 }
        \tag{by~\eqref{eq:six_random_variables_four_relations_theta_1} 
        and \eqref{eq:six_random_variables_four_relations_theta_2}}
      \\
      &\hspace{1cm}=
        \Ba{ \Psi_1\np{\bm{\xi}_1,w_2} \in \Theta_1' \eqsepv
        \Psi_2\np{\bm{\xi}_2,w_1} \in \Theta_2' \eqsepv
        \bm{\upsilon}_1 = w_1 \eqsepv \bm{\upsilon}_2 = w_2 }
        \intertext{by substitution of the last two terms  \( \bm{\upsilon}_1 = w_1 \) and \( \bm{\upsilon}_2 = w_2 \)
        in the first two terms}
      &\hspace{1cm}=
        \left\{ \right. \Psi_1\np{\bm{\xi}_1,w_2} \in \Theta_1' \eqsepv
        \Psi_2\np{\bm{\xi}_2,w_1} \in \Theta_2' \eqsepv
        \Phi_1\np{\bm{\xi}_1,\bm{\upsilon}_2}=w_1 \eqsepv 
        \Phi_2\np{\bm{\xi}_2,\bm{\upsilon}_1}=w_2 \eqsepv
      \\
      & \hspace{2cm}\left.
        \bm{\upsilon}_1 = w_1 \eqsepv \bm{\upsilon}_2 = w_2 \right\}
        \tag{because \( \ba{ \Phi_1\np{\bm{\xi}_1,\bm{\upsilon}_2}=\bm{\upsilon}_1 \eqsepv 
        \Phi_2\np{\bm{\xi}_2,\bm{\upsilon}_1}=\bm{\upsilon}_2 } = \Omega \)
        by~\eqref{eq:six_random_variables_four_relations_upsilon_1}
        and \eqref{eq:six_random_variables_four_relations_upsilon_2}}
      \\
      &\hspace{1cm}=
        \left\{ \right. \Psi_1\np{\bm{\xi}_1,w_2} \in \Theta_1' \eqsepv
        \Psi_2\np{\bm{\xi}_2,w_1} \in \Theta_2' \eqsepv
        \Phi_1\np{\bm{\xi}_1,w_2}=w_1 \eqsepv 
        \Phi_2\np{\bm{\xi}_2,w_1}=w_2 \eqsepv 
      \\
      & \hspace{2cm}\left.
        \bm{\upsilon}_1 = w_1 \eqsepv \bm{\upsilon}_2 = w_2  \right\}
        \intertext{by substitution of the last two terms  \( \bm{\upsilon}_1 = w_1 \) and \( \bm{\upsilon}_2 = w_2 \)
        in the two middle terms}
      &\hspace{1cm}=
        \Ba{ \Psi_1\np{\bm{\xi}_1,w_2} \in \Theta_1' \eqsepv
        \Psi_2\np{\bm{\xi}_2,w_1} \in \Theta_2' \eqsepv
        \Phi_1\np{\bm{\xi}_1,w_2}=w_1 \eqsepv 
        \Phi_2\np{\bm{\xi}_2,w_1}=w_2 }
        \eqfinp
    \end{align*}
  \end{subequations}
  because the system~\eqref{eq:unique_solution} of equations 
  has a unique solution on \( \Upsilon_1' \times \Upsilon_2' \),
  so that \( \Phi_1\np{\bm{\xi}_1,w_2}=w_1 \) and \( \Phi_2\np{\bm{\xi}_2,w_1}=w_2 \)
  imply that \( \bm{\upsilon}_1 = w_1 \) and \( \bm{\upsilon}_2 = w_2 \) hold true
  by~\eqref{eq:six_random_variables_four_relations_upsilon_1}
  and \eqref{eq:six_random_variables_four_relations_upsilon_2}.
  Thus, we have obtained~\eqref{eq:intersection_of_two_sets_proof}.

  \smallskip

  \noindent$\bullet$
  Third, and finally, we show that the random variables~$\bm{\theta}_1$ and $\bm{\theta}_2$
  are independent when conditioned on $(\bm{\upsilon}_1,\bm{\upsilon}_2)$.
  For this purpose, we calculate, for any subsets \( \Theta_1' \subset \Theta_1 \) and
  \( \Theta_2' \subset \Theta_2 \), and for any couple
  \( \np{w_1,w_2} \in \Upsilon_1' \times \Upsilon_2' \):
  \begin{subequations}
    \begin{align*}
      &
        \PROBA\Bset{ \bm{\theta}_1 \in \Theta_1' \eqsepv \bm{\theta}_2 \in \Theta_2'}
        { \bm{\upsilon}_1 = w_1 \eqsepv \bm{\upsilon}_2 = w_2 }
      \\
      &\hspace{1cm}=
        \frac{ \PROBA\Ba{ \bm{\theta}_1 \in \Theta_1' \eqsepv
        \bm{\theta}_2 \in \Theta_2' \eqsepv
        \bm{\upsilon}_1 = w_1 \eqsepv \bm{\upsilon}_2 = w_2 } }%
        { \PROBA\Ba{ \bm{\upsilon}_1 = w_1 \eqsepv \bm{\upsilon}_2 = w_2 } }
        \intertext{by definition of the conditional probability, and where all
        quantities are zero if the denominator is zero}
      &\hspace{1cm}=
        \frac{ \PROBA\Ba{ \bm{\theta}_1 \in \Theta_1' \eqsepv
        \bm{\theta}_2 \in \Theta_2' \eqsepv
        \bm{\upsilon}_1 = w_1 \eqsepv \bm{\upsilon}_2 = w_2 } }%
        { \PROBA\Ba{ \bm{\theta}_1 \in \Theta_1 \eqsepv
        \bm{\theta}_2 \in \Theta_2 \eqsepv \bm{\upsilon}_1 = w_1 \eqsepv \bm{\upsilon}_2 = w_2 }
        }
        \tag{because \( \Ba{ \bm{\theta}_1 \in \Theta_1 \eqsepv
        \bm{\theta}_2 \in \Theta_2 } = \Omega \)}
      \\
      &\hspace{1cm}=
        \frac{ \PROBA      \Ba{ \Psi_1\np{\bm{\xi}_1,w_2} \in \Theta_1' \eqsepv
        \Phi_1\np{\bm{\xi}_1,w_2}=w_1 } 
        \times 
        \PROBA      \Ba{ \Psi_2\np{\bm{\xi}_2,w_1} \in \Theta_2' \eqsepv
        \Phi_2\np{\bm{\xi}_2,w_1}=w_2 } }%
        { \PROBA      \Ba{ \Psi_1\np{\bm{\xi}_1,w_2} \in \Theta_1 \eqsepv
        \Phi_1\np{\bm{\xi}_1,w_2}=w_1 } 
        \times 
        \PROBA      \Ba{ \Psi_2\np{\bm{\xi}_2,w_1} \in \Theta_2 \eqsepv
        \Phi_2\np{\bm{\xi}_2,w_1}=w_2 } }
        \intertext{by~\eqref{eq:intersection_of_two_sets_proof}, and then using the 
        assumption that the random variables~$\bm{\xi}_1$ and $\bm{\xi}_2$ are independent}
      &\hspace{1cm}=
        \frac{ \PROBA      \Ba{ \Psi_1\np{\bm{\xi}_1,w_2} \in \Theta_1' \eqsepv
        \Phi_1\np{\bm{\xi}_1,w_2}=w_1 } }%
        { \PROBA      \Ba{ \Phi_1\np{\bm{\xi}_1,w_2}=w_1 } }
        \times 
        \frac{ \PROBA      \Ba{ \Psi_2\np{\bm{\xi}_2,w_1} \in \Theta_2' \eqsepv
        \Phi_2\np{\bm{\xi}_2,w_1}=w_2 } }%
        { \PROBA    \Ba{ \Phi_2\np{\bm{\xi}_2,\bm{\upsilon}_1}=w_2 } }
               \tag{because \( \Ba{ \Psi_1\np{\bm{\xi}_1,w_2} \in \Theta_1 \eqsepv
        \Psi_2\np{\bm{\xi}_2,w_1} \in \Theta_2 } = \Omega \).}
    \end{align*}
  \end{subequations}
  Then, we focus on the first term of the product and we write
  \begin{align*}
    &
      \frac{ \PROBA      \Ba{ \Psi_1\np{\bm{\xi}_1,w_2} \in \Theta_1' \eqsepv
      \Phi_1\np{\bm{\xi}_1,w_2}=w_1 } }%
      { \PROBA      \Ba{ \Phi_1\np{\bm{\xi}_1,w_2}=w_1 } }
    \\
    &\hspace{1cm}=
      \frac{ \PROBA      \Ba{ \Psi_1\np{\bm{\xi}_1,w_2} \in \Theta_1' \eqsepv
      \Phi_1\np{\bm{\xi}_1,w_2}=w_1 } 
      \times \PROBA\Ba{ \Phi_2\np{\bm{\xi}_2,\bm{\upsilon}_1}=w_2 } }%
      { \PROBA\Ba{ \Phi_1\np{\bm{\xi}_1,w_2}=w_1 } \times \PROBA\Ba{ \Phi_2\np{\bm{\xi}_2,\bm{\upsilon}_1}=w_2 } }
    \\
    &\hspace{1cm}=
      \frac{ \PROBA      \Ba{ \Psi_1\np{\bm{\xi}_1,w_2} \in \Theta_1' \eqsepv
      \Phi_1\np{\bm{\xi}_1,w_2}=w_1 \eqsepv \Phi_2\np{\bm{\xi}_2,\bm{\upsilon}_1}=w_2 } }%
      { \PROBA\Ba{ \Phi_1\np{\bm{\xi}_1,w_2}=w_1 \eqsepv \Phi_2\np{\bm{\xi}_2,\bm{\upsilon}_1}=w_2 } }
      \intertext{because the random variables~$\bm{\xi}_1$ and $\bm{\xi}_2$ are independent}
    &\hspace{1cm}=
      \frac{ \PROBA      \Ba{ \Psi_1\np{\bm{\xi}_1,w_2} \in \Theta_1' \eqsepv
      \bm{\upsilon}_1=w_1 \eqsepv \bm{\upsilon}_2=w_2 } }%
      { \PROBA\Ba{ \bm{\upsilon}_1=w_1 \eqsepv \bm{\upsilon}_2=w_2 } }
      \tag{by the equality~\eqref{eq:unique_solution_proof}}
    \\
    &\hspace{1cm}=
      \PROBA\Bset{ \Psi_1\np{\bm{\xi}_1,w_2} \in \Theta_1' }
      { \bm{\upsilon}_1 = w_1 \eqsepv \bm{\upsilon}_2 = w_2 }
      \tag{by definition of the conditional probability}
    \\
    &\hspace{1cm}=
      \PROBA\Bset{ \bm{\theta}_1 \in \Theta_1' }
      { \bm{\upsilon}_1 = w_1 \eqsepv \bm{\upsilon}_2 = w_2 }
      \tag{by~\eqref{eq:six_random_variables_four_relations_theta_1}}
      \eqfinp
  \end{align*}
  Doing the same with the second term of the product, we get that 
  \begin{align*}
    &\PROBA\Bset{ \bm{\theta}_1 \in \Theta_1' \eqsepv \bm{\theta}_2 \in \Theta_2'}
      { \bm{\upsilon}_1 = w_1 \eqsepv \bm{\upsilon}_2 = w_2 }
    \\
    &\hspace{1cm}=
      \PROBA\Bset{ \bm{\theta}_1 \in \Theta_1' }
      { \bm{\upsilon}_1 = w_1 \eqsepv \bm{\upsilon}_2 = w_2 }
      \times
      \PROBA\Bset{ \bm{\theta}_2 \in \Theta_2' }
      { \bm{\upsilon}_1 = w_1 \eqsepv \bm{\upsilon}_2 = w_2 }
      \eqfinp  
  \end{align*}
  \smallskip

  This ends the proof.

\end{proof}

\subsubsection{Graphical discussion on Lemma~\ref{lem:independence}}
\label{Graphical_discussion}

A graphical representation of the system of random variables
described in Lemma~\ref{lem:independence} necessarily contains a cycle
between~$\bm{\upsilon}_1$ and $\bm{\upsilon}_2$, because
of~\eqref{eq:six_random_variables_four_relations_upsilon_1}--\eqref{eq:six_random_variables_four_relations_upsilon_2}.
As a consequence, classical results cannot be applied. 

  By contrast, using the reparametrization~\eqref{eq:unique_solution_mappings_inproof} of 
  Equations~\eqref{eq:six_random_variables_four_relations_upsilon_1}
  and~\eqref{eq:six_random_variables_four_relations_upsilon_2} --- giving 
  $\bm{\upsilon}_1= \widetilde\Phi_1 (\bm{\xi}_1 , \bm{\xi}_2)
                      \eqfinv$
   and
   $ \bm{\upsilon}_2 = \widetilde\Phi_2 (\bm{\xi}_1 , \bm{\xi}_2)$ ---
   we obtain a graphical representation which is free of cycle. However, this is at the cost of
                    loosing some properties of the initial
                    parametrization. Indeed,
system~\eqref{eq:six_random_variables} 
becomes
\begin{subequations}
  \begin{align}
    \bm{\theta}_1 &= \Psi_1 (\bm{\xi}_1 ,\bm{\upsilon}_2)
                    \eqfinv
    \\
    \bm{\theta}_2 &= \Psi_2 (\bm{\xi}_2 ,\bm{\upsilon}_1)
                    \eqfinv
    \\
    \bm{\upsilon}_1 &= \widetilde\Phi_1 (\bm{\xi}_1 , \bm{\xi}_2)
                      \eqfinv
    \\
    \bm{\upsilon}_2 &= \widetilde\Phi_2 (\bm{\xi}_1 , \bm{\xi}_2)
                      \eqfinv 
  \end{align}
  \label{eq:six_random_variables_after_transformation} 
\end{subequations}
and its DAG representation is now the one displayed in Figure~\ref{fig:lem-indep}. 
In Figure~\ref{fig:lem-indep}, we observe that there exists an unblocked path 
\( \bm{\theta}_1 \leftarrow \bm{\xi}_1 \to \bm{\upsilon}_1 \leftarrow \bm{\xi}_2 \to \bm{\theta}_2 \)
from $\bm{\theta}_1$ to $\bm{\theta}_2$.
As a consequence, we cannot conclude about the
conditional independence of $\bm{\theta}_1$ and $\bm{\theta}_2$ 
with respect to $(\bm{\upsilon}_1,\bm{\upsilon}_2)$.

By contrast, with Lemma~\ref{lem:independence} we reach the conclusion 
that the random variables~$\bm{\theta}_1$ and $\bm{\theta}_2$
are independent when conditioned on $(\bm{\upsilon}_1,\bm{\upsilon}_2)$. 

\begin{figure}[hbtp]
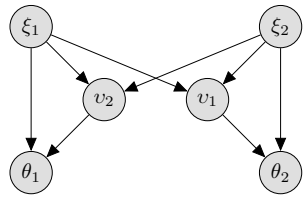

  \centering
  \resizebox{4cm}{!}{%
    \tikz{ %
      \node[obs] (X1) {$\bm{\xi}_1$} ; %
      \node[obs, below right=of X1] (W2) {$\bm{\upsilon}_2$} ; %
      \node[obs, right=of W2] (W1) {$\bm{\upsilon}_1$} ; %
      \node[obs, above right=of W1] (X2) {$\bm{\xi}_2$} ; %
      \node[obs, below right=of W1] (Y2) {$\bm{\theta}_2$} ; %
      \node[obs, below left=of W2] (Y1) {$\bm{\theta}_1$} ; %
      \edge {X1} {W2} ; %
      \edge {X1} {Y1} ; %
      \edge {X2} {W1} ; %
      \edge {X2} {Y2} ; %
      \edge {X1} {W1} ; %
      \edge {X2} {W2} ; %
      \edge {W2} {Y1} ; %
      \edge {W1} {Y2} ; %
    }
  }
  \caption{DAG representation of the
    system of equations~\eqref{eq:six_random_variables_after_transformation}}
  \label{fig:lem-indep}
  
\end{figure}

\subsection{Discrete or continuous?  It does matter}
\label{subseq:discrete-continuous}

It is notable that Lemma~\ref{lem:independence} seems to be in
contradiction with an example from~\citep{10.5555/2074158.2074214}
(recently cited in~\cite[Example~6.1]{bongers2020foundations}).

 \renewcommand{\epsilon}{R}

\begin{example}[from~\citep{10.5555/2074158.2074214}]
  Spirtes considers the following model (with $\epsilon_X$, $\epsilon_Y$,
  $\epsilon_Z$, $\epsilon_W$ being   standard independent normal random variables):
  \begin{subequations}
  \begin{align}
    X& = \epsilon_X\\
    Y& = \epsilon_Y\\
    Z&= WY +\epsilon_Z\\
    W&= Z X + \epsilon_W
  \end{align}
  \end{subequations}
  Spirtes shows that   $X$ and  $Y$ are not independent given $\np{ Z, W}$. 
  However if we set (with obvious notations related to Lemma~\ref{lem:independence})
  \begin{subequations}  
  \begin{align}
    v_1& = Z \eqfinv\\
    v_2& = W \eqfinv\\
    \xi_1&= (\epsilon_Z,Y) \eqfinv\\
    \xi_2&= (\epsilon_W,X) \eqfinv\\
    \theta_1&=\psi_1(\xi_1,v_2) = \psi_1( (\epsilon_Z,Y) ,v_2):=Y \eqfinv\\
    \theta_2&=\psi_2(\xi_2,v_1) = \psi_2( (\epsilon_W,X) ,v_1):=X \eqfinv\\
    \Phi_1(\xi_1, \bm{\upsilon}_2)& =   \Phi_1((\epsilon_Z,Y),W) := WY+\epsilon_Z \eqfinv   \\
    \Phi_2(\xi_2, \bm{\upsilon}_1)  &=\Phi_2((\epsilon_W,X),Z) := ZX+\epsilon_W
              \eqfinv
  \end{align}
    \end{subequations}
  then we see that a countable version of this example could be
  treated with Lemma~\ref{lem:independence}.
  In particular, $X$ and $Y$ are independent given $(Z,W)$, which
  is different from Spirtes's conclusion. 
\end{example}

The countable assumption in Lemma~\ref{lem:independence} seems to
draw a line between the systems described in the present paper and the
approach presented in~\citep{bongers2020foundations}.
Hence, we have an example of system for which a conditional
independence property depends on whether the codomain of $\xi_1$ and
$\xi_2$ is discrete or continuous. 
We mention that a phenomenon of the same
flavour is discussed in \cite{barbie2014topology}.

\subsection{Topological separation implies conditional independence}
\label{Topological_separation_implies_conditional_independence}

We now use the results obtained in~\S\ref{Topological_separation_implies_factorization}
and in~\S\ref{Nonrecursive_solvability_implies_conditional_independence} 
to state a general
result of conditional independence (Theorem~\ref{th:push-forward-probability-decomposition}), a corollary of which
(Theorem~\ref{th:do-calculus3rules}) constitutes a new version of Pearl's rule of do-calculus. 




\begin{theorem}
  \label{th:push-forward-probability-decomposition}
  We suppose that the assumptions of Lemma~\ref{th:decomposition} are
  satisfied.
  %
  Moreover, we suppose that the set~$\Omega$ in~\eqref{eq:states_of_Nature_product}
  is equipped with a probability~$\PROBA=\bigotimes_{\agent \in \AGENT}
  \PROBA_{\agent}$ where each $\PROBA_{\agent}$ is a probability on
  \( \np{\Omega_{\agent}, \tribu{\NatureField}_{\agent}} \).

  We define the following pushforward 
  probability~$\QQ_\wstrategy$ on~\( \np{\HISTORY,\tribu{\History}} \), in~\eqref{eq:HISTORY}, by
  %
  \begin{equation}
    \QQ_\wstrategy = \PROBA\circ \SolutionMap_\wstrategy^{-1} 
    \eqfinp 
  \end{equation}
  Then, \( \bp{\HISTORY,\tribu{\History},\QQ_\wstrategy} \) is a probability
  space, and the two projections
  \( \projection_{\TopologicalClosure{\AgentSubsetY}} : \np{\HISTORY,\tribu{\History}} 
  \to \np{\CONTROL_{\TopologicalClosure{\AgentSubsetY}},
    \tribu{\Control}_{\TopologicalClosure{\AgentSubsetY}}} \)
  and 
  \( \projection_{\TopologicalClosure{\AgentSubsetZ}} : \np{\HISTORY,\tribu{\History}} 
  \to \np{\CONTROL_{\TopologicalClosure{\AgentSubsetZ}},
    \tribu{\Control}_{\TopologicalClosure{\AgentSubsetZ}}} \) as in~\eqref{eq:projection_Bgent}
  are independent under~\( \QQ_\wstrategy \), 
  conditionally on the subset~$\HistorySubset\subset \HISTORY$ 
  and on the projection~\( \projection_{\AgentSubsetW} : \np{\HISTORY,\tribu{\History}} 
  \to \np{\CONTROL_{\AgentSubsetW},
    \tribu{\Control}_{\AgentSubsetW} } \).
\end{theorem}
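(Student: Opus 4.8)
The plan is to obtain this theorem by feeding the factorization of the solution map from Lemma~\ref{th:decomposition} into the abstract conditional‑independence result of Lemma~\ref{lem:independence}. That $\np{\HISTORY,\tribu{\History},\QQ_\wstrategy}$ is a probability space is immediate, since $\QQ_\wstrategy = \PROBA \circ \SolutionMap_\wstrategy^{-1}$ is the image of a probability under the measurable map $\SolutionMap_\wstrategy$; the content is the conditional independence. First I would apply Lemma~\ref{th:decomposition} to the standing data, which yields the splitting~\eqref{eq:five_subsets_proof}, $\AGENT = \widetilde{\AgentSubsetY} \sqcup \widetilde{\AgentSubsetZ} \sqcup \residual$ with $\AgentSubsetW = \AgentSubsetW_\AgentSubsetY \sqcup \AgentSubsetW_\AgentSubsetZ$, $\AgentSubsetW_\AgentSubsetY \subset \widetilde{\AgentSubsetY}$, $\AgentSubsetW_\AgentSubsetZ \subset \widetilde{\AgentSubsetZ}$; as shown in its proof, $\widetilde{\AgentSubsetY} = \ConditionalAncestor\np{\AgentSubsetY \cup \AgentSubsetW_\AgentSubsetY}$ and $\widetilde{\AgentSubsetZ} = \ConditionalAncestor\np{\AgentSubsetZ \cup \AgentSubsetW_\AgentSubsetZ}$ are closed for the topology of Proposition~\ref{pr:conditional_topology}, so $\TopologicalClosure{\AgentSubsetY} \subset \widetilde{\AgentSubsetY}$ and $\TopologicalClosure{\AgentSubsetZ} \subset \widetilde{\AgentSubsetZ}$. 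Lemma~\ref{th:decomposition} also provides the reduced maps~\eqref{eq:three_mappings} and the factorization~\eqref{eq:factorization}--\eqref{eq:factorization_bis}, valid on $\SolutionMap_\wstrategy^{-1}(\HistorySubset)$.

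Second, I would set up the dictionary with Lemma~\ref{lem:independence}, working on $\np{\Omega,\tribu{\NatureField},\PROBA}$ and pushing forward by $\SolutionMap_\wstrategy$ at the end. Take $\bm{\xi}_1$, $\bm{\xi}_2$ to be the coordinate projections $\omega \mapsto \omega_{\widetilde{\AgentSubsetY}}$ and $\omega \mapsto \omega_{\widetilde{\AgentSubsetZ}}$; these are independent because $\widetilde{\AgentSubsetY} \cap \widetilde{\AgentSubsetZ} = \emptyset$ and $\PROBA = \bigotimes_{\agent \in \AGENT} \PROBA_\agent$ (the $\residual$‑block noise plays no role in the joint law of interest and can be discarded). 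Take $\bm{\upsilon}_1 := \projection_{\AgentSubsetW_\AgentSubsetY} \circ \SolutionMap_\wstrategy$, $\bm{\upsilon}_2 := \projection_{\AgentSubsetW_\AgentSubsetZ} \circ \SolutionMap_\wstrategy$, $\bm{\theta}_1 := \projection_{\widetilde{\AgentSubsetY}} \circ \SolutionMap_\wstrategy$, $\bm{\theta}_2 := \projection_{\widetilde{\AgentSubsetZ}} \circ \SolutionMap_\wstrategy$, and — reading off~\eqref{eq:factorization_bis} and using $\AgentSubsetW_\AgentSubsetY \subset \widetilde{\AgentSubsetY}$, $\AgentSubsetW_\AgentSubsetZ \subset \widetilde{\AgentSubsetZ}$ — put $\Psi_1 := \partialReducedSolutionMap_{\wstrategy_{\widetilde{\AgentSubsetY}}}$, $\Psi_2 := \partialReducedSolutionMap_{\wstrategy_{\widetilde{\AgentSubsetZ}}}$, $\Phi_1 := \projection_{\AgentSubsetW_\AgentSubsetY} \circ \partialReducedSolutionMap_{\wstrategy_{\widetilde{\AgentSubsetY}}}$, $\Phi_2 := \projection_{\AgentSubsetW_\AgentSubsetZ} \circ \partialReducedSolutionMap_{\wstrategy_{\widetilde{\AgentSubsetZ}}}$. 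With these identifications the relations~\eqref{eq:six_random_variables} hold on $\SolutionMap_\wstrategy^{-1}(\HistorySubset)$; the pair $\np{\bm{\upsilon}_1,\bm{\upsilon}_2}$ takes values in the countable product $\CONTROL_{\AgentSubsetW_\AgentSubsetY} \times \CONTROL_{\AgentSubsetW_\AgentSubsetZ}$; and the uniqueness of the solution $\np{w_1,w_2}$ of $w_1 = \Phi_1\np{x_1,w_2}$, $w_2 = \Phi_2\np{x_2,w_1}$ is exactly the singleton property proved in the fourth step of the proof of Lemma~\ref{th:decomposition}, which rests on solvability (Definition~\ref{de:solvability}). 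Lemma~\ref{lem:independence} then yields that $\bm{\theta}_1$ and $\bm{\theta}_2$ are independent conditionally on $\np{\bm{\upsilon}_1,\bm{\upsilon}_2}$.

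Third, I would push this forward by $\SolutionMap_\wstrategy$: under $\QQ_\wstrategy$, the projections $\projection_{\widetilde{\AgentSubsetY}}$ and $\projection_{\widetilde{\AgentSubsetZ}}$ are conditionally independent given $\projection_{\AgentSubsetW}$ (which generates the same $\sigma$‑field as $\np{\projection_{\AgentSubsetW_\AgentSubsetY},\projection_{\AgentSubsetW_\AgentSubsetZ}}$ since $\AgentSubsetW = \AgentSubsetW_\AgentSubsetY \sqcup \AgentSubsetW_\AgentSubsetZ$). Since $\projection_{\TopologicalClosure{\AgentSubsetY}}$ is a coordinate restriction of $\projection_{\widetilde{\AgentSubsetY}}$ and $\projection_{\TopologicalClosure{\AgentSubsetZ}}$ of $\projection_{\widetilde{\AgentSubsetZ}}$, while the conditioning projection is unchanged, the standard stability of conditional independence under measurable functions of the conditioned variables gives the independence of $\projection_{\TopologicalClosure{\AgentSubsetY}}$ and $\projection_{\TopologicalClosure{\AgentSubsetZ}}$ given $\projection_{\AgentSubsetW}$. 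This already settles the case $\HistorySubset = \HISTORY$.

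I expect the main obstacle to be the conditioning on the event $\HistorySubset$. Lemma~\ref{th:decomposition} only guarantees~\eqref{eq:factorization} on $\SolutionMap_\wstrategy^{-1}(\HistorySubset)$, whereas the relations~\eqref{eq:six_random_variables} invoked by Lemma~\ref{lem:independence} should hold globally; the natural remedy is to replace $\Psi_i,\Phi_i$ by their globally defined extensions (available via Lemma~\ref{lemma:doob}) and to introduce auxiliary random variables solving~\eqref{eq:unique_solution} that agree with $\bm{\theta}_i,\bm{\upsilon}_i$ on $\SolutionMap_\wstrategy^{-1}(\HistorySubset)$, apply Lemma~\ref{lem:independence} to those, and then transport the conclusion under the further conditioning on $\SolutionMap_\wstrategy^{-1}(\HistorySubset)$. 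The delicate point is that the product‑of‑events decomposition at the heart of the proof of Lemma~\ref{lem:independence} — that each event $\ba{\bm{\theta}_1 \in \Theta_1',\bm{\theta}_2 \in \Theta_2',\bm{\upsilon}_1 = w_1,\bm{\upsilon}_2 = w_2}$ is the intersection of a $\bm{\xi}_1$‑measurable event with a $\bm{\xi}_2$‑measurable event — need not survive intersection with $\SolutionMap_\wstrategy^{-1}(\HistorySubset)$, since that set may entangle $\bm{\xi}_1$ and $\bm{\xi}_2$ through the residual block~$\residual$. Reconciling the conditioning on $\HistorySubset$ with the product structure — exploiting the structure of $\HistorySubset$ available in the intended applications, or enlarging the conditioning $\sigma$‑field so that $\HistorySubset$ becomes adapted — is where the substance lies; once that is in place, the algebraic matching with Lemma~\ref{lem:independence} described above is routine.
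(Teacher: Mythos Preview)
Your overall strategy---instantiate Lemma~\ref{lem:independence} with the data produced by Lemma~\ref{th:decomposition} and then push forward by~$\SolutionMap_\wstrategy$---is exactly the paper's. The dictionary you build is essentially the paper's as well (the paper takes $\bm{\theta}_1 = \projection_{\AgentSubsetY\cup\AgentSubsetY'}\circ\SolutionMap_\wstrategy$ rather than $\projection_{\widetilde{\AgentSubsetY}}\circ\SolutionMap_\wstrategy$, but since $\widetilde{\AgentSubsetY} = (\AgentSubsetY\cup\AgentSubsetY')\sqcup\AgentSubsetW_\AgentSubsetY$ and $\AgentSubsetW_\AgentSubsetY$ sits inside the conditioning, this is immaterial).

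Where you part ways is in handling the conditioning on~$\HistorySubset$, and here the paper's move is simpler than the remedies you sketch: it takes as ambient probability space the restriction $\tilde{\Omega} := \SolutionMap_\wstrategy^{-1}(\HistorySubset)$ equipped with the renormalized measure $\tilde{\PROBA} := \PROBA/\PROBA(\tilde{\Omega})$ (the case $\PROBA(\tilde{\Omega})=0$ being vacuous). On $\tilde{\Omega}$ the factorization~\eqref{eq:factorization_bis} holds \emph{everywhere}, so the four relations~\eqref{eq:six_random_variables} are global and Lemma~\ref{lem:independence} applies verbatim; conditional independence under~$\tilde{\PROBA}$ is then, by construction, conditional independence under~$\QQ_\wstrategy$ given~$\HistorySubset$ and~$\projection_{\AgentSubsetW}$. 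This dissolves the bookkeeping problem you anticipated about relations holding only on a subset.

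Your substantive worry---that conditioning on $\SolutionMap_\wstrategy^{-1}(\HistorySubset)$ may entangle $\bm{\xi}_1=\omega_{\widetilde{\AgentSubsetY}}$ and $\bm{\xi}_2=\omega_{\widetilde{\AgentSubsetZ}}$, so that the independence hypothesis of Lemma~\ref{lem:independence} could fail under~$\tilde{\PROBA}$---lands precisely on the one step the paper treats tersely: it writes ``because of the product structure, the random variables $\bm{\xi}_1$ and $\bm{\xi}_2$ are independent with respect to~$\tilde{\PROBA}$'' and moves on. So the paper offers no additional mechanism beyond restrict-and-renormalize; the obstacle you isolate is genuine for general~$\HistorySubset$, but the paper does not expend further argument on it.
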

Theorem~\ref{th:push-forward-probability-decomposition} claims that, for any values
$\control_{\AgentSubsetY} \in\CONTROL_{\TopologicalClosure{\AgentSubsetY}}$,
$\control_{\AgentSubsetZ} \in \CONTROL_{\TopologicalClosure{\AgentSubsetZ}}$
and $\control_{\AgentSubsetW} \in \CONTROL_{\AgentSubsetW}$,
we have that 
\begin{align}
  \QQ_\wstrategy
  & \Bsetp{
    \projection_{{\TopologicalClosure{\AgentSubsetY}}}\np{\history}
    = \control_{\AgentSubsetY} \eqsepv
    \projection_{\TopologicalClosure{\AgentSubsetZ}}\np{\history}
    = \control_{\AgentSubsetZ}
    }{
    \history\in\History \eqsepv
    \projection_{\AgentSubsetW}\np{\history} =\control_{\AgentSubsetW} }
    \nonumber \\
  &= \QQ_\wstrategy
    \bsetp{
    \projection_{{\TopologicalClosure{\AgentSubsetY}}}\np{\history}
    = \control_{\AgentSubsetY} 
    }{
    \history\in\History \eqsepv
    \projection_{\AgentSubsetW}\np{\history} =\control_{\AgentSubsetW} }
    \nonumber \\
  &\phantom{==}  \times 
    \QQ_\wstrategy\bp{
    \projection_{\TopologicalClosure{\AgentSubsetZ}}\np{\history}
    = \control_{\AgentSubsetZ}
    \mid
    \history\in\History \eqsepv
    \projection_{\AgentSubsetW}\np{\history} =\control_{\AgentSubsetW} }
    \eqfinp     
    \nonumber     
\end{align}  

\begin{proof}
  If \( \PROBA\bp{\SolutionMap_{\wstrategy}^{-1}(\HistorySubset)}=0 \),
  conditional independence is trivial (and meaningless!).
  We suppose that \( \PROBA\bp{\SolutionMap_{\wstrategy}^{-1}(\HistorySubset)}>0 \)
  and we instantiate Lemma~\ref{lem:independence} with
  \begin{itemize}
  \item 
    probability space  \(\tilde{\Omega} =
    \SolutionMap_{\wstrategy}^{-1}(\HistorySubset) \) with renormalized probability 
    \( \tilde{\PROBA}=\PROBA/\PROBA\bp{\SolutionMap_{\wstrategy}^{-1}(\HistorySubset)} \),
  \item 
    six measurable spaces 
    $\Xi_1=\Omega_{\widetilde{Y}}$, $\Xi_2=\Omega_{\widetilde{Z}}$, 
    $\Upsilon_1=\CONTROL_{\AgentSubsetW_\AgentSubsetY}$, $\Upsilon_2=\CONTROL_{\AgentSubsetW_\AgentSubsetZ}$, 
    $\Theta_1=\CONTROL_{\AgentSubsetY\cup \AgentSubsetY'}$, $\Theta_2=\CONTROL_{\AgentSubsetZ\cup \AgentSubsetZ'}$,
  \item 
    four measurable mappings 
    \( \Psi_1 =\projection_{\AgentSubsetY\cup \AgentSubsetY'} \circ\partialReducedSolutionMap_{\policy_{\AgentSubsetY}} \),
    \( \Psi_2 =\projection_{\AgentSubsetZ\cup \AgentSubsetZ'}\circ\partialReducedSolutionMap_{\policy_{\AgentSubsetZ}} \),
    \( \Phi_1 =\projection_{\AgentSubsetW_\AgentSubsetY} \circ\partialReducedSolutionMap_{\policy_{\AgentSubsetY}} \),
    \( \Phi_2=\projection_{\AgentSubsetW_\AgentSubsetZ}\circ \partialReducedSolutionMap_{\policy_{\AgentSubsetZ}} \),
  \item 
    six random variables 
    \(\bm{\xi}_1\np{\omega}  = \omega_{\widetilde{Y}} \),  
    \(\bm{\xi}_2\np{\omega}  = \omega_{\widetilde{Z}} \), 
    for all \( \omega \in \tilde{\Omega} \), 
    and 
    \( \bm{\theta}_1 = \projection_{\AgentSubsetY\cup \AgentSubsetY'}
    \circ\SolutionMap_{\policy}\),
    \( \bm{\theta}_2 = \projection_{\AgentSubsetZ\cup \AgentSubsetZ'}
    \circ\SolutionMap_{\policy}\),
    \( \bm{\upsilon}_1 = \projection_{\AgentSubsetW_\AgentSubsetY}
    \circ\SolutionMap_{\policy}\),
    \( \bm{\upsilon}_2 = \projection_{\AgentSubsetW_\AgentSubsetZ}
    \circ\SolutionMap_{\policy}\)
    on \( \tilde{\Omega} \).
  \end{itemize}
  By assumption, the set~$\Omega$ in~\eqref{eq:states_of_Nature_product}
  is equipped with a probability~$\PROBA=\bigotimes_{\agent \in \AGENT}
  \PROBA_{\agent}$ where each $\PROBA_{\agent}$ is a probability on
  \( \np{\Omega_{\agent}, \tribu{\NatureField}_{\agent}} \).
  Because of the product structure, the random variables 
  \( \bm{\xi}_1 \) and \( \bm{\xi}_2 \) are independent with respect to~$\tilde{\PROBA}$.

  As the assumptions of Lemma~\ref{th:decomposition} are satisfied,
  Equation~\eqref{eq:factorization_bis} holds true, that is, 
  we have that 
  \begin{equation*}
    \begin{split}
      \ReducedSolutionMap_\wstrategy\np{\omega} = \bgp{
        \partialReducedSolutionMap_{\wstrategy_{\widetilde{\AgentSubsetY}}}
        \Bp{ \omega_{\widetilde{\AgentSubsetY}}, 
          \projection_{\AgentSubsetW_\AgentSubsetZ}\bp{\SolutionMap_{\wstrategy}\np{\omega}} },
        \partialReducedSolutionMap_{\wstrategy_{\widetilde{\AgentSubsetZ}}}
        \Bp{ \omega_{\widetilde{\AgentSubsetZ}}, 
          \projection_{\AgentSubsetW_\AgentSubsetY}\bp{\SolutionMap_{\wstrategy}\np{\omega}} },
        \partialReducedSolutionMap_{\wstrategy_{\residual}}
        \Bp{ \omega_{\residual}, 
          \projection_{\widetilde{\AgentSubsetY}\cup\widetilde{\AgentSubsetZ}}\bp{\SolutionMap_{\wstrategy}\np{\omega}} }
      } 
      \eqsepv
      \\ \forall \omega \in \SolutionMap_{\wstrategy}^{-1}(\HistorySubset) 
      \eqfinp 
    \end{split}
  \end{equation*}
  Thus, the assumptions of Lemma~\ref{lem:independence} are satisfied,
  and we conclude that the random variables 
  $\bm{\theta}_1$ and $\bm{\theta}_2$ are independent under the probability~$\tilde{\PROBA}$,
  when conditioned on $(\bm{\upsilon}_1, \bm{\upsilon}_2)$.

  In other words, we have obtained that 
  $\projection_{\AgentSubsetY\cup
    \AgentSubsetY'}\circ\SolutionMap_{\policy}= \projection_{\bar{
      \AgentSubsetY}}\circ\SolutionMap_{\policy}$ and
  $\projection_{\AgentSubsetZ\cup
    \AgentSubsetZ'}\circ\SolutionMap_{\policy}=\projection_{\bar{
      \AgentSubsetZ}}\circ\SolutionMap_{\policy}$ 
  are independent random variables, when conditioned on
  \(\projection_{\AgentSubsetW_\AgentSubsetY}
  \circ\SolutionMap_{\policy}\)
  and
  \(\projection_{\AgentSubsetW_\AgentSubsetZ}
  \circ\SolutionMap_{\policy}\) 
  under the probability~$\tilde{\PROBA}$.
  We deduce that 
  $ \projection_{\bar{
      \AgentSubsetY}}$ and
  $\projection_{\bar{
      \AgentSubsetZ}}$ are independent
  when conditioned on
  \(\projection_{\AgentSubsetW_\AgentSubsetY}\)
  and
  \(\projection_{\AgentSubsetW_\AgentSubsetZ}\)
  under the probability~\( \QQ_\wstrategy = \PROBA\circ \SolutionMap_\wstrategy^{-1} \).

  This ends the proof.
\end{proof}

\subsection{Topological separation implies the do-calculus}
\label{Topological_separation_implies_the_do-calculus}

Next we deduce from Theorem~\ref{th:push-forward-probability-decomposition}
a variant of Pearl's do-calculus. 


\begin{theorem}[Do-calculus in W-models]
  \label{th:do-calculus3rules}
  Under the assumptions of Theorem~\ref{th:push-forward-probability-decomposition},
  the projection~\( \projection_{\AgentSubsetY} \) 
  has the same conditional distribution under~\( \QQ_\wstrategy \), 
  whether the conditioning is \wrt\
  the subset~$\HistorySubset\subset \HISTORY$,
  the projection~\( \projection_{\AgentSubsetW} \)
  and the projection~\( \projection_{\TopologicalClosure{\AgentSubsetZ}} \),
  or is only \wrt\
  the subset~$\HistorySubset\subset \HISTORY$ and 
  the projection~\( \projection_{\AgentSubsetW} \).
\end{theorem}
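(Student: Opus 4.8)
The plan is to obtain Theorem~\ref{th:do-calculus3rules} as an immediate corollary of Theorem~\ref{th:push-forward-probability-decomposition}. That theorem already delivers the conditional independence, under~\( \QQ_\wstrategy \), of the two projections \( \projection_{\TopologicalClosure{\AgentSubsetY}} \) and \( \projection_{\TopologicalClosure{\AgentSubsetZ}} \), conditionally on the subset~$\HistorySubset$ and on the projection~\( \projection_{\AgentSubsetW} \). Since the asserted statement is, by definition, nothing but such a conditional independence statement (equality of a conditional distribution under two nested conditionings), restricted to a measurable image of one of the conditionally independent blocks, essentially nothing remains to do but a marginalization.

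Concretely, the first step is to spell out the conclusion of Theorem~\ref{th:push-forward-probability-decomposition} in its conditional-probability form: for every \( \control_{\AgentSubsetY}\in\CONTROL_{\TopologicalClosure{\AgentSubsetY}} \), every \( \control_{\AgentSubsetZ}\in\CONTROL_{\TopologicalClosure{\AgentSubsetZ}} \) and every \( \control_{\AgentSubsetW}\in\CONTROL_{\AgentSubsetW} \),
\[
  \QQ_\wstrategy\bsetp{ \projection_{\TopologicalClosure{\AgentSubsetY}}\np{\history}= \control_{\AgentSubsetY} }
  { \history\in\History \eqsepv \projection_{\AgentSubsetW}\np{\history}=\control_{\AgentSubsetW}
    \eqsepv \projection_{\TopologicalClosure{\AgentSubsetZ}}\np{\history}=\control_{\AgentSubsetZ} }
  =
  \QQ_\wstrategy\bsetp{ \projection_{\TopologicalClosure{\AgentSubsetY}}\np{\history}= \control_{\AgentSubsetY} }
  { \history\in\History \eqsepv \projection_{\AgentSubsetW}\np{\history}=\control_{\AgentSubsetW} }
  \eqfinv
\]
with the convention (as in the proof of Lemma~\ref{lem:independence} and of Theorem~\ref{th:push-forward-probability-decomposition}) that both sides are set to zero whenever the conditioning event has zero \( \QQ_\wstrategy \)-probability, and that the statement is vacuous when \( \PROBA\bp{\SolutionMap_{\wstrategy}^{-1}(\HistorySubset)}=0 \).

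The second step uses the inclusion \( \AgentSubsetY \subset \TopologicalClosure{\AgentSubsetY} \), which holds because \( \ConditionalAncestor = \TransitiveClosure{\PrecedenceWH}\cup\Delta \) is reflexive (Proposition~\ref{pr:conditional_topology}). It entails that \( \projection_{\AgentSubsetY} \) factors as \( \projection_{\AgentSubsetY}= \varrho\circ \projection_{\TopologicalClosure{\AgentSubsetY}} \), where \( \varrho \) is the canonical projection of \( \CONTROL_{\TopologicalClosure{\AgentSubsetY}} \) onto \( \CONTROL_{\AgentSubsetY} \). Summing the displayed equality over all \( \control_{\AgentSubsetY}\in\CONTROL_{\TopologicalClosure{\AgentSubsetY}} \) whose \( \AgentSubsetY \)-coordinate equals a fixed \( \control\in\CONTROL_{\AgentSubsetY} \) — equivalently, pushing the conditional law of \( \projection_{\TopologicalClosure{\AgentSubsetY}} \) forward through \( \varrho \) — one gets that the conditional distribution of \( \projection_{\AgentSubsetY} \) under~\( \QQ_\wstrategy \) given \( \np{\HistorySubset,\projection_{\AgentSubsetW},\projection_{\TopologicalClosure{\AgentSubsetZ}}} \) coincides with its conditional distribution given \( \np{\HistorySubset,\projection_{\AgentSubsetW}} \), which is precisely the claim. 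The restatement displayed after Theorem~\ref{th:do-calculus3rule-simple} then follows by noting that \( \projection_{\TopologicalClosure{\AgentSubsetZ}} \) determines \( U_{\TopologicalClosure{\AgentSubsetZ}} \).

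I do not expect any genuine obstacle: the countability assumption inherited from Theorem~\ref{th:push-forward-probability-decomposition} makes all conditional distributions elementary, so the marginalization is legitimate with no measurability concern, and the only care needed is the bookkeeping of null conditioning events, already handled in the upstream proofs. All the substantive content of do-calculus is carried by Lemma~\ref{th:decomposition} (factorization of the solution map through the topological closures), Lemma~\ref{lem:independence} (the cycle-handling independence lemma) and Theorem~\ref{th:push-forward-probability-decomposition}; the present statement is merely their packaging as a single inference rule replacing Pearl's three.
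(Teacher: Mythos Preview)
Your proposal is correct and follows essentially the same route as the paper: invoke Theorem~\ref{th:push-forward-probability-decomposition} to obtain the conditional independence of \(\projection_{\TopologicalClosure{\AgentSubsetY}}\) and \(\projection_{\TopologicalClosure{\AgentSubsetZ}}\) given \((\HistorySubset,\projection_{\AgentSubsetW})\), then use \(\AgentSubsetY\subset\TopologicalClosure{\AgentSubsetY}\) and the standard fact that conditional independence lets one drop \(\projection_{\TopologicalClosure{\AgentSubsetZ}}\) from the conditioning. The paper compresses your explicit marginalization into a one-line citation of \cite[Proposition~2.4~(c)]{vanPutten-vanSchuppen:1985}, but the argument is the same.
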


\begin{proof}
  By
  Theorem~\ref{th:push-forward-probability-decomposition}, when 
  conditioning with respect to $\History$ and \( \projection_{\AgentSubsetW} \) , \( \projection_{\AgentSubsetY} \) and \(
  \projection_{\TopologicalClosure{\AgentSubsetZ}} \) are
  independent. This implies
  (see for example
  \cite[Proposition~2.4~(c)]{vanPutten-vanSchuppen:1985}) in
  particular that  \(\projection_{\TopologicalClosure{\AgentSubsetZ}} \)
  can be removed from the conditioning above mentioned.
\end{proof}

We have proved, loosely speaking, that
\begin{equation}
  \AgentSubsetY \ConditionalTopologicalSeparation \AgentSubsetZ \mid \np{\AgentSubsetW,\HistorySubset} \implies
  \QQ_{\policy}(\history_\AgentSubsetY
  |\history_\AgentSubsetW,\history_{\TopologicalClosure{\AgentSubsetZ}},\HistorySubset)
  = \QQ_{\policy}(\history_\AgentSubsetY |\history_\AgentSubsetW,\HistorySubset) 
  \eqfinp 
\end{equation}
In particular
\begin{equation}
  \label{eq:do-calculus}
  \AgentSubsetY \ConditionalTopologicalSeparation \AgentSubsetZ \mid \np{\AgentSubsetW,\HistorySubset} \implies
  \QQ_{\policy}(\history_\AgentSubsetY
  |\history_\AgentSubsetW,\history_{\AgentSubsetZ},\HistorySubset)
  = \QQ_{\policy}(\history_\AgentSubsetY |\history_\AgentSubsetW,\HistorySubset) 
  \eqfinp 
\end{equation}
We stress the conciseness of Theorem~\ref{th:do-calculus3rules} ---
permitted by the notions  introduced in this paper ---
as we now show that it implies the three rules of Pearl,
as well as the following two recent results.
As already mentioned in Example~\ref{example:tikka},
the authors in~\citep{tikka2019identifying} manage to summarize the
three rules of do-calculus thanks to the notion of context specific
independence.
They rely on so-called \emph{labeled DAG} that can be turned into a context specific DAG by removing
the arcs that are desactivated (spurious) in the context of interest.
In the formalism that we propose, such context is represented by a subset of~$\HISTORY$.
Indeed, if we denote by $\History\in\HISTORY$ the context for which an arc~\(
\np{\agent,\bgent} \) is deactivated (in the language of
\citep{tikka2019identifying}),
we represent this by the following two properties: 
\( \agent \not\in\Precedence_{\emptyset,\History}\bgent \),
\( \agent \in\Precedence_{\emptyset,\History^c}\bgent \).
%
Such a property can be also be encoded in the information set of agent~$\bgent$.
As a consequence, there is a mapping from the model introduced in
\citep{tikka2019identifying} to W-models.

To introduce the next result, 
we will allow some abuse of notations to make our notations as close as possible to the
literature we are comparing with.   
We will use, for $\Bgent\subset \AGENT$ and $\control_\Bgent\in
\CONTROL_{\Bgent}$, the notation
\( [\history_{\Bgent} = \control_{\Bgent}] =
\nset{\history\in\HISTORY}{\history_{\Bgent} = \control_\Bgent} \).
%
Then, Rule~1 in \cite{tikka2019identifying}  rewrites, in our setting, as 
\begin{equation}
  \label{rule1Tikka}
  Y \ConditionalTopologicalSeparation_{(X,\history_{\tilde{X}} = \control_{\tilde{X}})} Z \implies
  \QQ\left(\history_Y | \history_Z, h_{X},\history_{\tilde{X}} =
    \control_{\tilde{X}}\right)=
  \QQ \left( \history_Y |  \history_{X},\history_{\tilde{X}} =  \control_{\tilde{X}}\right) 
\end{equation}
where $X,\tilde{X}\subset \AGENT$ and for a given value~$\control_{\tilde{x}}$.

\begin{proposition}
  \label{prop:subsuming}
  Rule~1 from \citep{tikka2019identifying} can be deduced from Theorem
  \ref{th:do-calculus3rules}.
  In particular, Theorem~\ref{th:do-calculus3rules}  subsumes Pearl's  do-calculus
  from \cite{pearl1995causal}.
\end{proposition}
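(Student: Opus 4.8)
The plan is to establish Rule~1 of~\citep{tikka2019identifying} as a direct specialization of Theorem~\ref{th:do-calculus3rules}, and then to observe that Pearl's three rules follow in turn, since the equivalence of topological separation and d-separation (proved in~\citep{De-Lara-Chancelier-Heymann-2021}) lets us translate the graphical hypotheses of the classical do-calculus into topological separation statements in a suitable W-model.

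First I would set up the dictionary between the objects in~\citep{tikka2019identifying} and our W-model formalism. Given a labeled DAG together with a context assignment $\history_{\tilde{X}} = \control_{\tilde{X}}$ to a subset $\tilde{X}\subset\AGENT$, I build a W-model by taking the agents to be the nodes, the noise fields $\tribu{\NatureField}_\agent$ from the exogenous variables, and --- crucially --- encoding the context-specific deactivation of arcs into the information fields $\tribu{\Information}_\bgent$, exactly as described in Example~\ref{example:tikka}: an arc $(\agent,\bgent)$ deactivated on $\History = [\history_{\tilde{X}} = \control_{\tilde{X}}]$ is encoded so that $\agent\notin\Precedence_{\emptyset,\History}\bgent$ while $\agent\in\Precedence_{\emptyset,\Complementary{\History}}\bgent$. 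Each information field then satisfies the local-noise form~\eqref{eq:agents-local-noises} required by Lemma~\ref{th:decomposition}, and the model built from a DAG is recursive, hence causal, hence solvable (Figure~\ref{convergence1}), so the assumptions of Theorem~\ref{th:push-forward-probability-decomposition} hold. Under this dictionary, the conditional independence $Y \ConditionalTopologicalSeparation_{(X,\history_{\tilde{X}} = \control_{\tilde{X}})} Z$ in the notation of~\eqref{rule1Tikka} is precisely $\AgentSubsetY \ConditionalTopologicalSeparation \AgentSubsetZ \mid \np{\AgentSubsetW,\HistorySubset}$ with $\AgentSubsetW = X$ and $\HistorySubset = [\history_{\tilde{X}} = \control_{\tilde{X}}]$.

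Next I would invoke Theorem~\ref{th:do-calculus3rules} (equivalently, display~\eqref{eq:do-calculus}) for this choice of $\AgentSubsetY, \AgentSubsetZ, \AgentSubsetW, \HistorySubset$: it yields $\QQ_{\policy}(\history_Y \mid \history_X, \history_Z, \HistorySubset) = \QQ_{\policy}(\history_Y \mid \history_X, \HistorySubset)$, which, after unfolding the abbreviation $\HistorySubset = [\history_{\tilde{X}} = \control_{\tilde{X}}]$, is exactly~\eqref{rule1Tikka}. For the second assertion --- that Pearl's do-calculus is subsumed --- I would recall that the three rules of Pearl correspond to d-separation statements in, respectively, the original DAG and its mutilated variants (arcs into intervened nodes removed, arcs out of observed nodes removed), and that intervention on a node is encoded here as a change of information field (Table~\ref{tab:Pearl_Witsenhausen}, and the Intervention-variables construction). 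Each such mutilated-graph d-separation hypothesis translates, by the equivalence proved in~\citep{De-Lara-Chancelier-Heymann-2021}, into a topological separation $\AgentSubsetY \ConditionalTopologicalSeparation \AgentSubsetZ \mid \np{\AgentSubsetW,\HistorySubset}$ in the corresponding (possibly intervention-enriched) W-model, and then Theorem~\ref{th:do-calculus3rules} delivers the corresponding equality of conditional distributions; so all three rules are instances of the single implication~\eqref{eq:do-calculus}.

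The main obstacle, in my view, is not the logical core --- which is a short application of Theorem~\ref{th:do-calculus3rules} --- but the careful bookkeeping of the translation: one must check that the W-model obtained from a (labeled) DAG genuinely satisfies~\eqref{eq:agents-local-noises}, that it is solvable, that the context subset $\HistorySubset$ correctly reproduces the arc-deactivation semantics of~\citep{tikka2019identifying} at the level of the conditional parental relation $\PrecedenceWH$ and hence of the closure $\TopologicalClosure{\cdot}$, and --- for the Pearl part --- that intervention-as-field-change reproduces graph mutilation at the level of topological closures. I would therefore spend most of the write-up verifying these translations, then close with a one-line appeal to Theorem~\ref{th:do-calculus3rules}; since the present excerpt does not include the d-separation/topological-separation equivalence in detail, I would cite~\citep{De-Lara-Chancelier-Heymann-2021} for it rather than reprove it here.
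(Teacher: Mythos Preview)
Your derivation of Rule~1 from~\citep{tikka2019identifying} is essentially the paper's: specialize $\AgentSubsetW = X$ and $\HistorySubset = [\history_{\tilde{X}} = \control_{\tilde{X}}]$ in~\eqref{eq:do-calculus} and read off~\eqref{rule1Tikka}. The paper's proof is terser and omits the dictionary and well-posedness bookkeeping you sketch, but the logical step is identical.

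Where you diverge is the second assertion. You propose to recover Pearl's three rules individually, by translating each mutilated-graph d-separation hypothesis into a topological separation statement via the equivalence of~\citep{De-Lara-Chancelier-Heymann-2021} and the intervention-as-field-change encoding, and then invoking Theorem~\ref{th:do-calculus3rules} three times. The paper instead notes that this work is already done in~\citep{tikka2019identifying}: the proof of Theorem~2 there shows that their Rule~1 (the single context-specific rule) already implies Pearl's three rules. So once~\eqref{rule1Tikka} is established, Pearl's do-calculus follows by a one-line citation, with no further translation or graph-mutilation bookkeeping required. Your route is not wrong, but it re-derives something the cited literature already provides; the paper's route is shorter and avoids the verification burden you yourself flag as the main obstacle.
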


\begin{proof}
  If we set $W = X$ and $\HistorySubset =\{\history\in H; \history_{\tilde{X}} =
  \control_{\tilde{X}}\} $ in Equation~\eqref{eq:do-calculus}
  (obtained with  Theorem \ref{th:do-calculus3rules})
  we obtain~\eqref{rule1Tikka} which is Rule~1 from \cite{tikka2019identifying}.
 The proof of  Theorem 2 from \cite{tikka2019identifying} states that this rule
implies in particular the rules of Pearl's do-calculus.
  \smallskip
\end{proof}

\section{Discussion}

In this paper, we simplify and generalize the do-calculus
by leveraging the concept of information field,
using Witsenhausen's intrinsic model. 
The do-calculus is reduced to one rule. 
We underline that the results are consequences of the information structure, but
have nothing to do with the probability.
For most cases, one only needs to understand the notion of inverse
image  to work with information fields on top of SCMs and DAGs.
In exchange, information fields provide a compact, unifying and
versatile language that brings new intuitions on the causal structure
of the problem.

For instance, we have illustrated why the notion of topological separation is practical: once the splitting of the
conditioning variables  known, checking that  
an intersection  is empty  is easier than checking  a blocking condition on a collection of paths.
We prove in~\citep{De-Lara-Chancelier-Heymann-2021} that the topological separation is equivalent to the d-separation on DAGs.

The Information Dependency Model is a good candidate to bring
uniformity and consistency in lieu of \emph{ad hoc} frameworks.
It can be a temporary detour to introduce
new notions, for instance the Definition~\ref{de:conditional_precedence_relation} of conditional precedence
would have been harder to express with the SCM as primitive. 

In addition, we have presented and solved  an  example that cannot be handled easily with the
current state of the literature.  

Last, we mention that the notion of well-posedness we use was
introduced in~\citep{Witsenhausen:1975} half a century ago for another
field of applied mathematics. It is
interesting to observe that this notion could serve a new purpose in
the field of causal inference. 

Further work includes drawing connections with other research programs,
such as questions related to identification
causal structure
\citep{shpitser2006identification,shpitser2008complete,tikka2019causal}
or extensions of do-calculus~\citep{correa2020a}. 
As argued in
Sect.~\ref{subseq:discrete-continuous}, there is a fundamental
difference between the discrete and continuous case that calls for
different tooling; in this regard, it would be
interesting to study  the connections of this work
with~\cite{bongers2020foundations,pmlr-v115-forre20a}.

\section*{Aknowledgements}

We thank the organizers of
the Causal Discovery and Causality-Inspired Machine Learning Workshop at Neural
Information Processing Systems,
where we could present this work on 11 December 2020.
We thank Sridhar Mahadevan for interesting exchanges. 


\newcommand{\noopsort}[1]{} \ifx\undefined\allcaps\def\allcaps#1{#1}\fi

\end{document}